\newcommand{\p}{\mathbb{P}}
\newcommand{\dvr}{\widehat{\textnormal{DVR}}}
\newcommand{\dc}{\textnormal{D\v{C}}}
\newcommand{\e}{\mathbb{E}}
\newcommand{\g}{\tilde{g}}
\newcommand{\Mdim}{n}
\newcommand{\K}{\mathcal{K}}
\newcommand{\C}{\text{\v{C}}}
\newcommand{\diam}{\text{diam}}
\newcommand{\Hom}{\text{Hom}}
\newcommand{\dgm}{\text{dgm}}
\journalname{Foundations of Computational Mathematics}
\begin{document}

\title{A Family of Density-Scaled Filtered Complexes\thanks{The author acknowledges support from the Eugene V. Cota--Robles fellowship.}
}

\titlerunning{Density-Scaled Filtered Complexes}        

\author{Abigail Hickok
}


\institute{A. Hickok \at
              Department of Mathematics, University of California, Los Angeles, CA, USA \\
              \email{ahickok@math.ucla.edu}           
}


\maketitle

\begin{abstract}
We develop novel methods for using persistent homology to infer the homology of an unknown Riemannian manifold $(M, g)$ from a point cloud sampled from an arbitrary smooth probability density function. Standard distance-based filtered complexes, such as the \v{C}ech complex, often have trouble distinguishing noise from features that are simply small. We address this problem by defining a family of ``density-scaled filtered complexes'' that includes a density-scaled \v{C}ech complex and a density-scaled Vietoris--Rips complex. We show that the density-scaled \v{C}ech complex is homotopy-equivalent to $M$ for filtration values in an interval whose starting point converges to $0$ in probability as the number of points $N \to \infty$ and whose ending point approaches infinity as $N \to \infty$. By contrast, the standard \v{C}ech complex may only be homotopy-equivalent to $M$ for a very small range of filtration values. The density-scaled filtered complexes also have the property that they are invariant under conformal transformations, such as scaling. We implement a filtered complex $\dvr$ that approximates the density-scaled Vietoris--Rips complex, and we empirically test the performance of our implementation. As examples, we use $\dvr$ to identify clusters that have different densities, and we apply $\dvr$ to a time-delay embedding of the Lorenz dynamical system. Our implementation is stable (under conditions that are almost surely satisfied) and designed to handle outliers in the point cloud that do not lie on $M$.

\keywords{Topological data analysis \and persistent homology \and random topology \and Riemannian manifolds}

\subclass{55N31 \and 60B99 \and 53Z50}
\end{abstract}


\section{Introduction}
Data in Euclidean space often lie on (or near) a lower-dimensional submanifold $M$. For example, images with many pixels are high-dimensional, but image libraries are often locally parameterized by many fewer dimensions \cite{isomap}. In chemistry, the conformation space of a molecule may be a manifold or a union of manifolds \cite{cyclo}. In \emph{topological data analysis} (TDA), one considers the following question: given a finite sample of points (a \emph{point cloud}) that lies on or near $M$, what can one infer about the topology (i.e., ``global structure'') of $M$? TDA has been used to study the global structure of data sets in a variety of fields (see, e.g., \cite{cortical, topaz, materials}). Researchers have also made significant progress towards using the geometric properties of the manifold for dimensionality reduction and data visualization \cite{isomap, laplacian_eigmap, hessian_eigenmap, LLE}.

We focus on inferring the \emph{homology} of $M$. Homology is a quantitative way of characterizing the topology of $M$. For example, the rank of the $0$-dimensional homology $H_0(M)$ is the number of connected components, and the rank of $H_k(M)$ for $k \geq 1$ is the number of $k$-dimensional holes in $M$. If $M$ is compact and orientable, then the dimension of $M$ is equal to the largest $n$ such that $H_n(M)$ is nontrivial. For example, if $M$ is the $2$-torus $\mathbb{S}^1 \times \mathbb{S}^1$, then there is one connected component, there are two $1$-dimensional holes, and there is one $2$-dimensional hole. Although homology does not uniquely identify a manifold, it provides useful information about a manifold's global structure, and the homology of a manifold can be used to distinguish it from other manifolds that have different homology.

Methods from \emph{persistent homology} (PH) can be used to infer the homology of $M$ from a point cloud $X = \{x_i\}_{i=1}^N$ that is sampled from $M$. To approximate the manifold, we construct a \emph{filtered complex}, a combinatorial description of a topological space (see Definition \ref{def:FSC}). One of the classical approaches to building a filtered complex is the \emph{\v{C}ech complex} $\C(X)$. At each point $x \in X$, one places a ball of radius $r > 0$, where $r$ is the \emph{filtration level}. A $k$-simplex with vertices $\{x_{i_j}\}$ is added to $\C(X)_r$ if the intersection $\bigcap_j B(x_{i_j}, r)$ is nonempty. The Nerve Theorem guarantees that $\C(X)_r$ is homotopy-equivalent to $\bigcup_i B(x_i, r)$. The PH of $\C(X)$, which we denote by $H(\C(X))$, records how the homology of $\C(X)_r$ changes as $r$ increases. As $r$ grows, new homology classes (which represent $k$-dimensional holes) are ``born'' and old homology classes ``die''.

Conventional wisdom holds that the homology classes with the longest lifetimes are true topological features of $M$ and that the homology classes with the shortest lifetimes are noise. However, one can easily observe that this is not always true, even in simple examples such as Figure \ref{fig:twocircles_cech}, in which the point cloud is sampled from the disjoint union of two circles of different sizes. The smaller circle represents a homology class that has a much shorter lifetime than the homology class for the larger circle, but both homology classes are true topological features. We visualize this in Figure \ref{fig:twocircles_cech}, in which the balls of the \v{C}ech complex fill in the smaller circle much earlier than they fill in the larger circle.
\begin{figure}
    \centering
    \subfloat[$r =0$]{\includegraphics[width = .33\textwidth]{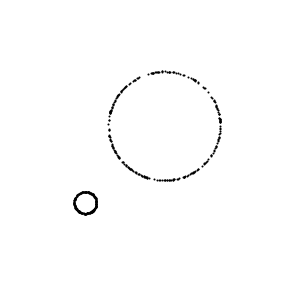}\label{fig:twocircles}}
    \subfloat[$r =1$]{\includegraphics[width = .33\textwidth]{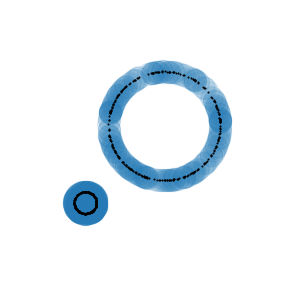}}
    \subfloat[$r = 2$]{\includegraphics[width = .33\textwidth]{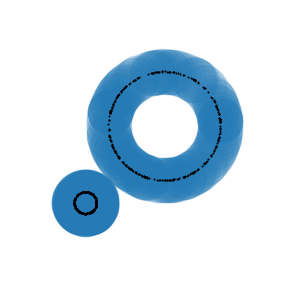}} \\
    \subfloat[$r = 3$]{\includegraphics[width = .33\textwidth]{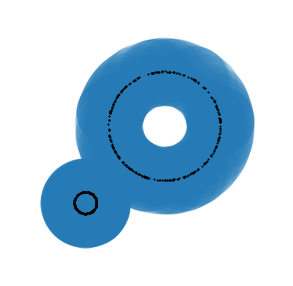}}
    \subfloat[$r = 4$]{\includegraphics[width = .33\textwidth]{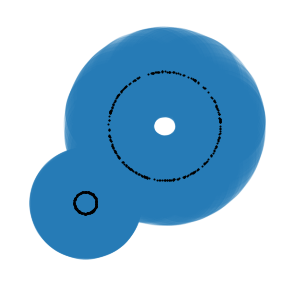}}
    \subfloat[$r = 5$]{\includegraphics[width = .33\textwidth]{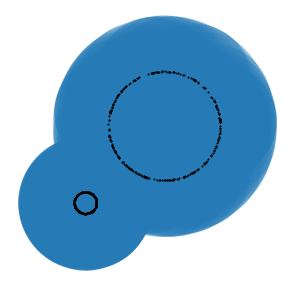}}
    \caption{The point cloud $X$ consists of $N = 500$ points that are sampled from the disjoint union of two circles $C_1$ and $C_2$ with radii $R_1 = 1$ and $R_2 = 5$. With probability $1/2$ we sample uniformly at random from $C_1$, and with probability $1/2$ we sample uniformly at random from $C_2$. For increasing $r$, we display the balls of radius $r$ in the \v{C}ech complex $\C(X)_r$ at filtration level $r$. At $r = 0$, we have the point cloud itself. The smaller circle is filled in immediately at the next step, $r = 1$, but the larger circle is not filled in until $r = 5$.}
    \label{fig:twocircles_cech}
\end{figure}
Following the conventional wisdom, the homology class for the smaller circle might be recorded spuriously as noise. Problems with the conventional wisdom have been noted in many other papers, such as \cite{roadmap, pers_images, bendich2016, stolz2017, Bubenik2020, feng2021}.

In general, standard distance-based filtered complexes (such as the \v{C}ech complex) depend largely on ``topological feature sizes,'' by which we mean the following concept, introduced in \cite{feature_size}. The \emph{medial axis} of a submanifold $M$ in $\mathbb{R}^m$ is the closure of
\begin{equation*}
    G := \{x \in \mathbb{R}^m \mid \text{there are distinct } y, z \in M \text{ such that } d(x, M) = d(x, y) = d(x, z)\}.
\end{equation*}
The \emph{local feature size} at $x \in M$, denoted by $\sigma(x)$, is the distance from $x$ to the medial axis. The \emph{condition number} of $M$ is equal to $1/\tau$, where $\tau = \inf_{x \in M} \sigma(x)$. For example, if $M$ is an $n$-sphere in $\mathbb{R}^{n+1}$, then the medial axis is the center of the sphere and the local feature size at any point on the sphere is the radius of the sphere. Niyogi et al. showed that $\C(X)_r$ is homotopy-equivalent to $M$ when $r < \sqrt{\frac{3}{5}} \tau$ and $X$ is sufficiently dense in $M$ \cite{weinberger}. However, whenever $\tau$ is small, the \v{C}ech complex may only be homotopy-equivalent to $M$ for a very small range of filtration values $r$, even as the number of points sampled from the manifold approaches infinity.

Standard distance-based filtered complexes may perform especially poorly when $M$ contains features of different sizes, even if the smallest features have ``high resolution'' in the point cloud (i.e., the density of points is inversely proportional to the local feature size). For example, consider again the point cloud in Figure \ref{fig:twocircles}, sampled from the disjoint union of two circles $C_1$ and $C_2$ of different radii. (With probability $1/2$ we sample uniformly at random from $C_1$, and with probability $1/2$ we sample uniformly at random from $C_2$.) The product of the probability density function and the local feature size is a constant function; in that sense, the two circles have equally high resolution. However, the corresponding homology classes do not have equally high persistence in the PH of standard filtered complexes.

The dependence on topological feature size is because persistent homology is not a topological invariant. The topology of a manifold is invariant under homeomorphism, but standard distance-based filtered complexes (such as the \v{C}ech complex) are not invariant under homeomorphism. More precisely, suppose $F: M \to M'$ is a homeomorphism of manifolds and $X$ is a point cloud in $M$. The manifolds $M$ and $M'$ are homeomorphic, but $\C(X)$ and $\C(F(X))$ are not necessarily isomorphic (see Definition \ref{def:isomorphism}). Indeed, the bottleneck distance between the persistence diagrams (see Section \ref{sec:pers_modules}) for $H(\C(X))$ and $H(\C(F(X))$ can be arbitrarily large\footnote{For example, consider the scaling homeomorphism $F: \mathbb{R}^m \to \mathbb{R}^m$ defined by $F(\bm{x}) = \lambda \bm{x}$ for some $\lambda \in (0, \infty)$. For any point cloud $X$ with more than one point, the bottleneck distance between $H(\C(X))$ and $H(\C(F(X))$ approaches infinity as $\lambda \to \infty$.}. Therefore, the standard \v{C}ech complex depends on geometric properties such as size. Standard distance-based filtered complexes are closer to geometric tools than topological tools.

\subsection{Contributions}
We work in a probabilistic setting. We suppose that $(M, g)$ is an $n$-dimensional Riemannian manifold and that the point cloud $X$ consists of $N$ points sampled from a smooth probability density function $f: M \to (0, \infty)$. It is important that $f$ is nonzero everywhere because we cannot observe regions of the manifold where $f(x)$ equals zero. The Riemannian metric is necessary because it turns the manifold $M$ into a metric space and induces a volume form $dV$. We define the probability measure to be $\p[A] = \int_A f dV$, where $A \subseteq M$ is a Borel set \cite{Rman_stats}. We note that all manifolds can be endowed with a Riemannian metric (see Section \ref{sec:Rgeom}), so the requirement of a Riemannian metric is not a restriction on the types of manifolds we can study.

We construct a family of ``density-scaled filtered complexes'' by modifying the metric $g$ such that we effectively shrink the distances between points in sparse regions of the manifold and enlarge the distances between points in dense regions of the manifold. To do this, we define a conformally equivalent metric $\g := \sqrt[n]{f^2\alpha(N)^2} g$, where $\alpha(N)$ is a scaling factor that we define in Section \ref{sec:def_Mscaled}. 
Our scaling factor $\alpha(N)$ plays an important role in the convergence property that we prove in Section \ref{sec:convergence} and discuss below. The metric $\g$ is defined such that the points in $X$ are uniformly distributed with respect to the volume form $d\widetilde{V}$ in $(M, \g)$ and such that the balls grow at a slower rate when $N$ is larger. We can then apply any existing distance-based filtered complex (such as the \v{C}ech complex) in the density-scaled Riemannian manifold $(M, \g)$.

We show that our density-scaled filtered complexes have two important properties that other filtered complexes do not have:
\begin{enumerate}
    \item Convergence: As $N \to \infty$, the interval of filtration values for which the density-scaled \v{C}ech complex is homotopy-equivalent to the manifold $M$ grows to $(0, \infty)$ in probability, no matter the condition number of $M$ or any other geometric property of $M$. (We make this statement precise in Theorem \ref{thm:convergence}.) This means that in the PH of the density-scaled \v{C}ech complex, one can interpret the homology classes with the smallest birth times and longest lifetimes as the most important features.
    \item Conformal invariance: We show that our density-scaled filtered complexes are invariant under conformal transformations (Theorem \ref{thm:local_invar}). This means that in contrast to standard complexes, our density-scaled complexes are closer to topological tools and do not depend as much on local feature sizes.
\end{enumerate}
These properties improve our ability to infer the homology of $M$ from a point cloud and make it easier to compare the PH of point clouds sampled from different manifolds of possibly different scales.

We implement a filtered complex $\dvr$ that approximates the density-scaled Vietoris--Rips complex. We do this by estimating the density $f$ via kernel-density estimation and estimating Riemannian distances in a similar way as the widely-used Isomap algorithm \cite{isomap}. The implementation requires knowledge of the intrinsic dimension $n$ of the manifold, which can be estimated using methods such as local principal component analysis \cite{local_PCA, intrinsic_dim}, the conical dimension estimator \cite{conical_dimestimate}, the ball expansion rate \cite{ball_dimestimate}, or the doubling dimension \cite{doubling_dimestimate}. 
We prove that our implementation $\dvr$ is stable (Theorem \ref{thm:dvr_stability}): under suitable conditions that are almost surely satisfied, small perturbations of the input point cloud $X$ result only in small changes to the persistence diagram of $\dvr(X)$.
Consequently, it is still reasonable to use $\dvr$ even when $X$ does not lie exactly on the manifold $M$ or when there is a small amount of noise in the data. The implementation is designed to handle outliers in the data; in Section \ref{sec:geo_est} we discuss how this is done, and in Section \ref{sec:outliers} we test the empirical performance of $\dvr$ on a point cloud with outliers. As applications, we use $\dvr$ to count the number of clusters in a point cloud whose clusters have different densities (Section \ref{sec:clustering}) and the number of equilibrium points in the Lorenz dynamical system from a time-delay embedding (Section \ref{sec:lorenz}).


\subsection{Related Work}
Perhaps the most common TDA-based approach to nonuniform data is the $k$-nearest neighbor (KNN) filtration (see Appendix \ref{sec:knn}). This is related to the density-scaled filtrations by the fact that if $x_{N^k_i}$ is the $k$th nearest neighbor of $x_i$, then $\norm{x_i - x_{N^k_i}}$ converges in probability to a value that is proportional to $f(x_i)^{-1/n}$ as $N \to \infty$, for a choice of $k$ that depends on $N$. (See \cite{knn_density} for a precise statement.)
However, the KNN filtration encounters problems when there are regions of the manifold that are close in Euclidean distance but far in Riemannian distance, especially if those regions differ in density. We discuss one example in Section \ref{sec:clustering}; several other examples of KNN failures are given in \cite{continuous_knn}. In \cite{continuous_knn}, Berry and Sauer constructed a modification of the $k$-nearest neighbors graph (the \emph{continuous $k$-nearest neighbors graph}) whose unnormalized graph Laplacian converges to the Laplace--Beltrami operator of a slightly different density-scaled Riemannian manifold. (Their density-scaled metric is $\sqrt[n]{f^2}g$, where $g$ is the original metric.) The authors of \cite{continuous_knn} proved that the connected components of their graph were consistent with the components of the manifold. They left as conjecture the hypothesis that their graph was topologically consistent (i.e., that the $k$-dimensional homology of the clique complex of their graph converges to the $k$-dimensional homology of the manifold for $k > 0$).

A qualitatively different family of density-scaled metrics was considered in \cite{fermat_tda}. For parameter $p > 1$, the density-scaled metric in \cite{fermat_tda} is $\frac{1}{\sqrt[n]{f^{2(p-1)}}}g$. The Riemannian distance induced by the density-scaled metric of \cite{fermat_tda} is called the \emph{Fermat distance} \cite{fermat}. The Fermat distance effectively enlarges the distances between points in sparse regions of the manifold and shrinks the distances between points in dense regions of the manifold; by contrast, the density-scaled metric in the present paper does the opposite.

The density-scaled complexes in the present paper are also reminiscent of weighted complexes \cite{weightedPH}. (See Appendix \ref{sec:weighted} for a review of weighted complexes.) In a weighted \v{C}ech complex, the radius of a ball is a function $r_x(t)$ of the filtration parameter $t$ and the point $x$ at which the ball is centered\footnote{The radius function $r_x(t)$ need not depend on density; more typically, the weight is determined by some intrinsic property of the point. For example, in \cite{weightedPH}, a point cloud that represented the positions of image pixels had weights that were given by pixel intensity.}. Weighted Vietoris--Rips complexes are defined analogously. One can define a ``density-weighted'' radius function
\begin{equation}\label{eq:weighted_radius}
    r_x(t) := \frac{t}{\sqrt[n]{\alpha(N)f(x)}}
\end{equation}
from which one can define a density-weighted \v{C}ech complex and a density-weighted Vietoris--Rips complex.
The main advantage of our density-scaled complexes over the density-weighted complexes is that our complexes are more robust with respect to noise. Specifically, if $x \in X$ is an outlier in a low-density region, then the ball $B(x, r_x(t))$ grows quickly in radius and may engulf balls in high-density regions. This problem can occur even if all the data points $x$ lie exactly on the manifold $M$. If $A \subseteq M$ is a high-density region, then balls centered at points $x \in A$ grow quickly in radius and may engulf points in low-density regions of $M$. In Sections \ref{sec:outliers} and \ref{sec:clustering}, we calculate examples and discuss these problems in more detail.

Other density-based filtrations, such as the distance-to-measure (DTM) sublevel filtration \cite{dtm_tda} and the density sublevel filtration \cite{kde_sublevel}, are primarily designed for the purpose of noise filtering. Such methods assume that the regions of highest density are the true features of the manifold. For example, consider the point cloud of Figure \ref{fig:twocircles} again. In these other density-based filtrations, it is the \emph{smaller} circle whose corresponding homology class has a much longer lifetime in the persistent homology. In our density-scaled filtration, the two circles have equal lifetimes in the persistent homology, which reflects the fact that they have equally high ``resolution'' in the point cloud.

\subsection{Organization}
The rest of the paper is organized as follows. In Section \ref{sec:background}, we review background from TDA and Riemannian geometry. In Section \ref{sec:defs}, we introduce our family of density-scaled filtered complexes, including definitions for a density-scaled \v{C}ech complex ($\dc$) and a density-scaled Vietoris--Rips complex (DVR). We discuss convergence properties in Section \ref{sec:convergence} and invariance properties in Section \ref{sec:invariance}. In Section \ref{sec:implement}, we discuss our algorithm for the implementation of a filtered complex $\dvr$ that approximates the density-scaled Vietoris--Rips complex. We prove the stability of our density-scaled complexes (including a stability theorem for $\dvr$) in Section \ref{sec:stability}. In Section \ref{sec:compare}, we compute examples and compare $\dvr$ to other filtered complexes. Finally in Section \ref{sec:discussion}, we conclude and discuss some avenues for future research. The code used in this paper is available at \url{https://bitbucket.org/ahickok/dvr/src/main/}.


\section{Background}\label{sec:background}

\subsection{Filtered Complexes}

A comprehensive introduction to filtered complexes and TDA can be found in \cite{edel_book, eat}. Here we review the standard methods for building a filtered complex. Throughout this section, let $(M, d)$ denote a metric space and let $X = \{x_1, \ldots, x_N\}$ denote a point cloud in $M$. For any index set $J \subseteq \{1, \ldots, N\}$, let $x_J$ denote the simplex with vertices $x_j$ for all $j \in J$. 

\begin{definition}\label{def:FSC}
A \emph{filtered complex} $\K$ is a collection of simplicial complexes $\{\K_r\}_{r \in \mathbb{R}}$ such that $\K_s \subseteq \K_r$ for all $s \leq r$. We refer to $r$ as the \emph{filtration level}.
\end{definition}

\begin{definition}
The \emph{\v{C}ech complex} $\C(M, d, X)$ is the filtered complex such that the set of simplices in $\C(M, d, X)_r$ at filtration level $r$ is 
\begin{equation*}
    \Big\{ x_J \mid \bigcap_{j \in J} B(x_j, r) \neq \emptyset \text{ and }J \subseteq \{1, \ldots, N\} \Big\}.
\end{equation*}
Equivalently, $\C(M, d, X)_r$ is the nerve of $\{B(x, r)\}_{x \in X}$, where $B(x, r) := \{y \in M \mid d(x, y) \leq r\}$.
\end{definition}

The Nerve Theorem provides theoretical guarantees for the \v{C}ech complex \cite{Borsuk}.
\begin{theorem}[Nerve Theorem]\label{thm:nerve}
If $\bigcap_{j \in J} B(x_j, r)$ is either contractible or empty for all $J \subseteq \{1, \ldots, N \}$, then $\C(M, d, X)_r$ is homotopy-equivalent to $\bigcup_{i=1}^N B(x_i, r)$.
\end{theorem}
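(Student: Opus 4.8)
The plan is to prove this classical statement (Borsuk's Nerve Lemma) by interposing, between $Y := \bigcup_{i=1}^N B(x_i, r)$ and the geometric realization $|\mathcal{N}|$ of the nerve $\mathcal{N} := \C(M, d, X)_r$, a single space that projects onto both, and showing that each projection is a homotopy equivalence. Writing $U_i := B(x_i, r)$, I would use the \emph{Mayer--Vietoris blowup}
\begin{equation*}
    B := \Big\{ (y, t) \in Y \times |\mathcal{N}| \;\Big|\; y \in \bigcap_{i \in \operatorname{supp}(t)} U_i \Big\},
\end{equation*}
where $\operatorname{supp}(t)$ is the set of indices with positive barycentric coordinate in $t$, together with the coordinate projections $p \colon B \to Y$ and $q \colon B \to |\mathcal{N}|$. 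The projection $p$ is a homotopy equivalence by a soft argument: its fibre over $y$ is the closed simplex on the vertex set $\{i : y \in U_i\}$ (a genuine simplex of $\mathcal{N}$, witnessed by $y$), hence contractible, and a partition of unity $\{\varphi_i\}$ subordinate to the cover yields a section $s(y) := (y, \sum_i \varphi_i(y) e_i)$ of $p$ onto which $B$ fibrewise straight-line deformation retracts.

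The content of the theorem is that $q$ is a homotopy equivalence, and this is where the hypothesis is used: the fibre of $q$ over an interior point of the simplex spanned by an index set $J$ is exactly $\bigcap_{j \in J} U_j$, which is contractible by assumption (empty intersections simply do not index simplices of $\mathcal{N}$). To pass from ``all fibres contractible'' to ``$q$ is a homotopy equivalence'' I would prove, by induction on the number of simplices, that $q^{-1}(L) \to L$ is a homotopy equivalence for every subcomplex $L \subseteq |\mathcal{N}|$: the base case is a single vertex, with fibre some contractible $U_i$; in the inductive step one removes a top-dimensional simplex $\sigma$ from $L$ to get $L'$, writes $L = L' \cup_{\partial \sigma} \bar\sigma$ and correspondingly $q^{-1}(L) = q^{-1}(L') \cup_{q^{-1}(\partial\sigma)} q^{-1}(\bar\sigma)$, observes that $q^{-1}(L') \to L'$ and $q^{-1}(\partial\sigma) \to \partial\sigma$ are equivalences by induction while $q^{-1}(\bar\sigma) \to \bar\sigma$ is an equivalence because both spaces are contractible --- the contractibility of $q^{-1}(\bar\sigma)$ following by retracting it skeleton-by-skeleton over $\bar\sigma$ onto a point, the extension at each stage being possible precisely because every intersection $\bigcap_{j \in J'} U_j$ with $J' \subseteq J$ is contractible --- and then invokes the gluing lemma for homotopy equivalences (the relevant inclusions being cofibrations). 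Taking $L = |\mathcal{N}|$, which has finitely many simplices, yields $\bigcup_i B(x_i, r) = Y \simeq B \simeq |\mathcal{N}| = \C(M, d, X)_r$.

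The main obstacle is exactly this inductive step for $q$: the tempting shortcut ``a map with contractible fibres is a homotopy equivalence'' is false without point-set hypotheses, so one must build the equivalence cell by cell, verify the cofibration conditions under which the gluing lemma applies, and carry out the skeletal contraction of $q^{-1}(\bar\sigma)$, which is where \emph{every} instance of the good-cover hypothesis is consumed. A secondary, purely point-set, nuisance is that partitions of unity want open covers whereas the balls $B(x_i, r)$ are closed; one dispenses with this either by citing the closed-cover form of Borsuk's theorem (the pieces being absolute neighborhood retracts) or by enlarging the cover to a homotopy-equivalent open one. An alternative proof avoids the blowup altogether: a partition-of-unity map $Y \to |\mathcal{N}|$ can be shown to induce isomorphisms on all homotopy groups via a Mayer--Vietoris/\v{C}ech spectral-sequence argument, after which Whitehead's theorem concludes; but for a finite complex the blowup argument above is the most self-contained.
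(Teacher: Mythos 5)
The paper does not prove this theorem; it is quoted as classical background, cited to \cite{Borsuk}, and then used as a black box (most visibly in the proof of Theorem~\ref{thm:convergence}). There is therefore no argument of the paper's to compare yours against. Your sketch is the standard Mayer--Vietoris blowup proof (the homotopy colimit of the cover diagram), and it is correct in outline and identifies the right pressure points: $p$ is handled softly by a partition-of-unity section together with a fibrewise straight-line retraction; the real content lies in $q$; the tempting shortcut ``contractible fibres imply homotopy equivalence'' is indeed false and must be replaced by an induction over subcomplexes combined with the gluing lemma for homotopy equivalences along cofibrations; and the cover by \emph{closed} balls requires either the ANR version of Borsuk's lemma or passage to a homotopy-equivalent open cover before partitions of unity are available. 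The one place you would need to tighten in a full write-up is the contractibility of $q^{-1}(\bar\sigma)$: that space is strictly larger than $\bigl(\bigcap_{j\in J}U_j\bigr)\times\bar\sigma$ over $\partial\sigma$, so ``retracting skeleton-by-skeleton onto a point'' is only a gesture. A clean fix is to decompose $q^{-1}(\bar\sigma) = \bigl(\bigl(\bigcap_{j\in J}U_j\bigr)\times\bar\sigma\bigr)\cup q^{-1}(\partial\sigma)$, with intersection $\bigl(\bigcap_{j\in J}U_j\bigr)\times\partial\sigma$, and apply the gluing lemma once more, using the induction hypothesis on $q^{-1}(\partial\sigma)\to\partial\sigma$. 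This is a matter of detail and not of approach.
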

In Euclidean space, all balls are convex (hence their intersections are contractible), and thus the \v{C}ech complex at filtration level $r$ is homotopy-equivalent to $\bigcup_{x \in X} B(x, r)$. In an arbitrary metric space, however, balls are not always convex. In a Riemannian manifold, $\bigcap_{j \in J}B(x_j, r)$ is contractible only when $r$ is sufficiently small.

Computing the \v{C}ech complex is computationally intensive. In practice, researchers often compute the Vietoris--Rips complex instead, which requires only pairwise distances between the points.
\begin{definition}
The \emph{Vietoris--Rips complex} $\textnormal{VR}(M, d, X)$ is the filtered complex such that the set of simplices in $\textnormal{VR}(M, d, X)_r$ at filtration level $r$ is
\begin{equation*}
    \Big\{x_J \mid d(x_i, x_j) \leq 2r \text{ for all }i, j \in J\, \text{ and } J \subseteq \{1, \ldots, N\} \Big\}.
\end{equation*}
\end{definition}
The Vietoris--Rips complex and the \v{C}ech complex share the same 1-skeleton. When the metric space $(M, d)$ is Euclidean space, the Vietoris--Rips complex and the \v{C}ech complex are related by the Vietoris--Rips lemma \cite{edel_book}, which says that
\begin{equation*}
    \C(M, d, X)_r \subseteq \textnormal{VR}(M, d, X)_r \subseteq \C(M, d, X)_{\sqrt{2}r}
\end{equation*}
for all filtration values $r$. In addition to the \v{C}ech and Vietoris--Rips complexes, there are many other methods for constructing a filtered complex from a point cloud. We review other relevant filtered complexes in Appendix \ref{sec:otherfilts}.


\subsection{Persistence Modules}\label{sec:pers_modules}

In this section, we define persistence modules, persistent homology, and persistence diagrams. We assume the reader is familiar with homology. (A good introduction to homology and algebraic topology is \cite{hatcher}.) References for the rest of this subsection can be found in \cite{fundthm, pers_modules}.

A \emph{persistence module} $\mathbb{V}$ over $\mathbb{R}$ is a collection of vector spaces $\{\mathbb{V}_t\}_{t \in \mathbb{R}}$ with linear maps $\{v_s^t : \mathbb{V}_s \to \mathbb{V}_t \text{ for all } s \leq t\}$ that satisfy the composition law $v_t^u \circ v_s^t = v_s^u$ for all $s \leq t \leq u$. If $\K$ is a filtered complex, the \emph{persistent homology} of $\K$ over a field $\mathbb{F}$ is the persistence module $\{H(\K_r, \mathbb{F})\}_{r \in \mathbb{R}}$, which we denote by $H(\K, \mathbb{F})$. For all $s \leq t$, the inclusion $\K_s \xhookrightarrow{} \K_t$ induces a linear map $\iota_s^t : H(\K_s, \mathbb{F}) \to H(\K_t, \mathbb{F})$. We sometimes drop the field $\mathbb{F}$ from our notation when a fixed field is chosen. (All calculations in Section \ref{sec:compare} are done with $\mathbb{F} = \mathbb{Z}/11\mathbb{Z}$, the default field used by the GUDHI software package.) As $r$ increases, new homology classes are born and old homology classes die.

The Fundamental Theorem of Persistent Homology, stated below, shows that we can decompose the persistence module in a way that yields a nice set of generators. If $\K_t \setminus \K_s$ has a finite number of simplices for all $s \leq t$ (this condition holds for the \v{C}ech complex and the Vietoris--Rips complex), then there is a sequence $\{r_i\}$ such that $\K_{r_i} = \K_r$ for all $r_i \leq r < r_{i+1}$. The direct sum $\bigoplus_i H(\K_{r_i}, \mathbb{F})$ has the structure of a graded module over the graded ring $\mathbb{F}[x]$. The action of $x$ on a homogenous element $\gamma \in H(\K_{r_i}, \mathbb{F})$ is $x\gamma = \iota_{r_i}^{r_{i+1}}(\gamma)$.
\begin{theorem}[Fundamental Theorem of Persistent Homology \cite{fundthm}]
The graded $\mathbb{F}[x]$-module $\bigoplus_i H(\K_{r_i}, \mathbb{F})$ is isomorphic to
\begin{equation}\label{eq:phmodule}
    \Big(\bigoplus_i \Sigma^{a_i} \mathbb{F}[x] \Big) \oplus \Big(\bigoplus_j \Sigma^{b_j} \mathbb{F}[x]/(x^{c_j})\Big)
\end{equation}
for some integers $\{a_i\}$, $\{b_j\}$, $\{c_j\}$, where $\Sigma^m \mathbb{F}[x]$ denotes an $m$-shift upward in grading for any integer $m$.
\end{theorem}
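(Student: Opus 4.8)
The plan is to recognize the decomposition \eqref{eq:phmodule} as the structure theorem for finitely generated modules over a principal ideal domain, carried out in the graded setting. Write $R = \mathbb{F}[x]$ with its standard $\mathbb{N}$-grading (so $x$ is homogeneous of degree $1$), and regard $D := \bigoplus_i H(\K_{r_i}, \mathbb{F})$ as an $\mathbb{N}$-graded object with $H(\K_{r_i}, \mathbb{F})$ placed in degree $i$. The first step is the routine check that $D$ is a graded $R$-module: the prescribed action $x\cdot\gamma = \iota_{r_i}^{r_{i+1}}(\gamma)$ for $\gamma \in H(\K_{r_i}, \mathbb{F})$ is $\mathbb{F}$-linear and raises degree by one, and the composition law $\iota_{r_i}^{r_k} = \iota_{r_j}^{r_k}\circ\iota_{r_i}^{r_j}$ for $i \le j \le k$ (induced by composing inclusions of filtered complexes) makes iterated multiplication by $x$ well defined and associative, so $D$ is indeed a graded $R$-module.

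Second, I would verify that $D$ is finitely generated over $R$. The hypothesis that $\K_t \setminus \K_s$ has finitely many simplices for all $s \le t$ forces each graded piece $H(\K_{r_i}, \mathbb{F})$ to be finite-dimensional over $\mathbb{F}$; and for the \v{C}ech and Vietoris--Rips complexes on a finite point cloud the entire complex has only finitely many simplices, so the filtration is eventually constant, the maps $\iota_{r_i}^{r_{i+1}}$ are eventually isomorphisms, and only finitely many homology classes are ever born. Hence a finite set of homogeneous elements generates $D$ over $R$. (Without eventual constancy one would instead appeal to interval-decomposability of pointwise finite-dimensional persistence modules, but the complexes of interest here stabilize.)

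Third, I would apply the graded version of the PID structure theorem: since $R = \mathbb{F}[x]$ is a graded principal ideal domain, every finitely generated graded $R$-module is isomorphic to a finite direct sum of shifted cyclic graded modules,
\begin{equation*}
    D \;\cong\; \Big(\bigoplus_i \Sigma^{a_i} R\Big) \oplus \Big(\bigoplus_j \Sigma^{b_j} R/(d_j)\Big),
\end{equation*}
where each $d_j$ is a homogeneous non-unit of $R$. To conclude I would identify the $d_j$: every homogeneous element of $\mathbb{F}[x]$ is a monomial $c\,x^m$ with $c \in \mathbb{F}$, so a homogeneous non-unit is a unit multiple of some power $x^{c_j}$, whence $R/(d_j) \cong R/(x^{c_j})$. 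Substituting this into the display gives exactly \eqref{eq:phmodule}; tracing the grading back through the filtration, $a_i$ and $b_j$ are the birth indices of the corresponding homology classes and $c_j$ is the lifetime of the class born at index $b_j$.

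The main obstacle is the graded structure theorem itself, which is the only non-elementary ingredient. One may cite it from commutative algebra; a self-contained argument chooses homogeneous generators and relations for $D$, forms a presentation matrix all of whose entries are homogeneous, and reduces it to a graded Smith normal form using only row and column operations by homogeneous units and homogeneous elementary transformations --- the care required to keep every step degree-homogeneous (so that the final diagonal form respects the grading) is the delicate part. Once the diagonal entries are homogeneous, the final reduction to powers of the single prime $x$ is immediate from the description of the homogeneous elements of $\mathbb{F}[x]$.
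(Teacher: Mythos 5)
The paper does not prove this theorem: it states it as a citation to \cite{fundthm} and moves on, so there is no in-paper proof to compare against. Your proposal is correct and is, in substance, the argument given in that cited reference (Zomorodian and Carlsson's \emph{Computing Persistent Homology}): endow $\bigoplus_i H(\K_{r_i},\mathbb{F})$ with its graded $\mathbb{F}[x]$-module structure, observe finite generation, invoke the structure theorem for finitely generated graded modules over the graded PID $\mathbb{F}[x]$, and then note that the only homogeneous non-unit ideals are $(x^c)$, which yields exactly the displayed decomposition. You are also right to flag the finite-generation point as the place where a hypothesis is actually used: pointwise finite-dimensionality of the graded pieces alone is not enough (e.g.\ $\bigoplus_{i\ge 0}\Sigma^i\,\mathbb{F}[x]/(x)$ is pointwise finite-dimensional but not finitely generated), so one needs the filtration to stabilize, which holds for the \v{C}ech and Vietoris--Rips complexes on a finite point cloud, or else one must pass to Crawley-Boevey's interval-decomposition theorem for pointwise finite-dimensional persistence modules, as you note. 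The only cosmetic issue is that the graded structure theorem over $\mathbb{F}[x]$ is usually taken as known rather than rederived via graded Smith normal form, but your sketch of that reduction is sound.
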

An $\Sigma^{a_i} \mathbb{F}[x]$ summand corresponds to a homology class that is born at filtration level $r_{a_i}$ and never dies. An $\Sigma^{b_j} \mathbb{F}[x]/(x^{c_j})$ summand corresponds to a homology class that is born at filtration level $r_{b_j}$ and dies at filtration level $r_{c_j} + r_{b_j}$. The information in a persistence module can be summarized by a \emph{persistence diagram}, which is a multiset of points in the extended plane $\overline{\mathbb{R}}^2$. Given a decomposition in the form of Equation \ref{eq:phmodule}, the persistence diagram includes the points $(r_{a_i}, \infty)$ for all $i$, the points $(r_{b_j}, r_{c_j})$ for all $j$, and all points on the diagonal. The points on the diagonal are included for technical reasons; one can think of them as homology classes that die instantaneously. We denote the persistence diagram of a persistence module $\mathbb{V}$ by $\dgm(\mathbb{V})$. The \emph{bottleneck distance} between two diagrams is defined to be
\begin{equation*}
    W_{\infty}(\dgm(\mathbb{V}), \dgm(\mathbb{U})) :=  \inf_{\eta} \sup_{x \in \dgm(\mathbb{V})} \norm{x - \eta(x)}_{\infty}\,,
\end{equation*}
where the infimum is taken over all bijections $\eta: \dgm(\mathbb{V}) \to \dgm(\mathbb{U})$.


\subsection{Riemannian Geometry}\label{sec:Rgeom}
We briefly review the necessary background from Riemannian geometry. For further reading, we recommend a textbook such as \cite{petersen}. A \emph{Riemannian manifold} $(M, g)$ is a smooth manifold $M$ with a Riemannian metric $g$ that defines a smoothly-varying inner product on each tangent space $T_xM$. More precisely, $g$ is a 2-tensor field on $M$; to each $x \in M$, the Riemannian metric $g$ assigns a bilinear map $g_x$ on the tangent space $T_xM$. A Riemannian metric allows one to define the length of a vector $v \in T_xM$ to be $\norm{v} := g_x(v, v)^{1/2}$. The length of a continuously differentiable path $\gamma: [a, b] \to M$ is defined to be $L(\gamma) := \int_a^b \norm{\gamma'(t)}\mathrm{d}t$.

A Riemannian manifold is a metric space. The distance between two points $x$, $y$ in the same connected component of $M$ is
\begin{equation*}
    d_{M, g}(x, y) := \inf \{L(\gamma) \mid \gamma:[a, b] \to M \text{ is a $C^1$ path such that } \gamma(a) = x \text{ and }\gamma(b) = y\}.
\end{equation*}
If $(M, g)$ is \emph{complete}, then the infimum is achieved by a \emph{geodesic}, a curve that locally minimizes length. If $x$ and $y$ are in different connected components, then their distance is infinite.

To see that all manifolds can be given a Riemannian metric, recall that all manifolds can be embedded into Euclidean space. Let $\iota: M \xhookrightarrow{} \mathbb{R}^m$ be an embedding. The canonical Euclidean metric $\bar{g}$ pulls back to a Riemmanian metric $\iota^* \bar{g}$ on $M$. We call $\iota^*\bar{g}$ the \emph{Euclidean-induced Riemannian metric}. On each tangent space $T_xM$, the metric $\iota^*\bar{g}_x$ is the restriction of $\bar{g}_x$ to $T_x M$. A Riemannian metric induces a \emph{volume form} $dV$, the unique $n$-form on $M$ that equals $1$ on all positively oriented orthonormal bases. In local coordinates, the expression for the volume form is
\begin{equation*}
    dV = \sqrt{\vert g \vert} dx^1 \wedge \cdots \wedge dx^n.
\end{equation*}

With a volume form and a smooth probability density function $f: M \to (0, \infty)$, one can define a probability measure on the manifold. A good reference for probability and statistics on Riemannian manifolds is \cite{Rman_stats}. The volume form induces a \emph{Riemannian measure} $\mu$ on $M$. The measure of a Borel set $A \subseteq M$ is $\mu(A) = \int_A dV$, and the volume of $M$ is $\mu(M)$. The probability measure is defined to be
\begin{equation*}
    \p[A] = \int_{A} f dV
\end{equation*}
for Borel sets $A \subseteq M$.

Two Riemannian metrics $g$, $\g$ on $M$ are \emph{conformally equivalent} if there is a positive $C^{\infty}$ function $\lambda: M \to \mathbb{R}$ such that
\begin{equation*}
    \g = \lambda g.
\end{equation*}
A \emph{conformal transformation} is a diffeomorphism $F: (M, g) \to (M', g')$ such that $g'$ pulls back to $\lambda g$ (i.e., $F^*g' = \lambda g$) for some positive $C^{\infty}$ function $\lambda: M \to \mathbb{R}$. Conformal transformations preserve angles; one can think of a conformal transformation as a transformation that ``locally scales'' the manifold. For example, if $M$ is a submanifold of $\mathbb{R}^m$ and has the Euclidean-induced Riemannian metric, then any global scaling is a conformal transformation.

A special type of conformal transformation is an isometry. An \emph{isometry} of Riemannian manifolds is a diffeomorphism $F: (M, g) \to (M', g')$ such that $g'$ pulls back to $g$ (i.e., $F^*g' = g$). An isometry of Riemannian manifolds is an isometry of metric spaces in the usual sense (i.e., $d_{M, g}(x, y) = d_{M', g'}(F(x), F(y))$). 


\section{Our Family of Density-Scaled Filtered Complexes}\label{sec:defs}


\subsection{Our Density-Scaled Riemannian Manifold}\label{sec:def_Mscaled}
Let $(M, g)$ be an $\Mdim$-dimensional Riemannian manifold from which we sample $N$ points according to a smooth probability density function $f: M \to (0, \infty)$. We begin by defining a conformally-equivalent Riemannian metric $\g$ such that the points are uniformly distributed in $(M, \g)$.
\begin{definition}
The \emph{density-scaled Riemannian metric} is
\begin{equation}\label{eq:scaled_g}
    \g = \sqrt[n]{\alpha(N)^2f^2}g\,,
\end{equation}
where $\alpha(N)$ is a strictly positive function that satisfies
\begin{equation}\label{eq:alpha_conditions}
    \alpha(N) \to \infty\,, \qquad
    \frac{\alpha(N)\Lambda^*}{N} \to 0
\end{equation}
as $N \to \infty$, and where $\Lambda^*$ is the threshold filling factor defined in Equation \ref{eq:filling_thresh} in Section \ref{sec:convergence}. 
\end{definition}
In this paper, we set
\begin{equation*}
    \alpha(N) := \begin{cases}
        \frac{N}{\log N (\log N + (n-1)\log \log N)} \,, & N > 1 \\
        1 \,, & N = 1\,,
    \end{cases}
\end{equation*}
which satisfies Equation \ref{eq:alpha_conditions}. However, the convergence properties of Sections \ref{sec:convergence} hold for any choice of $\alpha(N)$ that satisfies the conditions of Equation \ref{eq:alpha_conditions}, and the invariance and stability results in Sections \ref{sec:invariance} and \ref{sec:stability} hold for \emph{any} choice of strictly positive function $\alpha(N)$.

The uniform probability measure on $(M, \g)$ is $\p[A] = \int_A \frac{1}{\tilde{\mu}(M)}d\widetilde{V}$ for all Borel sets $A \subseteq M$, where $d\widetilde{V}$ is the volume form on $(M, \g)$ and $\tilde{\mu}(M)$ is the volume of $(M, \g)$. Using local coordinates, we see that $d\widetilde{V}$ satisfies
\begin{equation*}
    d\widetilde{V} = \sqrt{\vert \g \vert}dx^1 \wedge \cdots \wedge dx^n = \alpha(N)f \sqrt{\vert g \vert}dx^1 \wedge \cdots \wedge dx^n = \alpha(N)f dV.
\end{equation*}
Therefore $\frac{1}{\tilde{\mu}(M)}d\widetilde{V} = fdV$ because
\begin{equation*}
    \tilde{\mu}(M) = \alpha(N)\int_M fdV = \alpha(N)\,.
\end{equation*}
This means that sampling points from $(M, g)$ with probability density function $f$ is equivalent to sampling points uniformly at random from $(M, \g)$.


\subsection{Our Definition of a Density-Scaled Filtered Complex}
\begin{definition}
Let $(M, g)$ be a Riemannian manifold, and let $X = \{x_1, \ldots, x_N\}$ be a point cloud that consists of $N$ points sampled from a smooth probability density function $f: M \to(0, \infty)$. The \emph{density-scaled C\v{e}ch complex} is the filtered complex
\begin{equation*}
    \dc(M, g, f, X) := \C(M, d_{M, \g}, X),
\end{equation*}
where $d_{M, \g}$ is the Riemannian distance function in $(M, \g)$ and $\g$ is defined as in Equation \ref{eq:scaled_g}. Equivalently, the set of simplices in $\dc(M, g, f, X)_r$ at filtration level $r$ is 
\begin{equation*}
    \Big\{x_J \mid \bigcap_{j \in J} B(x_j, r) \neq \emptyset \text{ and } J \subseteq \{1, \ldots, N \}\Big\},
\end{equation*}
where $B(x, r) := \{y \in M \mid d_{M, \g}(x, y) \leq r\}$.
\end{definition}

\begin{definition}
Let $(M, g)$ be a Riemannian manifold, and let $X = \{x_1, \ldots, x_N\}$ be a point cloud that consists of $N$ points sampled from a smooth probability density function $f: M \to(0, \infty)$. The \emph{density-scaled Vietoris--Rips complex} is the filtered complex 
\begin{equation*}
    \textnormal{DVR}(M, g, f, X) := \textnormal{VR}(M, d_{M, \g}, X)\,,
\end{equation*}
where $d_{M, \g}$ is the Riemannian distance function in $(M, \g)$ and $\g$ is defined as in Equation \ref{eq:scaled_g}. Equivalently, the set of simplices in $\textnormal{DVR}(M, g, f, X)_r$ at filtration level $r > 0$ is
\begin{equation*}
    \Big\{x_J \mid d_{M, \g}(x_i, x_j) \leq 2r \text{ for all } i, j \in J \text{ and } J \subseteq \{1, \ldots, N\}\Big\}.
\end{equation*}
\end{definition}

More generally, one can define a density-scaled version of any distance-based filtered complex by applying the filtered complex to the point cloud in the metric space $(M, d_{M, \g})$, where $d_{M, \g}$ is the Riemannian distance function in the density-scaled manifold $(M, \g)$.
\begin{definition}
Let $(M, g)$ be a Riemannian manifold, and let $X = \{x_1, \ldots, x_N\}$ be a point cloud that consists of $N$ points sampled from a smooth probability density function $f: M \to(0, \infty)$. If $\Sigma(M, d, X)$ is a distance-based filtered complex, where $(M, d)$ denotes a metric space, then the \emph{density-scaled filtered complex} is
\begin{equation*}
D\Sigma(M, g, f, X) := \Sigma(M,  d_{M, \g}, X)\,,
\end{equation*}
where $d_{M, \g}$ is the Riemannian distance function in $(M, \g)$ and $\g$ is defined as in Equation \ref{eq:scaled_g}.
\end{definition}


\section{Convergence Properties of the Density-Scaled \v{C}ech Complex}\label{sec:convergence}
In Theorem \ref{thm:convergence} below, we show that the density-scaled \v{C}ech complex is homotopy-equivalent to $M$ for an interval of filtration values that grows arbitrarily large in probability as $N \to \infty$. We begin by reviewing the relevant concepts. The \emph{convexity radius} of a Riemannian manifold $(M, g)$ is 
\begin{equation*}
    r^{\textnormal{convex}} := \sup\{r \mid B(x, s) \text{ is geodesically convex for all } x \in M \text{ and all } 0 \leq s < r\}\,,
\end{equation*}
where $B(x, s) := \{y \in M \mid d_{M, g}(x, y) \leq r\}$ and where $d_{M, g}$ is the Riemannian distance function in $(M, g)$. If $s < r^{\textnormal{convex}}$, the ball $B(x, s)$ is geodesically convex (hence contractible). Furthermore, the intersection of geodesically convex balls is geodesically convex (hence contractible or empty). Let $\g_N$ denote the density-scaled Riemannian metric when there are $N$ points, and let $r^{\textnormal{convex}}_N$ denote the convexity radius of $(M, \g_N)$. The \emph{coverage radius} of a point cloud $X$ in a Riemannian manifold $(M, g)$ is
\begin{equation*}
    r^{\text{cover}} := \inf\Big\{r \mid M \subseteq \bigcup_{x \in X} B(x, r) \Big\}.
\end{equation*}
Let $r^{\text{cover}}_N$ denote the coverage radius of a point cloud $X$ in $(M, \g_N)$.

\begin{theorem}\label{thm:convergence}
Let $(M, g)$ be a Riemannian manifold, and let $X$ be a point cloud that consists of $N$ points sampled from a smooth probability density function $f: M \to (0, \infty)$. If $r^{\textnormal{cover}}_N < r < r^{\textnormal{convex}}_N$, then $\dc(M, g, f, X)_r$ is homotopy-equivalent to $M$. If $M$ is compact, then $r^{\textnormal{convex}}_N \to \infty$ as $N \to \infty$. If $M$ is compact and connected, then $r^{\textnormal{cover}}_N \to 0$ in probability as $N \to \infty$.
\end{theorem}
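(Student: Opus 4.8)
The plan is to prove the three assertions in turn, using throughout the observation that
\[
\g_N=\sqrt[n]{\alpha(N)^2f^2}\,g=\alpha(N)^{2/n}\hat g,\qquad \hat g:=f^{2/n}g,
\]
exhibits $\g_N$ as a \emph{constant} rescaling of the fixed, $N$-independent metric $\hat g$, together with the elementary fact that a constant rescaling $\g=c^2g$ leaves geodesics (hence the notion of geodesic convexity) unchanged while multiplying all Riemannian distances, ball radii, the convexity radius, and the coverage radius by $c$. Writing $\hat r^{\textnormal{convex}}$ for the convexity radius of $(M,\hat g)$ and $\hat r^{\textnormal{cover}}_N$ for the coverage radius of $X$ in $(M,\hat g)$, this yields $r^{\textnormal{convex}}_N=\alpha(N)^{1/n}\hat r^{\textnormal{convex}}$ and $r^{\textnormal{cover}}_N=\alpha(N)^{1/n}\hat r^{\textnormal{cover}}_N$.

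For the first assertion I would apply the Nerve Theorem (Theorem \ref{thm:nerve}): if $r<r^{\textnormal{convex}}_N$ then every ball $B(x_j,r)$ in $(M,\g_N)$ is geodesically convex, hence so is each intersection $\bigcap_{j\in J}B(x_j,r)$, which is therefore contractible or empty, so $\dc(M,g,f,X)_r=\C(M,d_{M,\g_N},X)_r$ is homotopy-equivalent to $\bigcup_jB(x_j,r)$; if in addition $r>r^{\textnormal{cover}}_N$ then $M\subseteq\bigcup_jB(x_j,r)\subseteq M$, so this union is $M$ and $\dc(M,g,f,X)_r\simeq M$. For the second assertion, when $M$ is compact the fixed metric $\hat g$ is a smooth Riemannian metric on a compact manifold, so its convexity radius $\hat r^{\textnormal{convex}}$ is strictly positive (a classical fact: each point has a positive radius below which balls are geodesically convex, and this radius is lower-semicontinuous in the point, so its infimum over the compact $M$ is positive; cf.\ \cite{petersen}); hence $r^{\textnormal{convex}}_N=\alpha(N)^{1/n}\hat r^{\textnormal{convex}}\to\infty$ because $\alpha(N)\to\infty$.

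The third assertion is the substantive one. A local-coordinate computation shows the volume form of $\hat g$ is $d\hat V=f\,dV$, so the total volume is $\int_Mf\,dV=1$ and the sampling measure $\p[A]=\int_Af\,dV$ is exactly the Riemannian volume measure of $(M,\hat g)$; thus $X$ consists of $N$ i.i.d.\ uniform points on the compact connected Riemannian manifold $(M,\hat g)$, a distribution not depending on $N$. It then suffices to show $\alpha(N)^{1/n}\hat r^{\textnormal{cover}}_N\to0$ in probability. Fix $\delta>0$, set $\delta_N:=\tfrac12\delta\,\alpha(N)^{-1/n}\to0$, and choose a maximal $\delta_N$-separated set $\{y_1,\dots,y_{K_N}\}\subseteq M$. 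By the bounded geometry of the fixed compact manifold $(M,\hat g)$, for $N$ large one has $K_N\le C_1\delta_N^{-n}$ (disjoint half-balls of volume $\gtrsim\delta_N^{\,n}$), the balls $B(y_i,\delta_N)$ cover $M$ (maximality), and $\hat\mu(B(y_i,\delta_N))\ge c_2\delta_N^{\,n}$ for all $i$ (volume comparison). If every $B(y_i,\delta_N)$ meets $X$, the triangle inequality gives $\hat r^{\textnormal{cover}}_N\le2\delta_N$, so the event $\hat r^{\textnormal{cover}}_N>2\delta_N$ forces some $B(y_i,\delta_N)$ to miss $X$, whence
\begin{equation*}
\p\bigl[\alpha(N)^{1/n}\hat r^{\textnormal{cover}}_N>\delta\bigr]=\p\bigl[\hat r^{\textnormal{cover}}_N>2\delta_N\bigr]\le\sum_{i=1}^{K_N}\bigl(1-\hat\mu(B(y_i,\delta_N))\bigr)^N\le C_1\delta_N^{-n}e^{-c_2N\delta_N^{\,n}}.
\end{equation*}
Since $N\delta_N^{\,n}=(\delta/2)^nN/\alpha(N)$ and $\delta_N^{-n}=(2/\delta)^n\alpha(N)$, the right-hand side equals $C_1(2/\delta)^n\alpha(N)\exp\!\bigl(-c_2(\delta/2)^nN/\alpha(N)\bigr)$, which tends to $0$ because the conditions of Equation \ref{eq:alpha_conditions} force $N/\alpha(N)$ to grow far faster than $\log\alpha(N)$ (for the $\alpha(N)$ used here, $N/\alpha(N)=\log N\,(\log N+(n-1)\log\log N)\ge(\log N)^2$ while $\log\alpha(N)\le\log N$), so the exponential dominates the prefactor. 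This is convergence in probability, completing the proof.

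I expect the third part to be the crux, and within it two points need care: extracting the covering-number bound $K_N\lesssim\delta_N^{-n}$ and the uniform lower bound $\hat\mu(B(y,\delta_N))\gtrsim\delta_N^{\,n}$ from the bounded geometry of $(M,\hat g)$, and verifying that $\alpha(N)$ grows slowly enough — relative to the coverage threshold $\Lambda^*$ — that the exponential term beats the polynomial prefactor $\alpha(N)$. If one instead invokes a sharp random-covering theorem phrased in terms of the threshold filling factor $\Lambda^*$, this last step reduces to noting that the filling factor at a radius $r_N\asymp(\alpha(N)\Lambda^*/N)^{1/n}\to0$ already exceeds $\Lambda^*$, whence $\hat r^{\textnormal{cover}}_N\le r_N$ with high probability.
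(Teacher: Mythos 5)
Your proof is correct, and its first two parts follow the paper's argument. Part 1 is the identical Nerve Theorem argument, and Part 2 is the paper's Lemma \ref{lem:conv_inf}: the key observation $\g_N=\alpha(N)^{2/n}\hat g$ with $\hat g=f^{2/n}g$, the positivity of the convexity radius of a compact manifold, and the scaling $r^{\textnormal{convex}}_N=\alpha(N)^{1/n}\hat r^{\textnormal{convex}}$. (Note $\hat g=\g_1$ since $\alpha(1)=1$, so your $\hat r^{\textnormal{convex}}$ is exactly the paper's $r^{\textnormal{convex}}_1$.)

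Your Part 3, however, takes a genuinely different route. The paper proves $r^{\textnormal{cover}}_N\to0$ by quoting the Flatto--Newman random covering theorem (Theorem \ref{thm:flatto}), passing through a filling-factor argument (Corollary \ref{cor:cover_prob} and Lemma \ref{lem:cover_zero}), and this machinery delivers the sharper conclusion that $r^{\textnormal{cover}}_N/r^*_N\to1$ in probability, where $r^*_N\asymp(\alpha(N)\Lambda^*/N)^{1/n}$ is the threshold radius --- which is the real reason the condition $\alpha(N)\Lambda^*/N\to0$ in Equation \ref{eq:alpha_conditions} is calibrated to $\Lambda^*$. Your argument instead uses a maximal $\delta_N$-separated net, volume comparison on the fixed compact manifold $(M,\hat g)$, and a union bound. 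This is more elementary and self-contained, and it correctly exploits the crucial reduction (which you state explicitly) that $X$ is i.i.d.\ uniform on the $N$-independent manifold $(M,\hat g)$. The step verifying $\alpha(N)e^{-cN/\alpha(N)}\to0$ from Equation \ref{eq:alpha_conditions} is sound: $\alpha(N)\Lambda^*/N\to0$ gives $N/\alpha(N)\gg\log N\ge\log\alpha(N)$ for large $N$, so the exponential beats the prefactor. What your elementary argument gives up, relative to the paper's, is the sharp threshold behavior $r^{\textnormal{cover}}_N/r^*_N\to1$; it also does not literally \emph{need} the full strength of $\alpha(N)\Lambda^*/N\to0$ (any condition forcing $N/\alpha(N)\gg\log\alpha(N)$ would do), whereas the paper's condition is tuned precisely to the Flatto--Newman threshold.
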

\begin{proof}
If $r < r^{\text{convex}}_N$, then for all $J \subseteq \{1, \ldots, N\}$, the intersection $\bigcap_{j \in J} B(x_j, r)$ is convex, so it is either contractible or empty. If $r > r^{\text{cover}}_N$, then $\bigcup_i B(x_i, r) = M$. By the Nerve Theorem, $\dc(M, g, f, X)_r$ is homotopy-equivalent to $M$. The second statement of the theorem follows from Lemma \ref{lem:conv_inf} below, and the third statement of the theorem follows from Lemma \ref{lem:cover_zero} below.
\end{proof}

\begin{lemma}\label{lem:conv_inf}
If $M$ is compact, then $r^{\textnormal{convex}}_N \to \infty$ as $N \to \infty$.
\end{lemma}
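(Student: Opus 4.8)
The plan is to reduce to the classical fact that a compact Riemannian manifold has a \emph{strictly positive} convexity radius, exploiting the observation that for each fixed $N$ the metric $\g_N$ is obtained from one fixed metric by multiplication by a constant.

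First I would note that the conformal factor splits as $\sqrt[n]{\alpha(N)^2 f^2} = \alpha(N)^{2/n} f^{2/n}$, so that $\g_N = \alpha(N)^{2/n}\,\hat g$, where $\hat g := f^{2/n} g$ does not depend on $N$. Since $f : M \to (0,\infty)$ is smooth and $M$ is compact, $\hat g$ is a bona fide smooth Riemannian metric on a compact manifold, and hence its convexity radius $\hat r := r^{\textnormal{convex}}(M, \hat g)$ is strictly positive. This is classical: every point of a Riemannian manifold has a geodesically convex neighborhood by Whitehead's theorem, and on a compact manifold a standard covering/compactness argument upgrades this to a uniform positive lower bound.

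Next I would record the elementary scaling law for the convexity radius under a \emph{constant} conformal change: if $\tilde g = c^2 g'$ for a constant $c > 0$, then $r^{\textnormal{convex}}(M, \tilde g) = c\,r^{\textnormal{convex}}(M, g')$. Indeed, scaling a metric by the constant $c^2$ multiplies every path length, and therefore the distance function, by $c$, while leaving the set of geodesics (as unparametrized curves) unchanged; thus $B_{\tilde g}(x, s) = B_{g'}(x, s/c)$, and whether a given set is geodesically convex is unaffected. Hence the set of radii $r$ such that every ball of radius $< r$ is geodesically convex is exactly $c$ times the corresponding set for $g'$, and taking suprema gives the identity.

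Applying this with $g' = \hat g$ and $c = \alpha(N)^{1/n}$ (so that $c^2 = \alpha(N)^{2/n}$) yields $r^{\textnormal{convex}}_N = \alpha(N)^{1/n}\,\hat r$. Since $\hat r$ is a fixed positive constant and $\alpha(N) \to \infty$ as $N \to \infty$ by Equation \ref{eq:alpha_conditions}, we conclude that $r^{\textnormal{convex}}_N \to \infty$. I do not anticipate a genuine obstacle: the only nontrivial input is the positivity of the convexity radius of a compact Riemannian manifold, and the remainder is bookkeeping about how a constant rescaling acts on lengths and geodesics. The one point that must be handled with care is that this reduction really uses the fact that the $N$-dependent part of the conformal factor is a constant that does not vary over $M$ — precisely the form of $\alpha(N)$; if $\alpha$ were allowed to depend on the point, the argument would not go through.
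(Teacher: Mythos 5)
Your proof is correct and matches the paper's argument: the metric $\hat g = f^{2/n} g$ that you introduce is precisely $\g_1$ (since $\alpha(1)=1$), so your identity $r^{\textnormal{convex}}_N = \alpha(N)^{1/n}\,\hat r$ is the paper's $r^{\textnormal{convex}}_N = \sqrt[n]{\alpha(N)}\,r^{\textnormal{convex}}_1$, and both proofs rest on the same two facts (positivity of the convexity radius on a compact manifold, and its linear scaling under a constant conformal rescaling). You merely spell out the scaling law in more detail than the paper does.
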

\begin{proof}
The convexity radius of a compact manifold is positive (see, e.g., Proposition 20 in \cite{convexity_radius}). Therefore, $r^{\text{convex}}_1 > 0$, so $r^{\text{convex}}_N = \sqrt[n]{\alpha(N)} r^{\text{convex}}_1 \to \infty$ because $\alpha(N) \to \infty$.
\end{proof}

Now we turn to the coverage radius. The behavior of the coverage radius is controlled by the \emph{filling factor}. On an $n$-dimensional Riemannian manifold $M$ from which $N$ balls of radius $r$ are chosen uniformly at random, the filling factor is
\begin{equation}
    \Lambda := N v_nr^n/\mu(M),
\end{equation}
where $v_n$ is the volume of a Euclidean unit $n$-ball. For small $r$, the filling factor approximates the number of points inside a ball of radius $r$. Let $N_r$ be the number of balls of radius $r$ required to cover $M$, assuming the balls are chosen uniformly at random. Let $\eta = v_n r^n$ be the volume of a Euclidean $n$-ball of radius $r$. Define
\begin{equation*}
    X_r := \eta N_r - \log (\eta^{-1}) - n \log \log(\eta^{-1}).
\end{equation*}
\begin{theorem}[Theorem 1.1 in \cite{flatto_newman}]\label{thm:flatto}Let $M$ be a compact, connected Riemannian manifold with unit volume. There are constants $r_1 > 0$ and $C > 0$, which do not depend on $M$, such that if $r \leq r_1$, then
\begin{align*}
    \p[X_r > x] &\leq Ce^{-x/8} \,,  \qquad{x \geq 0} \\
    \p[X_r < x] &\leq Ce^x \,, \qquad{x \leq 0}.
\end{align*}
\end{theorem}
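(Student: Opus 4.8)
The plan is as follows. Since this is Theorem~1.1 of \cite{flatto_newman}, the goal is to reconstruct that argument; it belongs to the classical circle of ideas for random-covering (coupon-collector) problems, transported to a Riemannian manifold. The central object is the \emph{uncovered set} $\Omega_m := M \setminus \bigcup_{i=1}^m B(x_i, r)$ left after $m$ i.i.d.\ uniform ball centers have been placed: since the covering number satisfies $N_r = \min\{m : \Omega_m = \emptyset\}$, we have $\{N_r > m\} = \{\Omega_m \neq \emptyset\}$, so both tail bounds on $X_r$ reduce to two-sided control of $\p[\Omega_m \neq \emptyset]$ as a function of $m$ and $r$. Two preliminary reductions I would make: (i) Poissonize the placement of centers, so that the numbers of centers falling in disjoint regions become independent Poisson variables, and de-Poissonize at the end; and (ii) replace each ball volume by $\eta$ via the Riemannian volume expansion $\mu(B(y,r)) = v_n r^n(1 + o(1)) = \eta(1 + o(1))$ -- making this replacement uniform over $M$ is the delicate point, discussed at the end.

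\textbf{Upper tail} ($\p[X_r > x] \le Ce^{-x/8}$, $x \ge 0$). Here I want to show that $M$ is covered once $m$ is large enough. First I would fix a maximal $\delta r$-separated set $\{y_1,\dots,y_K\} \subseteq M$, where $\delta \in (0,1)$ is a parameter to be tuned: disjointness of the balls $B(y_k,\delta r/2)$ bounds $K$ by a constant times $\delta^{-n}\eta^{-1}$, and by maximality $\{B(y_k,\delta r)\}$ covers $M$, so that if each $y_k$ lies in some $B(x_i,(1-\delta)r)$ then $M \subseteq \bigcup_i B(x_i,r)$. Since the centers are uniform, $\p[y_k \text{ not } (1-\delta)r\text{-covered by } m \text{ balls}] = (1 - \mu(B(y_k,(1-\delta)r)))^m \le e^{-c\,m\eta}$ where $c = c(\delta) \in (0,1)$ and $c \to 1$ as $\delta \to 0$, and a union bound gives $\p[\Omega_m \neq \emptyset] \le K e^{-c\,m\eta} \le (\text{const})\,\delta^{-n}\eta^{-1} e^{-c\,m\eta}$, which I would rearrange into the stated bound for $X_r = \eta N_r - \log\eta^{-1} - n\log\log\eta^{-1}$. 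Taking $\delta \asymp 1/\log\eta^{-1}$ is what turns a bare $\log\eta^{-1}$ threshold into $\log\eta^{-1} + n\log\log\eta^{-1}$, and tracking the (absolute) constants through this optimization is what pins down the exponent $1/8$.

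\textbf{Lower tail} ($\p[X_r < x] \le Ce^{x}$, $x \le 0$). Here I want the reverse: $M$ is \emph{not} covered, with probability close to $1$, when $m$ is below the covering threshold. I would estimate $\p[\Omega_m \neq \emptyset]$ from below by the second-moment method applied to $W := \mu(\Omega_m)$: by Cauchy--Schwarz, $\p[\Omega_m \neq \emptyset] = \p[W>0] \ge (\e W)^2/\e[W^2]$, with $\e W = \int_M (1-\mu(B(y,r)))^m\, dV(y) = e^{-m\eta(1+o(1))}$ and $\e[W^2] = \int_M\int_M (1-\mu(B(y,r)\cup B(y',r)))^m\, dV(y)\,dV(y')$. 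Splitting this double integral at $d_{M,g}(y,y') = 2r$: on the far part the two balls are disjoint, so the integrand is at most $(1-\mu(B(y,r)))^m(1-\mu(B(y',r)))^m$ and the contribution is at most $(\e W)^2$; on the near part the integrand is at most $(1-\mu(B(y,r)))^m$ while the inner integral of $1$ is $\mu(B(y,2r)) \le (\text{const})\,\eta$, giving a contribution at most $(\text{const})\,\eta\,\e W$. Hence $\e[W^2] \le (\e W)^2 + C\eta\,\e W$, so $\p[\Omega_m \neq \emptyset] \ge \e W/(\e W + C\eta) \ge 1 - C\eta/\e W$, which is already close to $1$ once $m\eta$ is somewhat below $\log\eta^{-1}$. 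This crude bound leaves a spurious polylogarithmic factor; to reach the sharp centering $m\eta \approx \log\eta^{-1} + n\log\log\eta^{-1}$ and the clean decay $Ce^{x}$, I would upgrade the Chebyshev step to a Janson-type inequality (equivalently, a Poisson-clumping analysis of the surviving cells of the ball arrangement), which is precisely where the $n\log\log\eta^{-1}$ term and the exact exponent on $e^{x}$ come from.

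\textbf{The main obstacle.} The hard and genuinely important part is uniformity: every estimate must hold with $r_1$ and $C$ independent of $M$, across \emph{all} compact connected unit-volume Riemannian manifolds. The naive comparison $\mu(B(y,r)) = \eta(1+O(r^2))$ hides a constant governed by the curvature of $M$, which is not assumed bounded, so one cannot invoke the sharp volume asymptotic directly; the resolution is to run the packing/covering and both moment computations using only volume-normalization and doubling-type properties that hold with absolute constants once $r \le r_1$, at the cost of slightly worse (still absolute) constants. The secondary, more bookkeeping, obstacle is isolating the precise correction $n\log\log\eta^{-1}$ and the exact exponents $1/8$ and $1$: this requires tuning the net scale $\delta$ as a function of $\eta$, carrying both moment integrals one order past a crude version, and checking that the errors introduced by replacing $\mu(B(y,r))$ with $\eta$ do not swamp the $\log\log$-size correction one is trying to extract.
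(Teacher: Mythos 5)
The paper does not prove this statement; it is reproduced verbatim as Theorem~1.1 of \cite{flatto_newman} and used as a black box in the proof of Corollary~\ref{cor:cover_prob}, so there is no in-paper proof against which to compare your reconstruction.

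Taken as an independent attempt at Flatto--Newman's result, your sketch sets up the right reduction ($\{N_r > m\} = \{\Omega_m \neq \emptyset\}$) and the natural two-sided strategy (a $\delta r$-net plus union bound for the upper tail, a second-moment argument on the uncovered volume $W$ for the lower tail), and you correctly flag the two genuine obstacles: uniformity of the volume and packing estimates across all compact unit-volume manifolds, and the bookkeeping needed to extract the $n\log\log\eta^{-1}$ centering and the stated exponents. However, the lower tail is not closed. As you yourself observe, $\p[W>0]\ge(\e W)^2/\e[W^2]\ge 1 - C\eta/\e W$ gives roughly $1 - C(\log\eta^{-1})^n e^x$ at the centering $m\eta = \log\eta^{-1}+n\log\log\eta^{-1}+x$, which falls a polylogarithmic factor short of the claimed $1 - Ce^x$; invoking ``a Janson-type inequality'' names a plausible fix but does not supply one, and that missing step is where the content of the theorem lives. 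Note also that Flatto and Newman's 1977 argument predates Janson's inequality, so if the goal is to reconstruct the cited proof rather than to re-prove a statement of similar shape, the intended mechanism is different and would have to be taken from the source. The upper tail, by contrast, does look like it would go through with the tuning $\delta\asymp 1/\log\eta^{-1}$ you propose (and would likely give a sharper exponent than $1/8$), modulo the uniformity issue, which you also leave at the level of a plan rather than a resolved step.
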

\begin{corollary}\label{cor:cover_prob}
Let $(M, g)$ be a compact, connected Riemannian manifold. Suppose $X$ is a point cloud that consists of $N$ points sampled uniformly at random from $M$. Suppose $w(N)$ is a sequence such that $w(N) \to \infty$ and $w(N)/\log N \to 0$.
\begin{enumerate}
\item If $r$ is such that $\Lambda = \log N + (n-1)\log \log N + w(N)$, then $\p[r^{\textnormal{cover}} > r] \to 0$ as $N \to \infty$.
\item If $r$ is such that $\Lambda = \log N + (n-1) \log \log N - w(N)$, then $\p[r^{\textnormal{cover}} < r] \to 0$ as $N \to \infty$.
\end{enumerate}
\end{corollary}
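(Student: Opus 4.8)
The plan is to reduce Corollary~\ref{cor:cover_prob} to Theorem~\ref{thm:flatto} and then translate the two hypotheses on the filling factor $\Lambda$ into the asymptotics of the auxiliary variable $X_r$. First I would reduce to the case $\mu(M) = 1$: replacing $g$ by the homothetic metric $\mu(M)^{-2/n}g$ multiplies every Riemannian distance by $\mu(M)^{-1/n}$, hence multiplies $r^{\textnormal{cover}}$ and any fixed radius $r$ by the same constant, while leaving $\Lambda = Nv_nr^n/\mu(M)$ unchanged and preserving the uniform distribution on $M$. So it suffices to treat the unit-volume case, in which $\Lambda = N\eta$ with $\eta = v_nr^n$.

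Next I would rewrite the events $\{r^{\textnormal{cover}} > r\}$ and $\{r^{\textnormal{cover}} < r\}$ in terms of $N_r$. Taking i.i.d.\ uniform points $Y_1, Y_2, \dots$ on $M$ and setting $N_r = \min\{k : M \subseteq \bigcup_{i=1}^k B(Y_i,r)\}$ (finite a.s.\ by compactness), the coverage radius $r^{\textnormal{cover}}$ of $X := \{Y_1,\dots,Y_N\}$ — which has the prescribed distribution — satisfies $\{r^{\textnormal{cover}} \le r\} = \{N_r \le N\}$, since the balls are closed and $M$ is compact. Because $\eta > 0$, dividing through by $\eta$ in the identity defining $X_r$ shows $\{N_r \le N\} = \{X_r \le x_N\}$ and $\{N_r \ge N+1\} \subseteq \{X_r > x_N\}$, where $x_N := \Lambda - \log(\eta^{-1}) - n\log\log(\eta^{-1})$. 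Thus $\p[r^{\textnormal{cover}} > r] \le \p[X_r > x_N]$ and $\p[r^{\textnormal{cover}} < r] \le \p[N_r \le N] = \p[X_r \le x_N]$, and it remains to estimate $x_N$.

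Then I would compute the asymptotics of $x_N$. In both cases $w(N) = o(\log N)$ and $(n-1)\log\log N = o(\log N)$, so $\Lambda = \log N + o(\log N)$; hence $\log \Lambda = \log\log N + o(1)$, $\log(\eta^{-1}) = \log N - \log\Lambda = \log N - \log\log N + o(1)$, and (this being again of the form $\log N + o(\log N)$) $\log\log(\eta^{-1}) = \log\log N + o(1)$. Substituting into $x_N$, the $\log N$ contributions cancel, the $\log\log N$ contributions combine with coefficient $(n-1) + 1 - n = 0$, and one is left with $x_N = \pm w(N) + o(1)$, the sign matching that in the hypothesis. Since $\eta = \Lambda/N \to 0$, the radius $r = (\eta/v_n)^{1/n} \to 0$, so $r \le r_1$ for all large $N$ and Theorem~\ref{thm:flatto} applies. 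In case (1), $x_N \to \infty$, so $x_N \ge 0$ eventually and $\p[r^{\textnormal{cover}} > r] \le \p[X_r > x_N] \le Ce^{-x_N/8} \to 0$. In case (2), $x_N \to -\infty$, so $x_N + 1 \le 0$ eventually and $\p[r^{\textnormal{cover}} < r] \le \p[X_r \le x_N] \le \p[X_r < x_N+1] \le Ce^{\,x_N+1} \to 0$.

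I expect the main obstacle to be the asymptotic bookkeeping in the last step: one must expand $\log\Lambda$ and $\log\log(\eta^{-1})$ precisely enough to see that the $(n-1)\log\log N$ term in $\Lambda$ together with the $+\log\log N$ coming from $-\log(\eta^{-1})$ cancels the $-n\log\log N$ coming from $-n\log\log(\eta^{-1})$, leaving only the $\pm w(N)$ term. The remaining ingredients — the homothety reduction, the identification of $\{r^{\textnormal{cover}} > r\}$ and $\{r^{\textnormal{cover}} < r\}$ with events for $N_r$, and verifying that $r \le r_1$ and $x_N \ge 0$ (resp.\ $x_N \le 0$) hold for large $N$ — are routine.
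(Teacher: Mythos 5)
Your proposal follows the same route as the paper: reduce to $\mu(M)=1$ by a homothety (which preserves $\Lambda$ and the uniform measure), translate the coverage events into events about $N_r$ and hence $X_r$, and then carry out the asymptotic expansion to show $x_N = \pm w(N) + o(1)$ before invoking Theorem~\ref{thm:flatto}. Your handling of the event translations via inequalities rather than the paper's claimed equalities is slightly more careful but amounts to the same estimate, and your asymptotic bookkeeping is just a reorganization of the paper's computation.
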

\begin{proof}
\underline{Case 1: $\mu(M) = 1$} 

In this case, the structure of our proof is similar to that of Corollary B.2 in \cite{vanishing_homology}. First, we observe that the radius of the balls can be expressed by
\begin{equation*}
    r = \sqrt[n]{\frac{\mu(M)\Lambda}{Nv_n}}.
\end{equation*}
If $N$ is sufficiently large and $\Lambda = \log N + (n-1)\log\log N \pm w(N)$, then $r < r_1$. Let $x = \eta N - \log(\eta^{-1}) - n \log \log (\eta^{-1})$. 
\begin{enumerate}
\item If $\Lambda = \log N + (n-1)\log \log N + w(N)$, then
\begin{equation}\label{eq:coverage_upper}
    \p[r^{\text{cover}} > r] = \p[N_r > N] = \p[X_r > x] \leq Ce^{-x/8}
\end{equation}
by Theorem \ref{thm:flatto}. Because $\eta = N^{-1}\Lambda$, we have
\begin{equation*}
    x = \Lambda - \log N + \log \Lambda - n \log(\log N - \log \Lambda).
\end{equation*}
We expand the first term as $\Lambda = \log N + (n-1)\log \log N + w(N)$ to get
\begin{equation*}
    x = [\log \Lambda - \log \log N] + n[\log \log N - \log (\log N - \log \Lambda)] + w(N).
\end{equation*}
Because $w(N)/\log N \to 0$, we have
\begin{align*}
    \lim_{N \to \infty} [\log \Lambda - \log \log N] &= \log\Big(\lim_{N \to \infty} \frac{\Lambda}{\log N}\Big) = 0\,, \\
    \lim_{N \to \infty}[\log \log N - \log (\log N - \log \Lambda)] &= - \log\Big(1 - \lim_{N \to \infty} \frac{\log \Lambda}{\log N}\Big) = 0\,.
\end{align*}
Therefore, $x \to \infty$ as $N \to \infty$. By Equation \ref{eq:coverage_upper}, $\p[r^{\text{cover}} > r] \to 0$.

\item If $\Lambda = \log N + (n-1)\log \log N - w(N)$, then
\begin{equation}\label{eq:coverage_lower}
    \p[r^{\text{cover}} < r] = \p[N_r < N] = \p[X_r < x] \leq Ce^x
\end{equation}
by Theorem \ref{thm:flatto}. Similarly to above, we have
\begin{equation*}
    x = [\log \Lambda - \log \log N] + n[\log \log N - \log (\log N - \log \Lambda)] - w(N)\,,
\end{equation*}
so $x \to -\infty$ as $N \to \infty$. By Equation \ref{eq:coverage_lower}, $\p[r^{\text{cover}} < r] \to 0$.
\end{enumerate}

\underline{Case 2: $\mu(M) \neq 1$}

Let $(M, \overline{g})$ be the Riemannian manifold that is normalized to have unit volume. Let $\overline{\Lambda}$ denote the filling factor for $(M, \overline{g})$ and let $\overline{r^{\text{cover}}}$ be the coverage radius for the point cloud $X$ in $(M, \overline{g})$. In $(M, \overline{g})$, the Riemannian distance function is $d_{M, \overline{g}} = \sqrt[-n]{\mu(M)}d_{M, g}$. Therefore, for any $r$, we have
\begin{align*}
    \p[r^{\text{cover}} > r] &= \p\Big[\overline{r^{\text{cover}}} > r/\sqrt[n]{\mu(M)}\Big]\,, \\
    \p[r^{\text{cover}} < r] &= \p\Big[\overline{r^{\text{cover}}} < r/\sqrt[n]{\mu(M)}\Big].
\end{align*}
When the radius of the balls in $(M, \overline{g})$ is $r/\sqrt[n]{\mu(M)}$, the filling factor in $(M, \overline{g})$ is 
\begin{equation*}
    \overline{\Lambda} = N v_n\Big(r/\sqrt[n]{\mu(M)}\Big)^n = \Lambda.
\end{equation*}
Applying Case 1 to $(M, \overline{g})$ completes the proof.
\end{proof}
This shows that on a compact, connected Riemannian manifold $M$ from which points are sampled uniformly at random, there is a threshold filling factor 
\begin{equation}\label{eq:filling_thresh}
    \Lambda^* := \log N + (n-1)\log \log N\,,
\end{equation}
above which the balls are likely to cover $M$ and below which the balls are unlikely to cover $M$. There is a corresponding threshold radius $r^* := \sqrt[n]{\frac{\mu(M) \Lambda^*}{Nv_n}}$. The threshold radius on $(M, \g_N)$ is
\begin{equation}\label{eq:threshold}
    r^*_N :=  \sqrt[n]{\frac{\alpha(N)\Lambda^*}{Nv_n}}.
\end{equation}
By Equation \ref{eq:alpha_conditions}, we have that $r^*_N \to 0$ as $N \to \infty$.
\begin{lemma}\label{lem:cover_zero}
Let $M$ be a compact, connected Riemannian manifold, and let $f: M \to (0, \infty)$ be a smooth probability density function from which $N$ points are sampled. Then $r^{\textnormal{cover}}_N \to 0$ in probability as $N \to \infty$. Moreover, $r^{\textnormal{cover}}_N/r^*_N \to 1$ in probability.
\end{lemma}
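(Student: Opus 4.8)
The plan is to reduce the entire statement to a single fixed, compact, connected, unit-volume Riemannian manifold and then quote Corollary~\ref{cor:cover_prob}. The essential point is that all the density-scaled metrics $\g_N = \sqrt[n]{\alpha(N)^2 f^2}\,g$ differ only by a constant conformal rescaling: since $\alpha(1) = 1$ we have $\g_1 = \sqrt[n]{f^2}\,g$ and $\g_N = \alpha(N)^{2/n}\,\g_1$, so $d_{M,\g_N} = \alpha(N)^{1/n}\,d_{M,\g_1}$ and a $\g_N$-ball of radius $r$ is a $\g_1$-ball of radius $r/\alpha(N)^{1/n}$. Hence, writing $\rho_N$ for the coverage radius of $X$ in $(M,\g_1)$, we have $r^{\textnormal{cover}}_N = \alpha(N)^{1/n}\rho_N$; and since $r^*_N = \alpha(N)^{1/n}\,\tilde r^*$ with $\tilde r^* := \sqrt[n]{\Lambda^*/(Nv_n)}$ by Equation~\ref{eq:threshold}, we get $r^{\textnormal{cover}}_N/r^*_N = \rho_N/\tilde r^*$. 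Moreover, by the computation in Section~\ref{sec:def_Mscaled}, the volume form of $(M,\g_1)$ is $d\widetilde{V} = f\,dV$, so $(M,\g_1)$ has unit volume and the sampling measure $\p[A] = \int_A f\,dV$ is exactly its uniform (normalized Riemannian) measure. Thus $X$ consists of $N$ points sampled uniformly at random from the \emph{fixed} compact connected unit-volume manifold $(M,\g_1)$, and it suffices to show that $\rho_N/\tilde r^* \to 1$ in probability and that $r^{\textnormal{cover}}_N \to 0$ in probability.

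Next I would apply Corollary~\ref{cor:cover_prob} to $(M,\g_1)$, where the filling factor of $N$ balls of radius $\rho$ is $N v_n \rho^n$ because the volume is $1$. Fix any sequence with $w(N) \to \infty$ and $w(N)/\log N \to 0$, for instance $w(N) = \sqrt{\log N}$, and for large $N$ set $\rho^{\pm}_N := \sqrt[n]{(\Lambda^* \pm w(N))/(Nv_n)}$, i.e.\ the radius at which the filling factor equals $\Lambda^* \pm w(N) = \log N + (n-1)\log\log N \pm w(N)$. Corollary~\ref{cor:cover_prob} then gives $\p[\rho_N > \rho^+_N] \to 0$ and $\p[\rho_N < \rho^-_N] \to 0$. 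Since $\Lambda^* \sim \log N$ we have $w(N)/\Lambda^* \to 0$, and therefore $\rho^{\pm}_N/\tilde r^* = \sqrt[n]{1 \pm w(N)/\Lambda^*} \to 1$.

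Finally I would convert the two one-sided tail bounds into the claimed convergence in probability by sandwiching. Given $\varepsilon \in (0,1)$, choose $N_0$ so that $1-\varepsilon < \rho^-_N/\tilde r^* \le \rho^+_N/\tilde r^* < 1+\varepsilon$ for all $N \ge N_0$ (possible since both ratios tend to $1$); then $\p[\,|\rho_N/\tilde r^* - 1| \ge \varepsilon\,] \le \p[\rho_N < \rho^-_N] + \p[\rho_N > \rho^+_N] \to 0$, which gives $\rho_N/\tilde r^* \to 1$ in probability and hence $r^{\textnormal{cover}}_N/r^*_N \to 1$ in probability. For the first claim, with probability tending to $1$ we have $\rho_N \le \rho^+_N \le 2\tilde r^*$ for large $N$, i.e.\ $r^{\textnormal{cover}}_N \le 2r^*_N$; since $r^*_N \to 0$ deterministically by Equation~\ref{eq:alpha_conditions} (as observed after Equation~\ref{eq:threshold}) and $r^{\textnormal{cover}}_N \ge 0$, this forces $r^{\textnormal{cover}}_N \to 0$ in probability. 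I expect the only real subtlety to be the reduction in the first paragraph — recognizing that $(M,\g_N)$ is a constant rescaling of the fixed unit-volume manifold $(M,\g_1)$ on which the prescribed density sampling is uniform sampling, so that Corollary~\ref{cor:cover_prob} (which is stated for a fixed manifold with uniform sampling) applies without modification; once that is in place, the rest is the routine bookkeeping of choosing a slowly growing $w(N)$ and sandwiching.
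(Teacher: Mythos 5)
Your proof is correct and follows the same strategy as the paper's: sandwich the coverage radius between radii whose filling factors are $\Lambda^* \pm w(N)$ and invoke Corollary~\ref{cor:cover_prob}, then use $r^*_N \to 0$ for the first claim. The one refinement you make is to explicitly pull everything back to the fixed unit-volume manifold $(M, \g_1)$ via $\g_N = \alpha(N)^{2/n}\g_1$ before applying the corollary; the paper instead works directly in the $N$-dependent manifold $(M, \g_N)$, implicitly relying on the same scale-invariance of the filling factor, so yours is arguably a cleaner justification of the same step rather than a different argument.
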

\begin{proof}
Let $w(N)$ be a sequence such that $w(N) \to \infty$ and $w(N)/\log N \to 0$. Define the sequence of filling factors
\begin{equation*}
    \Lambda_N^{\pm} = \log N + (n-1)\log \log N \pm w(N)\,,
\end{equation*}
and define $r_N^{\pm}$ to be
\begin{equation*}
    r_N^{\pm} :=  \sqrt[n]{\frac{\alpha(N)\Lambda_N^{\pm}}{Nv_n}}\,,
\end{equation*}
which is the radius that corresponds to a filling factor of $\Lambda_N^{\pm}$ on $(M, \g_N)$. Note that $\frac{r_N^{\pm}}{r^*_N} \to 1$, where $r^*_N$ is defined as in Equation \ref{eq:threshold}. Because $r^*_N \to 0$, it must be true that $r_N^{\pm} \to 0$.

Let $\epsilon > 0$. For sufficiently large $N$, we have $r_N^+ < \epsilon$, so $\p[r^{\text{cover}}_N > \epsilon] < \p[r^{\text{cover}}_N > r_N^+]$. Applying Corollary \ref{cor:cover_prob} proves the first statement of the lemma. For sufficiently large $N$, we have 
\begin{equation*}
    1 - \epsilon \leq \frac{r_N^-}{r_N^*}  < 1 < \frac{r_N^+}{r_N^*} \leq 1 + \epsilon\,,
\end{equation*}
so
\begin{equation*}
\p\Big[\Big|\frac{r^{\text{cover}}_N}{r^*_N} - 1\Big| < \epsilon\Big] > \p\Big[\frac{r_N^-}{r^*_N} < \frac{r^{\text{cover}}_N}{r^*_N} < \frac{r_N^+}{r{*, N}}\Big] = \p[r_N^- < r^{\text{cover}}_N < r_N^+].
\end{equation*}
Applying Corollary \ref{cor:cover_prob} completes the proof.
\end{proof}


\section{Conformal Invariance}\label{sec:invariance}

Let $(M_1, g_1)$ and $(M_2, g_2)$ be Riemannian manifolds, and let $F: M_2 \to M_1$ be a diffeomorphism. If $f_1: M_1 \to (0, \infty)$ is a smooth probability density function, then we can pull back $f_1$ to a probability density function $f_2: M_2 \to (0, \infty)$ as follows.
\begin{definition}[Pullback of a Probability Density Function]\label{def:pdf_pullback}
The \emph{pullback of $f_1$ under $F$} is the function $f_2 : M_2 \to (0, \infty)$ such that $f_2dV_2 = F^*(f_1dV_1)$. The probability density function $f_2$ exists because the space of $n$-forms on an $n$-dimensional manifold is spanned by $dV_2$.
\end{definition}
The pullback of a probability density function is defined such that sampling a point cloud $Y$ from  $f_2$ is equivalent to sampling a point cloud $X$ from $f_1$ and setting $Y = F^{-1}(X)$.
\begin{proposition}\label{prop:densitypullback}
Suppose $x$ is sampled from $f_1 : M_1 \to (0, \infty)$ and let $y_1 = F^{-1}(x)$. Suppose $y_2$ is sampled from $f_2 : M_2 \to (0, \infty)$, where $f_2$ is the pullback of $f_1$ defined by Definition \ref{def:pdf_pullback}. Then $y_1$ and $y_2$ are identically distributed.
\end{proposition}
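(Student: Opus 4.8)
The plan is to show that $y_1$ and $y_2$ induce the same probability measure on $M_2$, i.e., that $\p[y_1 \in A] = \p[y_2 \in A]$ for every Borel set $A \subseteq M_2$; since the law of an $M_2$-valued random variable is exactly this pushforward measure, equality of the two measures gives the proposition. First I would compute the left-hand side. Since $F$ is a diffeomorphism it is in particular a homeomorphism, so $F(A)$ is Borel in $M_1$, and the event $\{y_1 \in A\} = \{F^{-1}(x) \in A\}$ is the same as $\{x \in F(A)\}$. As $x$ has density $f_1$, this gives $\p[y_1 \in A] = \p[x \in F(A)] = \int_{F(A)} f_1\, dV_1$. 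For the right-hand side I would use the definitions directly: $\p[y_2 \in A] = \int_A f_2\, dV_2 = \int_A F^*(f_1\, dV_1)$, where the last equality is the defining property of the pullback density $f_2$ from Definition \ref{def:pdf_pullback}.

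It then remains to verify the identity $\int_{F(A)} f_1\, dV_1 = \int_A F^*(f_1\, dV_1)$, which is precisely the change-of-variables formula for integration of differential $n$-forms under a diffeomorphism: for an orientation-preserving diffeomorphism $F$ one has $\int_{F(A)} \omega = \int_A F^*\omega$ for every integrable $n$-form $\omega$, applied here with $\omega = f_1\, dV_1$. Combining the three computations yields $\p[y_1 \in A] = \p[y_2 \in A]$ for all Borel $A$, which is the claim.

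The closest thing to an obstacle is orientation: the volume forms $dV_1$ and $dV_2$ presuppose a choice of orientation, and $M_1$, $M_2$ need not be orientable. I would handle this in the standard way by passing to the associated Riemannian densities $\vert dV_i \vert$, for which the Riemannian measure $\mu_i$ is defined regardless of orientability; the pullback relation becomes $f_2 = (f_1 \circ F)\,\vert \det dF \vert$ in local coordinates (equivalently $f_2\, d\mu_2 = F^*(f_1\, d\mu_1)$ as measures), and the change-of-variables theorem $\int_{F(A)} h\, d\mu_1 = \int_A (h \circ F)\,\vert \det dF\vert\, d\mu_2$ holds with no orientation hypothesis. (Alternatively, one may argue on each connected component separately, where an orientation can be chosen independently so that $F$ is orientation-preserving.) Everything else — measurability of $F(A)$, $\sigma$-additivity, the positivity of $f_1, f_2$ — is routine, so the proof is essentially the invocation of this change-of-variables formula.
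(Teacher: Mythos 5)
Your proof is correct and follows essentially the same chain of equalities as the paper: $\p[y_1 \in A] = \p[x \in F(A)] = \int_{F(A)} f_1\,dV_1 = \int_A f_2\,dV_2 = \p[y_2 \in A]$, with the middle step being change of variables for pulling back $n$-forms. Your additional remark on orientability (passing to Riemannian densities or working componentwise) is a careful refinement the paper leaves implicit, but it does not change the underlying argument.
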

\begin{proof}
If $A \subseteq M_2$ is a Borel set, then
\begin{equation*}
    \p[y_1 \in A] = \p[x \in F(A)] = \int_{F(A)} f_1dV_1 = \int_A f_2 dV_2 = \p[y_2 \in A]\,.
\end{equation*}
\end{proof}
Prop \ref{prop:densitypullback} justifies a comparison of $D\Sigma(M_1, g_1, f_1, \cdot)$ to $D\Sigma(M_2, g_2, f_2, F^{-1}(\cdot))$. Below, we define what we mean by an \emph{isomorphism} of two filtered complexes and what we mean by \emph{invariance} of a filtered complex.
\begin{definition}[Isomorphism of Filtered Complexes]\label{def:isomorphism}
Let $\K^1 = \{\K^1_r\}_{r \in \mathbb{R}}$ and $\K^2 = \{K^2_r\}_{r \in \mathbb{R}}$ be filtered complexes, and let $V_r^i$, $S_r^i$ be the sets of vertices and simplices, respectively, of $K^i_r$. Let $V^i = \bigcup_r V^i_r$ be the set of all vertices of $\K^i$. We say that $\K^1$ and $\K^2$ are \emph{isomorphic} if there is a bijective map $\phi: V^1 \to V^2$ such that $\phi$ induces bijections $V_r^1 \to V_r^2$ and $S_r^1 \to S_r^2$ for all $r$.
\end{definition}
\begin{definition}[Invariance]\label{def:pointwise_invar}
Let $(M_1, g_1)$ and $(M_2, g_2)$ be Riemannian manifolds, and let $F: M_2 \to M_1$ be a diffeomorphism. A density-scaled complex $D\Sigma$ is \textit{invariant under $F$} if $D\Sigma(M_1, g_1, f_1, X)$ is isomorphic to $D\Sigma(M_2, g_2, f_2, F^{-1}(X))$ for all smooth probability density functions $f_1: M_1 \to (0, \infty)$ and point clouds $X$ sampled from $f_1$, where $f_2: M_2 \to (0, \infty)$ is the pullback of $f_1$ defined by Definition \ref{def:pdf_pullback}.
\end{definition}

We restrict ourselves to a suitable class of distance-based filtered complexes $\Sigma(M, d, X)$ that are \emph{invariant under global isometry}. This class includes the \v{C}ech complex, the Vietoris--Rips complex, and many other standard distance-based filtered complexes.

\begin{definition}[Invariance Under Global Isometry] Let $(M_1, d_1)$ and $(M_2, d_2)$ be metric spaces. A distance-based filtered complex $\Sigma(M, d, X)$ is \emph{invariant under global isometry} if $\Sigma(M_1, d_1, X)$ is isomorphic to $\Sigma(M_2, d_2, F^{-1}(X))$ for all global isometries $F: M_2 \to M_1$ and all point clouds $X$ in $M_1$.
\end{definition}

Theorem \ref{thm:local_invar} below shows in particular that the density-scaled \v{C}ech complex $\dc$ and the density-scaled Vietoris--Rips complex DVR are invariant under all conformal transformations. As a corollary, this implies that they are invariant under global scaling (Corollary \ref{cor:global_invar}). Additionally, they are invariant under diffeomorphisms of $1$-dimensional manifolds (Corollary \ref{cor:1dim_invar}).

\begin{theorem}\label{thm:local_invar} 
Suppose that $\Sigma(M, d, X)$ is a distance-based filtered complex that is invariant under global isometry, and let $D\Sigma$ be the density-scaled filtered complex. Then $D\Sigma$ is invariant under all conformal transformations.
\end{theorem}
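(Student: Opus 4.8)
The plan is to show that if $F : (M_2, g_2) \to (M_1, g_1)$ is a conformal transformation, then $F$ becomes a \emph{global isometry} between the density-scaled manifolds $(M_2, \g_2)$ and $(M_1, \g_1)$, where $\g_i = \sqrt[n]{\alpha(N)^2 f_i^2}\, g_i$ and $f_2$ is the pullback of $f_1$ under $F$ (Definition \ref{def:pdf_pullback}). Once this is established, the conclusion follows immediately: for any point cloud $X$ sampled from $f_1$, the complex $D\Sigma(M_1, g_1, f_1, X) = \Sigma(M_1, d_{M_1, \g_1}, X)$ and $D\Sigma(M_2, g_2, f_2, F^{-1}(X)) = \Sigma(M_2, d_{M_2, \g_2}, F^{-1}(X))$ are isomorphic because $\Sigma$ is assumed invariant under global isometry and $F^{-1}$ carries one density-scaled metric space isometrically onto the other.

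First I would unwind the definitions. Since $F$ is conformal, $F^*g_1 = \lambda g_1'$... more precisely there is a positive $C^\infty$ function $\lambda : M_2 \to \mathbb{R}$ with $F^*g_1 = \lambda g_2$. This gives a relation between the volume forms: $F^*(dV_1) = \lambda^{n/2}\, dV_2$. Combining this with the defining equation $f_2\, dV_2 = F^*(f_1\, dV_1) = (f_1 \circ F)\,\lambda^{n/2}\, dV_2$ yields $f_2 = (f_1 \circ F)\,\lambda^{n/2}$. The key computation is then to pull back the density-scaled metric $\g_1 = \sqrt[n]{\alpha(N)^2 f_1^2}\, g_1$:
\begin{equation*}
    F^*\g_1 = \sqrt[n]{\alpha(N)^2 (f_1 \circ F)^2}\, F^*g_1 = \sqrt[n]{\alpha(N)^2 (f_1\circ F)^2}\,\lambda\, g_2.
\end{equation*}
Using $f_1 \circ F = f_2\, \lambda^{-n/2}$, we get $\sqrt[n]{(f_1 \circ F)^2} = f_2^{2/n}\,\lambda^{-1}$, so the factors of $\lambda$ cancel and $F^*\g_1 = \sqrt[n]{\alpha(N)^2 f_2^2}\, g_2 = \g_2$. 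Hence $F : (M_2, \g_2) \to (M_1, \g_1)$ is an isometry of Riemannian manifolds, and therefore an isometry of the induced metric spaces, so $d_{M_1, \g_1}(F(x), F(y)) = d_{M_2, \g_2}(x, y)$ for all $x, y \in M_2$.

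I expect the main obstacle — really the only nontrivial point — to be the bookkeeping of conformal factors across the three ingredients (the metric, the volume form, and the pulled-back density), and in particular verifying that the power $\lambda^{n/2}$ appearing in the volume-form transformation is exactly the one needed for the $\lambda$'s to cancel in the density-scaled metric. This cancellation is the whole point of the $\sqrt[n]{f^2}$ scaling and is what makes $\g$ conformally natural; it should be checked carefully in local coordinates using $dV_i = \sqrt{|g_i|}\, dx^1 \wedge \cdots \wedge dx^n$ and the fact that scaling a metric by $\lambda$ scales $\sqrt{|g|}$ by $\lambda^{n/2}$. After that, the reduction to the global-isometry-invariance hypothesis on $\Sigma$ is purely formal, and one should also note that since $X$ is sampled from $f_1$, the point cloud $F^{-1}(X)$ is distributed according to $f_2$ by Proposition \ref{prop:densitypullback}, which is what makes the comparison in Definition \ref{def:pointwise_invar} the correct one.
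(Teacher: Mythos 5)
Your proposal is correct and follows essentially the same route as the paper: you show that the conformal map $F:(M_2,g_2)\to(M_1,g_1)$ pulls the density-scaled metric back exactly, $F^*\g_1 = \g_2$ (via $F^*dV_1 = \lambda^{n/2}dV_2$, hence $f_2 = (f_1\circ F)\lambda^{n/2}$, and the resulting cancellation of $\lambda$), so $F$ is a global isometry of the density-scaled manifolds, and then you invoke the global-isometry invariance of $\Sigma$. The paper packages the pullback computation into two lemmas (Lemma \ref{lem:volume_transform} and Lemma \ref{lem:F_induces_iso}) but the argument is the same.
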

\begin{proof}
Suppose that $(M_1, g_1)$ and $(M_2, g_2)$ are Riemannian manifolds with Riemannian distance functions $d_{M_1, g_1}$ and $d_{M_2, g_2}$, respectively. Let $f_1: M_1 \to (0, \infty)$ be a probability density function on $M_1$. Suppose $F: M_2 \to M_1$ is a conformal transformation, and let $f_2$ be the pullback of $f_1$ defined by Definition \ref{def:pdf_pullback}. Let $\g_1$, $\g_2$ be the density-scaled Riemannian metrics, and suppose that $X$ is a point cloud that consists of $N$ points sampled from $f_1$. By Lemma \ref{lem:F_induces_iso} in Appendix \ref{sec:lemmas}, $\Sigma(M_1, d_{M_1, \g_1}, X)$ is isomorphic to $\Sigma( M_2, d_{M_2, \g_2}, F^{-1}(X))$ because $\Sigma$ is invariant under global isometry. Thus $D\Sigma(M_1, g_1, f_1, X)$ is isomorphic to $D\Sigma(M_2, g_2, f_2, F^{-1}(X))$. 
\end{proof}
\begin{corollary}[Density-Scaled Complexes are Invariant Under Global Scaling]\label{cor:global_invar}
Let $M$ be a submanifold of $\mathbb{R}^m$ with the Euclidean-induced Riemannian metric $g_M$. Suppose $L:\mathbb{R}^m \to \mathbb{R}^m$ is the linear transformation $L(\bm{x}) = \lambda\bm{x}$ for some $\lambda \in (0, \infty)$. If $\Sigma(M, d, X)$ is a distance-based filtered complex that is invariant under global isometry, then the density-scaled complex $D\Sigma$ is invariant under $L^{-1}$.
\end{corollary}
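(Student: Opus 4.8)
The plan is to exhibit the global scaling $L^{-1}$ as a conformal transformation between two embedded submanifolds of $\mathbb{R}^m$ and then quote Theorem \ref{thm:local_invar}. First I would note that, since $L(\bm{x}) = \lambda\bm{x}$ is a linear automorphism of $\mathbb{R}^m$, the image $L(M)$ is again an embedded submanifold, which I endow with its Euclidean-induced Riemannian metric $g_{L(M)}$, and the restriction $L^{-1}\colon L(M) \to M$ is a diffeomorphism. To match Definition \ref{def:pointwise_invar} with $F = L^{-1}$, I take $M_1 = M$ (with metric $g_M$) and $M_2 = L(M)$ (with metric $g_{L(M)}$), so that ``invariance under $L^{-1}$'' is precisely the assertion that $D\Sigma(M, g_M, f_1, X)$ is isomorphic to $D\Sigma(L(M), g_{L(M)}, f_2, L(X))$ for every smooth density $f_1$ on $M$, every point cloud $X$ sampled from $f_1$, and $f_2$ the pullback of $f_1$ under $L^{-1}$ (Definition \ref{def:pdf_pullback}).

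Next I would verify that $L^{-1}\colon L(M) \to M$ is a conformal transformation in the sense of Section \ref{sec:Rgeom}. This reduces to the elementary fact that a Euclidean scaling pulls the ambient metric back to a constant multiple of itself: for $\bm{x} \in L(M)$ and $v, w \in T_{\bm{x}}(L(M))$ one computes $\big((L^{-1})^*\bar{g}\big)_{\bm{x}}(v, w) = \bar{g}(\lambda^{-1}v, \lambda^{-1}w) = \lambda^{-2}\,\bar{g}_{\bm{x}}(v, w)$, and restricting to vectors tangent to $L(M)$ gives $(L^{-1})^* g_M = \lambda^{-2}\, g_{L(M)}$. Since $\lambda^{-2}$ is a strictly positive constant — hence a positive $C^{\infty}$ function — this is exactly the defining condition for $L^{-1}$ to be a conformal transformation.

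Finally, I would apply Theorem \ref{thm:local_invar}: $\Sigma$ is invariant under global isometry by hypothesis, so $D\Sigma$ is invariant under every conformal transformation, in particular under $F = L^{-1}$, which is the desired conclusion. There is no substantive obstacle here; the only point requiring care is the bookkeeping of Definition \ref{def:pointwise_invar} — making sure the submanifolds $M_1$ and $M_2$ are assigned so that the map whose invariance one checks is $L^{-1}$ (equivalently, one could run the same argument with the conformal map $L\colon M \to L(M)$) — together with the observation that a constant conformal factor is permitted in the definition of a conformal transformation.
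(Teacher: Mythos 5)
Your proposal is correct and follows the same route as the paper: it identifies $L^{-1}\colon L(M)\to M$ as a conformal transformation by computing $(L^{-1})^*g_M = \lambda^{-2} g_{L(M)}$, then invokes Theorem \ref{thm:local_invar}. You spell out the tangent-space computation and the Definition \ref{def:pointwise_invar} bookkeeping that the paper leaves implicit, but there is no substantive difference.
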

\begin{proof}
Let $L(M)$ be the image of $M$ under $L$, and let $g_{L(M)}$ be the Euclidean-induced Riemannian metric on $L(M)$. The map $L^{-1}$ is a conformal transformation because $(L^{-1})^*g_M = \lambda^{-2} g_{L(M)}$.
\end{proof}
\begin{corollary}\label{cor:1dim_invar}
Suppose that $\Sigma(M, d, X)$ is a distance-based filtered complex that is invariant under global isometry, and let $D\Sigma$ be the density-scaled complex. Then $D\Sigma$ is invariant under all diffeomorphisms of $1$-dimensional manifolds.
\end{corollary}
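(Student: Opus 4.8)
The plan is to reduce immediately to Theorem~\ref{thm:local_invar} by showing that \emph{every} diffeomorphism between $1$-dimensional Riemannian manifolds is a conformal transformation. Once that is established, the corollary is instant: if $(M_1, g_1)$ and $(M_2, g_2)$ are $1$-dimensional Riemannian manifolds and $F: M_2 \to M_1$ is a diffeomorphism, then $F$ is a conformal transformation, so $D\Sigma$ is invariant under $F$ in the sense of Definition~\ref{def:pointwise_invar} by Theorem~\ref{thm:local_invar}.

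To show that $F: (M_2, g_2) \to (M_1, g_1)$ is conformal, I need to produce a strictly positive $C^\infty$ function $\lambda: M_2 \to \mathbb{R}$ with $F^*g_1 = \lambda g_2$. The heart of the matter is linear-algebraic and pointwise: for each $p \in M_2$, the tangent space $T_pM_2$ is one-dimensional, so the space of symmetric bilinear forms on $T_pM_2$ is also one-dimensional. Both $(g_2)_p$ and $(F^*g_1)_p = (g_1)_{F(p)}(dF_p\,\cdot\,, dF_p\,\cdot\,)$ are positive-definite symmetric bilinear forms on $T_pM_2$ — the latter because $dF_p$ is a linear isomorphism and $g_1$ is a Riemannian metric — hence they are proportional by a strictly positive scalar $\lambda(p)$.

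It then remains to check that $p \mapsto \lambda(p)$ is smooth and that the pointwise proportionality constants fit together globally. In a local coordinate chart on $M_2$ one has $g_2 = a(x)\,dx \otimes dx$ and $F^*g_1 = b(x)\,dx \otimes dx$ for smooth, strictly positive functions $a$ and $b$ (positivity because $g_2$ and $F^*g_1$ are metrics), so $\lambda = b/a$ is smooth and positive on the chart; since $\lambda$ is defined coordinate-independently, these local expressions agree on overlaps and patch to a global $C^\infty$ positive function with $F^*g_1 = \lambda g_2$. Thus $F$ is a conformal transformation, and Theorem~\ref{thm:local_invar} finishes the proof.

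I do not expect a genuine obstacle here: the essential content is simply that in dimension $1$ there is ``no room'' for a diffeomorphism to move the conformal class of a metric (a phenomenon that fails in every dimension $\geq 2$). The only points requiring a little care are the positivity and smoothness of $\lambda$ and the global consistency of the local constants, both of which are routine via coordinate charts as sketched above.
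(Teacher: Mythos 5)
Your proposal is correct and follows exactly the same route as the paper: reduce to Theorem~\ref{thm:local_invar} by observing that on $1$-dimensional tangent spaces any two Riemannian metrics are pointwise proportional, so $F^*g_1 = \lambda g_2$ for a positive smooth $\lambda$. The paper states this in a single sentence; you simply fill in the pointwise linear-algebra and the smoothness/patching check that the paper leaves implicit.
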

\begin{proof}
Let $F: M_2 \to M_1$ be a diffeomorphism between $1$-dimensional Riemannian manifolds $(M_1, g_1)$ and $(M_2, g_2)$. Because each tangent space is $1$-dimensional, we must have that $F^*g_1 = \lambda g_2$ for some positive smooth function $\lambda: M_2 \to \mathbb{R}$.
\end{proof}


\section{Implementation}\label{sec:implement}
For an $n$-dimensional Riemannian manifold $M$ that is a submanifold of $\mathbb{R}^m$ and has the Euclidean-induced Riemannian metric $g$, we implement a filtered complex $\dvr(n, k, X)$ that approximates the density-scaled Vietoris--Rips complex $\textnormal{DVR}(M, g, f, X)$. The implementation requires a choice of parameter $k$ (see Section \ref{sec:geo_est}), knowledge of the dimension $n$ of the manifold, and knowledge of the pairwise Euclidean distances between the points of $X$. The dimension $n$ can be estimated using one of the methods mentioned previously \cite{local_PCA, intrinsic_dim, conical_dimestimate, ball_dimestimate, doubling_dimestimate}, and we describe a heuristic method for choosing $k$ at the end of Section \ref{sec:geo_est}.


\subsection{Estimation of $f$}\label{sec:kde}
We estimate the probability density function $f$ by using kernel-density estimation. As described in \cite{kde_submanifold}, we use an $n$-dimensional kernel, where $n = \dim(M)$.

\begin{theorem}[Theorem 2.1 in \cite{kde_submanifold}]\label{thm:kde}
Suppose $M$ is an $n$-dimensional submanifold of $\mathbb{R}^m$. Let $K:\mathbb{R} \to [0, \infty)$ be a kernel function such that
\begin{enumerate}
    \item $K$ is symmetric: $K(-x) = K(x)$,
    \item $K$ is normalized: $\int_{\norm{\bm{z}} \leq 1} K(\norm{\bm{z}}) \mathrm{d}^n\bm{z} = 1$,
    \item $K(x) = 0$ for $x \not\in (-1, 1)$, and
    \item $K$ is differentiable in $(-1, 1)$, with bounded derivative.
\end{enumerate}
Let $\{h_N\}_{N \in \mathbb{N}}$ be a sequence of bandwidth parameters in $(0, \infty)$. Given a point cloud $X$ that consists of $N$ points sampled from a probability density function $f: M \to (0, \infty)$, the estimator of $f$ is defined to be
\begin{equation}\label{eq:fest}
    \hat{f}_N(y) := \frac{1}{N}\sum_{x \in X} \frac{1}{h_N^n}K\Bigg(\frac{\norm{y - x}}{h_N}\Bigg)
\end{equation}
for all $y \in \mathbb{R}^m$, where $\norm{y- x}$ denotes the Euclidean distance in $\mathbb{R}^m$. If $f$ is twice differentiable in a neighborhood of $y$ and $h_N \propto N^{\frac{-1}{n+4}}$, then 
\begin{equation*}
    \e[(\hat{f}_N(y) - f(y))^2] = O\Bigg(\frac{1}{N^{\frac{4}{n+4}}}\Bigg) \qquad \text{as } N \to \infty\,.
\end{equation*}
\end{theorem}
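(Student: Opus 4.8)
The plan is to prove the estimate through the classical bias--variance decomposition
\[
\e\big[(\hat{f}_N(y) - f(y))^2\big] = \big(\e[\hat{f}_N(y)] - f(y)\big)^2 + \mathrm{Var}\big(\hat{f}_N(y)\big),
\]
showing that the squared bias is $O(h_N^4)$ and the variance is $O\big(1/(Nh_N^n)\big)$. Substituting $h_N \propto N^{-1/(n+4)}$ makes $h_N^4 \asymp 1/(Nh_N^n) \asymp N^{-4/(n+4)}$, so the two contributions balance and the stated rate follows. Only the upper bounds on each term are needed, so I never have to identify the leading constants.

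First I would localize and pass to manifold coordinates. Fix $y \in M$. Since $\hat{f}_N(y)$ is an average of $N$ i.i.d.\ terms, both the bias and the variance reduce to integrals of a single summand against $f\,\mathrm{d}V$ over $M$. Because $h_N \to 0$, for $N$ beyond some threshold the Euclidean ball $B(y, h_N) \subset \mathbb{R}^m$ meets $M$ only in a single geodesic disc around $y$; here I use that a smooth (hence locally positive-reach) submanifold admits a tubular neighborhood, so that no distant sheet of $M$ that happens to be Euclidean-close to $y$ contaminates the estimate. On that disc I choose a smooth parametrization $\psi$ with $\psi(0)=y$ whose differential at $0$ is a linear isometry onto $T_yM$. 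Since $M$ is smooth, all the relevant geometric quantities are smooth, and one has expansions $\|y - \psi(u)\|^2 = \|u\|^2 + O(\|u\|^3)$ and $\mathrm{d}V = \big(1 + O(\|u\|)\big)\,\mathrm{d}^n u$, where the $O(\|u\|)$ and $O(\|u\|^3)$ corrections have smooth leading terms whose lowest-order parts are odd in $u$.

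Next comes the bias. Writing $\e[\hat{f}_N(y)] = \int_M h_N^{-n} K(\|y-x\|/h_N)\, f(x)\,\mathrm{d}V(x)$, I pull the integral into $u$-coordinates and substitute $u = h_N z$. The normalization $\int_{\|z\|\le 1} K(\|z\|)\,\mathrm{d}^n z = 1$ produces the leading term $f(y)$. Taylor-expanding $f(\psi(h_N z)) = f(y) + h_N \langle \nabla f, z\rangle + \tfrac{1}{2} h_N^2\, z^\top \mathrm{Hess}\,f\, z + o(h_N^2)$ (this is where $f \in C^2$ near $y$ is used), the linear term integrates to zero because $K(\|z\|)$ is radial, hence symmetric in each coordinate; the distortion of the argument of $K$ is controlled using that $K$ has a bounded derivative, with a negligible $O(h_N^2)$ boundary layer near $\|z\|=1$ where $K$ may fail to be smooth; and the volume factor and the cross terms contribute either odd terms that vanish by symmetry or genuine $O(h_N^2)$ terms. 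Hence $\e[\hat{f}_N(y)] - f(y) = O(h_N^2)$, so the squared bias is $O(h_N^4)$.

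For the variance, independence of the summands gives $\mathrm{Var}(\hat{f}_N(y)) = N^{-1}\mathrm{Var}\big(h_N^{-n} K(\|y-X_1\|/h_N)\big) \le N^{-1}\,\e\big[h_N^{-2n} K(\|y - X_1\|/h_N)^2\big]$. Carrying out the same change of variables and the substitution $u = h_N z$ on this second moment yields $h_N^{-n}\big(f(y)\int_{\|z\|\le 1} K(\|z\|)^2\,\mathrm{d}^n z + o(1)\big)$, using continuity of $f$ and of the Jacobian at $y$ together with $\int K^2 < \infty$ (immediate since $K$ is bounded with compact support). Thus $\mathrm{Var}(\hat{f}_N(y)) = O\big(1/(Nh_N^n)\big)$, and adding the two bounds gives $\e[(\hat{f}_N(y) - f(y))^2] = O(h_N^4) + O\big(1/(Nh_N^n)\big)$, which is $O(N^{-4/(n+4)})$ when $h_N \propto N^{-1/(n+4)}$. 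I expect the main obstacle to be the geometric bookkeeping rather than the probability: one must control, uniformly over the shrinking Euclidean support of the kernel, the discrepancy between the ambient distance $\|y-x\|$ and the intrinsic coordinate $\|u\|$ and the accompanying volume distortion, and justify the localization so that $B(y,h_N) \cap M$ is exactly the local geodesic disc; once these estimates are in hand, the bias and variance computations are the standard kernel-estimator calculations performed in $n$ intrinsic dimensions.
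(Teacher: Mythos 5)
The paper does not prove this statement: it is cited verbatim as Theorem~2.1 of the reference \cite{kde_submanifold} (Ozakin and Gray), so there is no proof in the paper to compare against. Your proposal is a reconstruction of what that proof must look like, and it follows exactly the route one expects: bias--variance decomposition, localization to a tubular neighborhood, change of variables to intrinsic coordinates with $u=h_Nz$, a second-order Taylor expansion of $f$ with the linear term killed by the radial symmetry of $K$ and the symmetry of the domain, and a second-moment bound $\mathrm{Var}(\hat f_N(y)) = O\big(1/(Nh_N^n)\big)$ from independence. Balancing $h_N^4$ against $1/(Nh_N^n)$ at $h_N\propto N^{-1/(n+4)}$ gives the stated rate. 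This is the standard argument and it is sound.

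Two small technical remarks worth keeping in mind if you were to write this out in full. First, the claim that the $O(\|u\|)$ and $O(\|u\|^3)$ corrections to the volume element and the distance have ``odd lowest-order parts'' is what makes the $O(h_N)$ bias term vanish by parity; it is correct, but it is doing real work and should be stated as such (any linear or cubic homogeneous polynomial is odd under $u\mapsto -u$, and $K(\|z\|)$, $K'(\|z\|)$ are even, so those integrands vanish over the ball). Second, the hypotheses as stated do not force $K$ to be continuous at $\pm 1$; if $K$ jumps there, the ``boundary layer'' you flag near $\|z\|=1$ contributes $O(h_N)$ rather than $O(h_N^2)$ to the bias, which would degrade the MSE rate. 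In practice the kernels used in the paper (Epanechnikov, biweight, triweight) vanish at $\pm 1$, so the issue is moot, and the cited reference almost certainly imposes this implicitly; but the condition $\lim_{x\to 1^-}K(x)=0$ is genuinely needed for the $O(h_N^2)$ bias bound and is not literally a consequence of conditions 1--4 as transcribed.
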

As $h_N \to 0$, the condition that $K$ has compact support ensures that we are only averaging around a small neighborhood of $y$. This is important on a manifold because $\norm{y-x}$ is only guaranteed to be a good approximation to the Riemannian distance $d_{M, g}(x, y)$ when $y$ is close to $x$ (see, e.g., Lemma 3.1 in \cite{kde_submanifold}).

In our implementation, we set the bandwidth to $h_N = N^{\frac{-1}{n+4}}$ (Scott's rule \cite{scotts_rule}), which satisfies the conditions of Theorem \ref{thm:kde}. Kernels that satisfy the conditions of Theorem \ref{thm:kde} include the Epanechnikov kernel, the biweight kernel, and the triweight kernel \cite{kernels}. We set the default kernel $K$ to be the biweight kernel because it has the most consistent performance in our experiments of Section \ref{sec:compare}. (In the example of Section \ref{sec:circles}, we explore the effects of the choice of kernel.) In dimension $n = 1$, the biweight kernel is
\begin{equation}\label{eq:biweight_dim1}
    K(x) := \begin{cases}
        \frac{15}{16}(1-x^2)^2 \,, & x \in (-1, 1) \\
        0 \,, & \text{otherwise.}
    \end{cases}
\end{equation}
In higher dimensions $n$, the biweight kernel of Equation \ref{eq:biweight_dim1} must be normalized differently. In general, if $K(x)$ is a kernel function in dimension $n = 1$, then the radial kernel function in dimension $n$ is
\begin{equation*}
    K_n(x) = \frac{K(x)}{s_{n-1}\int_0^1 K(r) r^{n-1}\mathrm{d}r},
\end{equation*}
where $s_{n-1}$ is the surface area of the $(n-1)$-sphere. For example, the biweight kernel in dimension $n$ is
\begin{equation*}
    K_n(x) := \begin{cases}
    \Big(s_{n-1}(\frac{1}{n} - \frac{2}{n+2} + \frac{1}{n+4})\Big)^{-1}(1-x^2)^2 \,, & x \in (-1, 1) \\
    0 \,, & \text{otherwise.}
    \end{cases}
\end{equation*}


\subsection{Estimation of Riemannian Distances in the Density-Scaled Manifold}\label{sec:geo_est}
Let $d_{M, \g}$ denote the Riemannian distance function in $(M, \g)$. In a similar manner as to how Riemannian distances are estimated in Isomap \cite{isomap} and C-Isomap \cite{Cisomap}, we estimate $d_{M, \g}$ as follows.
\begin{enumerate}
    \item Construct the $k$-nearest neighbor graph $G_{kNN}(X)$ for some choice of parameter $k$. (A heuristic method for choosing $k$ is discussed at the end of this subsection.) Vertices $x_i, x_j$ are connected by an edge if either $x_j$ is one of the $k$-nearest neighbors of $x_i$ (as measured by Euclidean distance) or if $x_i$ is one of the $k$-nearest neighbors of $x_j$ (as measured by Euclidean distance).
    
    \item We set the weight of an edge $(x_i, x_j) \in G_{kNN}(X)$ to
    \begin{equation*}
        w(x_i, x_j) := \sqrt[n]{\alpha(N)\max\{\hat{f}_N(x_i), \hat{f}_N(x_j)\}} \norm{x_i - x_j}\,,
    \end{equation*}
    where $\hat{f}$ is defined as in Equation \ref{eq:fest}.
    
    \item For any $x_i, x_j \in X$, our estimate $\widehat{d_{M, \g}}(x_i, x_j)$ of $d_{M, \g}(x_i, x_j)$ is the length of the shortest weighted path in $G_{kNN}(X)$ from $x_i$ to $x_j$, if such a path exists. We set $\widehat{d_{M, \g}}(x_i, x_j) = \infty$ if $x_i$ and $x_j$ are not in the same connected component of $G_{kNN}(X)$.
\end{enumerate}

In step 1, we connect each point to its $k$-nearest neighbors. When $N$ is large, the $k$-nearest neighbors to a point $x$ are likely to be within a small neighborhood of $x$. If $y$ is near $x$, then $\sqrt[n]{\alpha(N)f(x)}d_{M, g}(x, y)$ is a good approximation to $d_{M, \g}(x,y)$. That is,
\begin{equation}\label{eq:conformaldist_approx}
    \frac{d_{M, \g}(x, y)}{\sqrt[n]{\alpha(N)f(x)}d_{M, g}(x,y)} \to 1 \qquad{\text{ as } x \to y}\,.
\end{equation}
Additionally, it is well-known that Euclidean distance $\norm{x-y}$ is a good approximation to $d_{M, g}(x, y)$ when $y$ is near $x$ (see, e.g., Lemma 3.1 in \cite{kde_submanifold}). That is,
\begin{equation}\label{eq:eucdist_approx}
    \frac{d_{M, g}(x, y)}{\norm{x - y}} \to 1 \qquad{\text{as }} x \to y\,.
\end{equation}
Together, Equations \ref{eq:eucdist_approx} and \ref{eq:conformaldist_approx} imply that
\begin{equation}\label{eq:dist_approx}
    \frac{d_{M, \g}(x, y)}{\sqrt[n]{\alpha(N)f(x)}{\norm{x - y}}} \to 1 \qquad{\text{as } x \to y.}
\end{equation}
We note that because $f$ is smooth, we can replace $f(x)$ in Equation \ref{eq:dist_approx} by $\max\{f(x), f(y)\}$ or by anything that converges to $f(x)$ as $x \to y$. In step 2, it is crucial to use the maximum of $\hat{f}_N(x_i)$ and $\hat{f}_N(x_j)$, rather than simply $\hat{f}_N(x)$ or some average of $\hat{f}_N(x)$ and $\hat{f}_N(y)$, so that the construction is robust with respect to outliers. Otherwise, even a single outlier in a low-density region could be deadly. The density at an outlier is very low, so the distance from an outlier to a point on the manifold would be underestimated if we did not use the maximum. For the same reason, using the maximum is also important when there are regions of differing density that are close in Euclidean space. In Sections \ref{sec:outliers} and \ref{sec:clustering}, we empirically test our method on point clouds with those challenges.

A good choice of $k$ (if such a $k$ exists) is the smallest $k$ such that two points $x$, $x'$ are in the same component of $G_{kNN}(X)$ if and only if $x$ and $x'$ are in the same component of $M$ and such that points that are ``close'' in $M$ are connected by an edge in $G_{kNN}(X)$. Heuristically, one can choose $k$ to be the first $k$ for which the number of connected components in $G_{k'NN}(X)$ is equal to the number of connected components in $G_{kNN}(X)$ for all $k' \in \{k - \ell, \cdots, k\}$ for some fixed $\ell \in \mathbb{N}$. In our experiments (Section \ref{sec:compare}), we find that $\ell = 5$ works well. Generally, it is better for $k$ to be too large than too small. A choice of $k$ that is too small could drastically change the Riemannian distance estimates if two points in the same component of $M$ are not connected in $G_{kNN}(X)$. A small value of $k$ can result in issues even when the components of $G_{kNN}(X)$ align correctly with the components of $M$. For example, if $M$ is a connected curve, then two consecutive points in $X$ on the curve may not necessarily be connected by an edge in $G_{kNN}(X)$ even if $G_{kNN}(X)$ is a connected graph.


\subsection{Definition of $\dvr$}\label{sec:dvr_imp}
\begin{definition}
Let $X = \{x_1, \ldots, x_N\}$ be a point cloud sampled from an unknown manifold of known dimension $n$. For fixed parameter $k$, the set of simplices in the \emph{approximate density-scaled Vietoris--Rips complex} $\dvr(n, k, X)$ at filtration level $t$ is
\begin{equation*}
    \Big\{ x_J \mid \widehat{d_{M, \g}}(x_i, x_j) \leq 2t \text{ for all } i, j \in J \text{ and all } J \subseteq \{1, \ldots, N\} \},
\end{equation*}
where $\widehat{d_{M, \g}}(x_i, x_j)$ is calculated as in Section \ref{sec:geo_est}.
\end{definition}
We recommend that the parameter $k$ is set by the heuristic described in Section \ref{sec:geo_est}.


\section{Stability}\label{sec:stability}
Let $X$ and $Y$ be point clouds that consist of $N$ points sampled from a smooth probability density function $f: M \to (0, \infty)$, and let $\epsilon > 0$.
We show that if $X$ and $Y$ are sufficiently close with respect to a suitable metric, then the pairwise bottleneck distances between the pairs of diagrams $\Big(\dgm(\dc(M, g, f, X))$, $\dgm(\dc(M, g, f, Y))\Big)$, $\Big(\dgm(\textnormal{DVR}(M, g, f, X))$, $\dgm(\textnormal{DVR}(M, g, f, Y))\Big)$, and $\Big(\dgm(\dvr(n, k, X))$, $\dgm(\dvr(n, k,  Y))\Big)$ are at most $\epsilon$. (For the case of $\dvr$, the point cloud $X$ must satisfy an additional constraint that is almost surely satisfied; see Definition \ref{def:general}.) By Theorem \ref{thm:interleave_bottleneck} (a result from \cite{stability}), it suffices to show that the respective complexes are \emph{$\epsilon$-interleaved}, a concept that we review below. For more details and examples of $\epsilon$-interleaving, see \cite{pers_modules}.

Let $\mathbb{U}$ and $\mathbb{V}$ be persistence modules over $\mathbb{R}$, and let $\{u_s^t: \mathbb{U}_s \to \mathbb{U}_t\}$ and $\{v_s^t: \mathbb{V}_s \to \mathbb{V}_t\}$ be the respective collections of linear maps from the persistence module structure. A \emph{homomorphism of degree $\epsilon$} is a collection $\bm{\Phi}$ of linear maps
\begin{equation*}
    \phi_t : \mathbb{U}_t \to \mathbb{V}_{t + \epsilon} \qquad \text{for all } t \in \mathbb{R}
\end{equation*}
such that the diagram
\begin{center}
\begin{tikzcd}
\mathbb{U}_s \arrow[r, "u_s^t"] \arrow[d, "\phi_s"']                   & \mathbb{U}_t \arrow[d, "\phi_t"] \\
\mathbb{V}_{s + \epsilon} \arrow[r, "v_{s + \epsilon}^{t + \epsilon}"] & \mathbb{V}_{t + \epsilon}       
\end{tikzcd}
\end{center}
commutes for all $s \leq t$. Let $\Hom^{\epsilon}(\mathbb{U}, \mathbb{V})$ denote the set of homomorphisms $\mathbb{U} \to \mathbb{V}$ of degree $\epsilon$ and $\text{End}^{\epsilon}(\mathbb{V})$ denote the set of degree-$\epsilon$ homomorphisms $\mathbb{V} \to \mathbb{V}$. A particularly important degree-$\epsilon$ endomorphism is $\bm{1}_{\mathbb{V}}^{\epsilon} \in \text{End}^{\epsilon}(\mathbb{V})$, which is the collection of maps $v_t^{t + \epsilon}$ for all $t  \in \mathbb{R}$. Composition of shifted homomorphisms is given by composition of the linear maps. If $\bm{\Phi} \in \Hom^{\epsilon_1}(\mathbb{U}, \mathbb{W})$ and $\bm{\Psi} \in \Hom^{\epsilon_2}(\mathbb{W}, \mathbb{V})$, then $\bm{\Psi}\bm{\Phi} \in \Hom^{\epsilon_1 + \epsilon_2}(\mathbb{U}, \mathbb{V})$ is the collection of linear maps
\begin{equation*}
    \psi_{t + \epsilon_1} \circ \phi_t : \mathbb{U}_t \to \mathbb{V}_{t + \epsilon_1 + \epsilon_2} \qquad \text{for all } t \in \mathbb{R}.
\end{equation*}

Persistence modules $\mathbb{U}$ and $\mathbb{V}$ are \emph{$\epsilon$-interleaved} if there are $\bm{\Phi} \in \Hom^{\epsilon}(\mathbb{U}, \mathbb{V})$ and $\bm{\Psi} \in \Hom^{\epsilon}(\mathbb{V}, \mathbb{U})$ such that
\begin{equation*}
    \bm{\Psi}\bm{\Phi} = 1_{\mathbb{U}}^{2 \epsilon}\,, \qquad \bm{\Psi}\bm{\Phi} = 1_{\mathbb{V}}^{2 \epsilon}.
\end{equation*}
A persistence module $\mathbb{V}$ is \emph{q-tame} if rank$(v_s^t) < \infty$ for all $s \leq t$. The following theorem says that the persistence diagrams of $q$-tame persistence modules that are $\epsilon$-interleaved have bottleneck distance at most $\epsilon$. We note that if $\K_r$ is a finite complex for all $r$, then its persistent homology $H(\K)$ is q-tame because $H(\K_r)$ is finite-dimensional for all $r$. Thus Theorem \ref{thm:interleave_bottleneck} applies to the persistent homology of the density-scaled complexes, which are finite at all filtration levels $r$.
\begin{theorem}[\cite{pers_modules}]\label{thm:interleave_bottleneck} If $\mathbb{U}$ and $\mathbb{V}$ are $q$-tame persistence modules that are $\epsilon$-interleaved, then the bottleneck distance between the persistence diagrams satisfies
\begin{equation*}
    W_{\infty}(\dgm(\mathbb{U}), \dgm(\mathbb{V})) \leq \epsilon\,.
\end{equation*}
\end{theorem}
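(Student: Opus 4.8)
The plan is to deduce the bottleneck bound from an inequality between the \emph{persistence measures} of $\mathbb{U}$ and $\mathbb{V}$, which is the route taken for the ``algebraic stability theorem'' of \cite{pers_modules}. An alternative that I would fall back on if the measure-theoretic bookkeeping became unwieldy is to approximate $\mathbb{U}$ and $\mathbb{V}$ by finite persistence modules that remain $\epsilon$-interleaved up to a vanishing error, prove the statement in the finite case using the interval decomposition of Equation \ref{eq:phmodule} together with an explicit matching of interval summands, and pass to the limit; but I will describe the direct argument.

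First I would attach to each $q$-tame module $\mathbb{V}$ its persistence measure. For a rectangle $R = (a, b] \times (c, d]$ lying above the diagonal (so $b \leq c$), set
\begin{equation*}
    \mu_{\mathbb{V}}(R) := \mathrm{rank}(v_b^c) - \mathrm{rank}(v_a^c) - \mathrm{rank}(v_b^d) + \mathrm{rank}(v_a^d)\,.
\end{equation*}
$q$-tameness makes each rank finite, and monotonicity of rank along the composition law makes $\mu_{\mathbb{V}}(R)$ a nonnegative integer; one checks that it is additive under subdivision of $R$, so it extends to a locally finite Borel measure on the open half-plane above the diagonal, whose atoms, together with the diagonal, constitute $\dgm(\mathbb{V})$ (and when $\mathbb{V}$ decomposes as in Equation \ref{eq:phmodule} this recovers the usual multiset of birth--death points). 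The key step is then the \emph{box inequality}: an $\epsilon$-interleaving $\bm{\Phi} \in \Hom^{\epsilon}(\mathbb{U}, \mathbb{V})$, $\bm{\Psi} \in \Hom^{\epsilon}(\mathbb{V}, \mathbb{U})$ forces
\begin{equation*}
    \mu_{\mathbb{U}}(R) \leq \mu_{\mathbb{V}}(R^{\epsilon})
\end{equation*}
for every such rectangle $R$, where $R^{\epsilon}$ is $R$ dilated by $\epsilon$ in the $\ell^{\infty}$ metric, together with the symmetric inequality with the roles of $\mathbb{U}$ and $\mathbb{V}$ exchanged. This is proved by linear algebra alone: the commuting squares defining $\bm{\Phi}$ and $\bm{\Psi}$ let one factor the relevant structure maps of $\mathbb{U}$ through $\epsilon$-shifted structure maps of $\mathbb{V}$, the relations $\bm{\Psi}\bm{\Phi} = \bm{1}_{\mathbb{U}}^{2\epsilon}$ and $\bm{\Phi}\bm{\Psi} = \bm{1}_{\mathbb{V}}^{2\epsilon}$ pin down the rank of the composite $\psi \circ \phi$ from below, and rank--nullity applied at the four corners of $R$ assembles these into the stated inequality.

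Finally I would invoke the combinatorial matching principle for persistence measures: two locally finite measures on the half-plane that satisfy the box inequalities above admit a partial matching of their atoms that moves each matched atom by at most $\epsilon$ in $\norm{\cdot}_{\infty}$ and leaves unmatched only atoms within $\epsilon$ of the diagonal. Applied to $\mu_{\mathbb{U}}$ and $\mu_{\mathbb{V}}$ and completed with diagonal points, this matching is precisely a bijection $\eta : \dgm(\mathbb{U}) \to \dgm(\mathbb{V})$ with $\sup_{x} \norm{x - \eta(x)}_{\infty} \leq \epsilon$, giving $W_{\infty}(\dgm(\mathbb{U}), \dgm(\mathbb{V})) \leq \epsilon$. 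I expect the main obstacle to be exactly the absence of an interval decomposition in the $q$-tame setting: one cannot simply match interval summands, so both the persistence measure and the box inequality must be built by hand, and the matching principle itself must be established by a limiting argument — matching the restrictions of $\mu_{\mathbb{U}}$ and $\mu_{\mathbb{V}}$ to regions bounded away from the diagonal and from infinity (where the measures are finite, so a finite Hall-type argument applies) and then extracting a convergent subsequence of matchings. The delicate behavior near the diagonal is precisely what the diagonal points in $\dgm$ are there to absorb.
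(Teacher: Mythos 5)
The paper cites this theorem from \cite{pers_modules} without reproducing a proof, so there is no internal argument to compare against. Your sketch faithfully reconstructs the route taken in that reference: attach a rank-based persistence measure to each $q$-tame module, derive the Box Lemma $\mu_{\mathbb{U}}(R) \leq \mu_{\mathbb{V}}(R^{\epsilon})$ (and its symmetric companion) from the $\epsilon$-interleaving by purely linear-algebraic rank estimates on the commuting squares, and convert the box inequalities into an $\epsilon$-matching via the measure-theoretic matching theorem, which indeed rests on a Hall-type marriage argument for finite restrictions followed by a compactness/limiting step. The items you have knowingly elided --- the bookkeeping of endpoint decorations that makes the rectangle conventions and the additivity of the $r$-measure watertight, and the treatment of the region at infinity in the extended half-plane, which is unavoidable because $q$-tame modules may have summands that never die --- are precisely where most of the labor in the cited reference is concentrated, but you have correctly identified where the difficulties sit and the plan would compile into a complete proof. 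One small remark: because $W_{\infty}$ is an infimum, the sharp bound $\leq \epsilon$ in the statement already follows from producing a $\delta$-matching for every $\delta > \epsilon$, so the delicate behavior near the diagonal that you flag can be handled at that slightly softer level.
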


\subsection{Stability of $\textnormal{DVR}$ and $\dc$}\label{sec:stability_M}
The density-scaled \v{C}ech and Vietoris--Rips complexes are defined to be the \v{C}ech and Vietoris--Rips complexes, respectively, in the density-scaled manifold $(M, \g)$. The stability properties of $\dc$ and $\textnormal{DVR}$ therefore follow from the usual stability properties of the \v{C}ech complex and the Vietoris--Rips complex \cite{stability}. Let $d_H(X, Y, (M, \g))$ denote the Hausdorff distance in $(M, \g)$ between two point clouds $X$ and $Y$ that each consist of $N$ points sampled from $f: M \to (0, \infty)$.

\begin{theorem}[Lemma 4.3 in \cite{stability}]
If $d_H(X, Y, (M, \g)) < \epsilon$, then $H(\textnormal{DVR}(M, g, f, X))$ and $H(\textnormal{DVR}(M, g, f, Y))$ are $\epsilon$-interleaved\footnote{In \cite{stability}, the Vietoris--Rips complex is defined so that there is an edge between $x$ and $y$ at filtration level $r$ if $d(x, y) \leq r$. In this paper, we use $2r$ instead. The condition from \cite{stability} on the Hausdorff distance is adjusted accordingly.}.
\end{theorem}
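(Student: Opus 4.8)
The plan is to reduce to the classical Hausdorff-stability theorem for Vietoris--Rips complexes in an abstract metric space — which is exactly Lemma~4.3 of \cite{stability} — applied to the metric space $(M, d_{M,\g})$. Indeed, $\textnormal{DVR}(M,g,f,X)$ is \emph{defined} to be $\textnormal{VR}(M, d_{M,\g}, X)$, so the whole content is that $(M, d_{M,\g})$ is a bona fide metric space (which it is; see Section~\ref{sec:Rgeom}, noting that $d_{M,\g}$ may take the value $+\infty$ on a disconnected $M$, which is harmless since the Vietoris--Rips conditions only ever constrain finite distances). I would nonetheless record the interleaving construction explicitly, since it is short.

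First, using $d_H(X, Y, (M,\g)) < \epsilon$, I would fix once and for all maps $\pi \colon X \to Y$ and $\rho \colon Y \to X$ with $d_{M,\g}(x, \pi(x)) < \epsilon$ and $d_{M,\g}(y, \rho(y)) < \epsilon$ for all $x, y$. These vertex maps induce, for every filtration value $t$, simplicial maps $\xi_t \colon \textnormal{VR}(M, d_{M,\g}, X)_t \to \textnormal{VR}(M, d_{M,\g}, Y)_{t+\epsilon}$ and $\zeta_t \colon \textnormal{VR}(M, d_{M,\g}, Y)_t \to \textnormal{VR}(M, d_{M,\g}, X)_{t+\epsilon}$: if $x_J$ is a simplex at level $t$ then $d_{M,\g}(x_i,x_j)\le 2t$ for all $i,j\in J$, and the triangle inequality gives $d_{M,\g}(\pi(x_i),\pi(x_j)) < \epsilon + 2t + \epsilon = 2(t+\epsilon)$, so $\{\pi(x_j)\}_{j\in J}$ spans a simplex at level $t+\epsilon$ (of possibly lower dimension, if some $\pi(x_j)$ coincide), and symmetrically for $\zeta_t$. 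Passing to homology yields degree-$\epsilon$ candidate maps $\phi_t$ and $\psi_t$ between $H(\textnormal{DVR}(M,g,f,X))$ and $H(\textnormal{DVR}(M,g,f,Y))$.

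Next I would verify the two requirements of an $\epsilon$-interleaving. That $\bm{\Phi} = \{\phi_t\}$ is a homomorphism of degree $\epsilon$ follows because the vertex map $\pi$ does not depend on $t$: for $s \le t$ the relevant square of simplicial maps — with the inclusions $\textnormal{VR}(M, d_{M,\g}, X)_s \hookrightarrow \textnormal{VR}(M, d_{M,\g}, X)_t$ and $\textnormal{VR}(M, d_{M,\g}, Y)_{s+\epsilon} \hookrightarrow \textnormal{VR}(M, d_{M,\g}, Y)_{t+\epsilon}$ — commutes on the nose, since both composites send $x_J$ to $\{\pi(x_j)\}_{j\in J}$; hence so does the induced square on homology, and likewise for $\bm{\Psi} = \{\psi_t\}$. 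For the interleaving identities, I would consider the composite simplicial map $\zeta_{t+\epsilon}\circ\xi_t \colon \textnormal{VR}(M, d_{M,\g}, X)_t \to \textnormal{VR}(M, d_{M,\g}, X)_{t+2\epsilon}$, induced by $x\mapsto \rho(\pi(x))$, and show it is contiguous to the inclusion into level $t+2\epsilon$. Concretely, for a simplex $x_J$ at level $t$ one checks that $\{x_j\}_{j\in J}\cup\{\rho(\pi(x_j))\}_{j\in J}$ spans a simplex at level $t+2\epsilon$, which is a three-line triangle-inequality computation: $d_{M,\g}(x_i, x_j) \le 2t$; $d_{M,\g}(x_i, \rho(\pi(x_j))) \le d_{M,\g}(x_i,x_j) + d_{M,\g}(x_j, \pi(x_j)) + d_{M,\g}(\pi(x_j), \rho(\pi(x_j))) < 2t + 2\epsilon$; and $d_{M,\g}(\rho(\pi(x_i)), \rho(\pi(x_j))) < \epsilon + 2(t+\epsilon) + \epsilon = 2t + 4\epsilon$, all bounded by $2(t+2\epsilon)$. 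Since contiguous simplicial maps induce equal maps on homology, $\bm{\Psi}\bm{\Phi}$ is the map induced by the inclusion $\textnormal{VR}(M, d_{M,\g}, X)_t \hookrightarrow \textnormal{VR}(M, d_{M,\g}, X)_{t+2\epsilon}$ for every $t$, i.e. $\bm{\Psi}\bm{\Phi} = \bm{1}^{2\epsilon}$ on $H(\textnormal{DVR}(M,g,f,X))$; the mirror-image computation gives $\bm{\Phi}\bm{\Psi} = \bm{1}^{2\epsilon}$ on $H(\textnormal{DVR}(M,g,f,Y))$. This establishes the $\epsilon$-interleaving, and then Theorem~\ref{thm:interleave_bottleneck} gives the bottleneck bound as a corollary.

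I do not anticipate a real obstacle here: this is bookkeeping built around the triangle inequality and the contiguity principle, and it is verbatim the proof of Lemma~4.3 in \cite{stability}. The only points requiring care are (i) confirming that $(M, d_{M,\g})$ is genuinely a metric space so that the classical argument transfers without change, and (ii) choosing the near-point maps $\pi, \rho$ a single time, so that the comparison squares are honestly independent of the filtration level — if one allowed $\pi$ to vary with $t$, the degree-$\epsilon$ homomorphism property could fail.
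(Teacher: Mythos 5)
Your argument is correct and matches the paper's approach. The paper simply notes that $\textnormal{DVR}(M,g,f,X)=\textnormal{VR}(M,d_{M,\g},X)$ by definition and cites Lemma~4.3 of \cite{stability} for stability of the Vietoris--Rips complex in an arbitrary metric space; you do the same, and in addition correctly transcribe the standard triangle-inequality/contiguity interleaving argument (including the right caveat that the near-point maps $\pi,\rho$ must be fixed once, independently of the filtration parameter).
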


\begin{theorem}[Corollary 4.10 in \cite{stability}]
If $d_H(X, Y, (M, \g)) < \epsilon$, then $H(\dc(M, g, f, X))$ and $H(\dc(M, g, f, Y))$ are $\epsilon$-interleaved.
\end{theorem}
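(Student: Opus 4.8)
The key observation is that $\dc(M, g, f, X)$ is \emph{by definition} the \v{C}ech complex of $X$ in the metric space $(M, d_{M, \g})$, so the statement is exactly the standard \v{C}ech stability result (Corollary 4.10 in \cite{stability}) applied to that metric space. The plan is therefore either to invoke \cite{stability} verbatim after checking its hypotheses — that $(M, d_{M, \g})$ is a metric space and that $d_H(X, Y, (M, \g)) < \epsilon$, both of which hold by assumption — or, for self-containedness, to reconstruct the interleaving directly; I sketch the latter.

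Since $d_H(X, Y, (M, \g)) < \epsilon$, I would first fix choice functions: for each $x \in X$ pick $\pi(x) \in Y$ with $d_{M, \g}(x, \pi(x)) < \epsilon$, and for each $y \in Y$ pick $\rho(y) \in X$ with $d_{M, \g}(y, \rho(y)) < \epsilon$. Next I would show $\pi$ induces a simplicial map $\dc(M, g, f, X)_r \to \dc(M, g, f, Y)_{r + \epsilon}$ for every $r$: if $x_J$ is a simplex of $\dc(M, g, f, X)_r$, choose $p \in \bigcap_{j \in J} B(x_j, r)$ (balls in $(M, \g)$); then $d_{M, \g}(p, \pi(x_j)) \leq d_{M, \g}(p, x_j) + d_{M, \g}(x_j, \pi(x_j)) < r + \epsilon$ for all $j$, so $p \in \bigcap_{j \in J} B(\pi(x_j), r + \epsilon)$ and $\pi(x_J)$ is a simplex at level $r + \epsilon$. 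These maps commute with the filtration inclusions, so on homology they assemble into $\bm{\Phi} \in \Hom^{\epsilon}(H(\dc(M, g, f, X)), H(\dc(M, g, f, Y)))$; symmetrically $\rho$ gives $\bm{\Psi} \in \Hom^{\epsilon}(H(\dc(M, g, f, Y)), H(\dc(M, g, f, X)))$.

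The remaining step is to check $\bm{\Psi}\bm{\Phi} = \bm{1}_{H(\dc(M, g, f, X))}^{2\epsilon}$ and $\bm{\Phi}\bm{\Psi} = \bm{1}_{H(\dc(M, g, f, Y))}^{2\epsilon}$; I would do the first and note the second is symmetric. For a simplex $x_J$ of $\dc(M, g, f, X)_r$ with $p \in \bigcap_{j \in J} B(x_j, r)$ as above, one has $d_{M, \g}(p, x_j) \leq r$ and $d_{M, \g}(p, \rho(\pi(x_j))) \leq d_{M, \g}(p, x_j) + d_{M, \g}(x_j, \pi(x_j)) + d_{M, \g}(\pi(x_j), \rho(\pi(x_j))) < r + 2\epsilon$ for all $j$, so $p$ lies in every radius-$(r + 2\epsilon)$ ball centered at a vertex of $x_J \cup \rho(\pi(x_J))$. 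Hence the identity and $\rho \circ \pi$ are contiguous as simplicial maps $\dc(M, g, f, X)_r \to \dc(M, g, f, X)_{r + 2\epsilon}$, so they induce the same map on homology — namely the inclusion-induced map $H(\dc(M, g, f, X)_r) \to H(\dc(M, g, f, X)_{r + 2\epsilon})$, which is exactly the level-$r$ component of $\bm{1}^{2\epsilon}$. This yields the $\epsilon$-interleaving, after which $q$-tameness and Theorem \ref{thm:interleave_bottleneck} give the bottleneck bound.

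\textbf{Main obstacle.} Candidly, there is little genuine difficulty: every inequality above is the triangle inequality for $d_{M, \g}$, and the substance already lives in \cite{stability}. The one point deserving care is that $(M, \g)$ need not be complete, connected, or compact, so $d_{M, \g}$ may take the value $+\infty$; but $d_H(X, Y, (M, \g)) < \epsilon$ forces matched points into the same connected component, so the argument is unaffected. The other implicit prerequisites — that $d_{M, \g}$ is a bona fide metric and that $\dc$ is finite at each filtration level, hence $q$-tame — are established earlier in the paper.
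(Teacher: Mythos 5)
Your proposal is correct and takes essentially the same approach as the paper, which states this theorem without its own proof, simply invoking Corollary 4.10 of \cite{stability} applied to the metric space $(M, d_{M, \g})$ — exactly your ``key observation.'' Your optional self-contained reconstruction of the interleaving via contiguity is also sound, and your remark about $d_{M, \g}$ possibly taking the value $+\infty$ on a disconnected manifold is a reasonable point of care that the Hausdorff bound indeed handles.
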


\subsection{Stability of $\dvr$}\label{sec:stability_euc}
Let $X$ and $Y$ be two point clouds that consist of $N$ points each. Their \emph{Wasserstein distance} is defined to be
\begin{equation*}
    W_{\infty}(X, Y) := \inf_{\eta: X \to Y} \norm{\eta}_{\infty},
\end{equation*}
where the infimum is taken over all bijections $\eta: X \to Y$ and where $\norm{\eta}_{\infty}:= \max_{x \in X} \norm{x - \eta(x)}$ is defined to be the largest perturbation of any point. In Theorem \ref{thm:dvr_stability}, we show that if $X$ and $Y$ are two point clouds of the same size that are in ``general position'' (defined below in Definition \ref{def:general}) and sufficiently close in Wasserstein distance, then $H(\dvr(n, k, X))$ and $H(\dvr(n, k, Y))$ are $\epsilon$-interleaved. 

First, we review a result from \cite{stability} that we use in our proof of stability.
\begin{definition}Let $\mathbb{S}$ and $\mathbb{T}$ be filtered complexes with vertices $X$ and $Y$, respectively. A bijection $\eta: X \to Y$ is \emph{$\epsilon$-simplicial} if for all $t$ and all simplices $x_J \in \mathbb{S}_t$, we have that $\eta(x_J)$ is a simplex in $\mathbb{T}_{t + \epsilon}$.
\end{definition}
The following proposition is proved in \cite{stability} in greater generality for correspondences $C:X \rightrightarrows Y$. We state the proposition for the special case in which $C$ induces a bijection.
\begin{proposition}[Proposition 4.2 in \cite{stability}]\label{prop:interleave}
If $\mathbb{S}$ and $\mathbb{T}$ are filtered complexes with vertices $X$ and $Y$, respectively, and $\eta: X \to Y$ is a bijection such that $\eta$ and $\eta^{-1}$ are both $\epsilon$-simplicial, then the persistence modules $H(\mathbb{S})$ and $H(\mathbb{T})$ are $\epsilon$-interleaved.
\end{proposition}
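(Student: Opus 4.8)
The plan is to manufacture the two interleaving homomorphisms directly from the vertex bijection $\eta$, exploiting the elementary fact that a simplicial map is completely determined by its action on vertices, and then to read off the two interleaving identities from the functoriality of simplicial homology.

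First I would upgrade $\eta$ to a compatible family of simplicial maps. Fix $t \in \mathbb{R}$. Applying the hypothesis that $\eta$ is $\epsilon$-simplicial to the $0$-simplices of $\mathbb{S}_t$ shows that $\eta$ carries every vertex of $\mathbb{S}_t$ to a vertex of $\mathbb{T}_{t+\epsilon}$, and applying it to an arbitrary simplex $x_J \in \mathbb{S}_t$ shows $\eta(x_J) \in \mathbb{T}_{t+\epsilon}$; hence $\eta$ restricts to a genuine simplicial map $\eta_t \colon \mathbb{S}_t \to \mathbb{T}_{t+\epsilon}$. Since every $\eta_t$ is a restriction of the single vertex map $\eta$, for all $s \le t$ the square with horizontal arrows the canonical inclusions $\mathbb{S}_s \hookrightarrow \mathbb{S}_t$ and $\mathbb{T}_{s+\epsilon} \hookrightarrow \mathbb{T}_{t+\epsilon}$ and vertical arrows $\eta_s$, $\eta_t$ commutes at the level of simplicial complexes. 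Applying simplicial homology $H(-)$ over the fixed field turns this into a commuting square of linear maps, so $\bm{\Phi} := \{\phi_t\}$ with $\phi_t := H(\eta_t) \colon H(\mathbb{S}_t) \to H(\mathbb{T}_{t+\epsilon})$ is a homomorphism of degree $\epsilon$ from $H(\mathbb{S})$ to $H(\mathbb{T})$. Running the identical argument with the $\epsilon$-simplicial bijection $\eta^{-1} \colon Y \to X$ produces $\bm{\Psi} := \{\psi_t\}$ with $\psi_t := H(\eta^{-1}_t) \colon H(\mathbb{T}_t) \to H(\mathbb{S}_{t+\epsilon})$, a degree-$\epsilon$ homomorphism from $H(\mathbb{T})$ to $H(\mathbb{S})$.

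It then remains to check the interleaving identities $\bm{\Psi}\bm{\Phi} = \bm{1}^{2\epsilon}_{H(\mathbb{S})}$ and $\bm{\Phi}\bm{\Psi} = \bm{1}^{2\epsilon}_{H(\mathbb{T})}$. At level $t$, the composite $\bm{\Psi}\bm{\Phi}$ is $\psi_{t+\epsilon}\circ\phi_t = H(\eta^{-1}_{t+\epsilon})\circ H(\eta_t) = H\bigl(\eta^{-1}_{t+\epsilon}\circ\eta_t\bigr)$ by functoriality, a map $H(\mathbb{S}_t) \to H(\mathbb{S}_{t+2\epsilon})$. The simplicial map $\eta^{-1}_{t+\epsilon}\circ\eta_t \colon \mathbb{S}_t \to \mathbb{S}_{t+2\epsilon}$ has underlying vertex map $\eta^{-1}\circ\eta = \mathrm{id}_X$ on the vertices of $\mathbb{S}_t$; since a simplicial map is uniquely determined by its vertex map, it must coincide with the canonical inclusion $\mathbb{S}_t \hookrightarrow \mathbb{S}_{t+2\epsilon}$. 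Hence $H\bigl(\eta^{-1}_{t+\epsilon}\circ\eta_t\bigr)$ is precisely the structure map $\iota_t^{t+2\epsilon}$ of $H(\mathbb{S})$, i.e. the level-$t$ component of $\bm{1}^{2\epsilon}_{H(\mathbb{S})}$. The identity $\bm{\Phi}\bm{\Psi} = \bm{1}^{2\epsilon}_{H(\mathbb{T})}$ follows symmetrically, and by definition this exhibits $H(\mathbb{S})$ and $H(\mathbb{T})$ as $\epsilon$-interleaved.

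I do not expect a genuine obstacle here: the whole argument is a formal consequence of the functoriality of simplicial homology, and the only point that needs any care is the remark that a simplicial map is determined by its vertex map — this is exactly what lets us identify $\eta^{-1}_{t+\epsilon}\circ\eta_t$ with an inclusion and therefore its induced map on homology with a structure map of the persistence module. (I would also flag that the second identity in the statement of Proposition \ref{prop:interleave} and in the definition of $\epsilon$-interleaving should read $\bm{\Phi}\bm{\Psi} = \bm{1}^{2\epsilon}_{H(\mathbb{T})}$; the proof above establishes this corrected form.)
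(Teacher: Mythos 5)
Your proof is correct, and the paper does not actually supply one: it states the result as a citation to Proposition 4.2 of \cite{stability}, remarking that the original is proved in greater generality for correspondences $C\colon X \rightrightarrows Y$, and that the statement here is the special case where $C$ induces a bijection. Your argument is the natural specialization to that bijective case. In the correspondence setting, $C$ need not be single-valued, so one chooses a map subordinate to $C$ and then argues via simplicial contiguity that the induced map on homology is independent of that choice; your bijection hypothesis makes the subordinate map unique and renders the contiguity step unnecessary, which is exactly what lets you identify $\eta^{-1}_{t+\epsilon}\circ\eta_t$ with the inclusion $\mathbb{S}_t \hookrightarrow \mathbb{S}_{t+2\epsilon}$ on the nose rather than only up to contiguity. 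The steps you take — upgrading $\eta$ to a compatible family of simplicial maps $\eta_t\colon\mathbb{S}_t\to\mathbb{T}_{t+\epsilon}$, invoking functoriality of homology to obtain $\bm{\Phi}$ and $\bm{\Psi}$ as degree-$\epsilon$ homomorphisms, and reading off the interleaving identities from $\eta^{-1}\circ\eta=\mathrm{id}_X$ and $\eta\circ\eta^{-1}=\mathrm{id}_Y$ — are exactly right. Your parenthetical observation is also correct: the paper's display defining $\epsilon$-interleaving repeats $\bm{\Psi}\bm{\Phi}$ in both identities, and the second one should read $\bm{\Phi}\bm{\Psi}=1^{2\epsilon}_{\mathbb{V}}$.
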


To prove our stability theorem for $\dvr$ (Theorem \ref{thm:dvr_stability}), we first prove a stability lemma for the density estimate (Lemma \ref{lem:hatf_stability}) and a stability lemma for the Riemannian-distance estimates (Lemma \ref{lem:geodist_stability}).

\begin{lemma}[Stability of $\hat{f}$]\label{lem:hatf_stability}
Let $X$ and $Y$ be point clouds that consist of $N$ points each, and let $\hat{f}_{X}$, $\hat{f}_{Y}$ denote the respective density estimators, as defined by Equation \ref{eq:fest}, for some kernel $K$ that satisfies the conditions of Theorem \ref{thm:kde}. For any $\epsilon > 0$, there is a $\delta > 0$ such that if $\eta: X \to Y$ is a bijection and $\norm{\eta}_{\infty} < \delta$, then $\abs{\hat{f}_X(x) - \hat{f}_Y(\eta(x))} < \epsilon$ for all $x \in X$. The value of $\delta$ depends only on the number of points $N$, the kernel $K$ used by the density estimators, and the dimension $n$.
\end{lemma}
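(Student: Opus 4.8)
The plan is a direct triangle-inequality estimate together with the regularity of $K$. Fix $x \in X$. Using the bijection $\eta$ to reindex the sum that defines $\hat f_Y$ (and noting that the bandwidth $h_N = N^{-1/(n+4)}$ is common to both estimators since $\abs{X} = \abs{Y} = N$), I would first write
\[
\hat f_X(x) - \hat f_Y(\eta(x)) = \frac{1}{N h_N^n}\sum_{x' \in X}\left[K\!\left(\frac{\norm{x - x'}}{h_N}\right) - K\!\left(\frac{\norm{\eta(x) - \eta(x')}}{h_N}\right)\right].
\]
Two applications of the triangle inequality in $\mathbb{R}^m$ give $\bigl|\,\norm{x - x'} - \norm{\eta(x) - \eta(x')}\,\bigr| \le \norm{x - \eta(x)} + \norm{x' - \eta(x')} \le 2\norm{\eta}_\infty$ for every $x' \in X$, so each argument of $K$ is perturbed by at most $2\norm{\eta}_\infty/h_N$.

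Next I would convert this small perturbation of arguments into a small perturbation of $K$-values. Conditions (3)--(4) of Theorem \ref{thm:kde} give a finite bound $L := \sup_{(-1,1)} \abs{K'}$, and since $K$ vanishes outside $(-1,1)$ and (for the kernels used here, e.g.\ the Epanechnikov, biweight, and triweight kernels) extends continuously to $0$ at $\pm 1$, the function $K$ is globally Lipschitz on $\mathbb{R}$ with constant $L$. Hence each summand above is bounded by $\frac{L}{h_N}\bigl|\,\norm{x-x'} - \norm{\eta(x)-\eta(x')}\,\bigr| \le \frac{2L\norm{\eta}_\infty}{h_N}$, and summing the $N$ terms yields
\[
\abs{\hat f_X(x) - \hat f_Y(\eta(x))} \;\le\; \frac{1}{N h_N^n}\cdot N \cdot \frac{2L\norm{\eta}_\infty}{h_N} \;=\; \frac{2L\norm{\eta}_\infty}{h_N^{\,n+1}}
\]
uniformly in $x \in X$. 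Since $h_N$ depends only on $N$ and $n$ and $L$ depends only on $K$, it then suffices to take $\delta := \epsilon\,h_N^{\,n+1}/(2L)$ (or any smaller positive number): $\norm{\eta}_\infty < \delta$ forces $\abs{\hat f_X(x) - \hat f_Y(\eta(x))} < \epsilon$ for all $x \in X$, and $\delta$ depends only on $\epsilon$, $N$, $K$, and $n$, as required.

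I expect the only genuinely delicate point to be the global Lipschitz claim for $K$: condition (4) only asserts bounded differentiability on the \emph{open} interval $(-1,1)$, so one must additionally use continuity of $K$ at $\pm 1$ (equivalently, that the one-sided limits of $K$ there vanish), which holds for every kernel considered in this paper; a genuine jump of $K$ at $\pm 1$ could in principle defeat pointwise stability. Everything else is the routine computation above, and if one wishes to avoid naming $L$ explicitly one can instead invoke uniform continuity of $K$ on $\mathbb{R}$ and pick $\delta$ from its modulus of continuity, with the same conclusion.
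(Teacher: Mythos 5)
Your proof is correct and follows essentially the same route as the paper: both bound the difference term by term, use the triangle inequality $\bigl|\,\norm{x-x'}-\norm{\eta(x)-\eta(x')}\,\bigr|\le 2\norm{\eta}_\infty$, and then invoke the regularity of $K$ to control $\hat f_X(x)-\hat f_Y(\eta(x))$; the paper phrases the last step via uniform continuity of $K$ where you instead extract a Lipschitz constant $L=\sup_{(-1,1)}\abs{K'}$, which gives an explicit $\delta = \epsilon\,h_N^{\,n+1}/(2L)$ but is not a different argument. Your aside about continuity of $K$ at $\pm 1$ is a genuine refinement worth keeping: the paper simply asserts that the conditions of Theorem \ref{thm:kde} imply $K$ is uniformly continuous, yet conditions (3)--(4) alone allow a jump at $\pm 1$ (in which case uniform continuity, and the Lipschitz bound, would fail), so one does need the additional fact---true for the Epanechnikov, biweight, and triweight kernels used here---that the one-sided limits of $K$ at $\pm 1$ vanish.
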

\begin{proof}
The conditions of Theorem \ref{thm:kde} imply that $K$ is uniformly continuous. There is a $\delta > 0$ such that if $\abs{a-b} < 2\delta$, then $\abs{K(a/h_N) - K(b/h_N)} < h_N^n\epsilon$. The value of $\delta$ depends only on the kernel $K$, the dimension $n$, and the bandwidth $h_N = N^{\frac{-1}{n+4}}$. If $\norm{\eta}_{\infty} < \delta$, then
\begin{equation*}
    \abs{\hat{f}_X(x) - \hat{f}_Y(\eta(x))} \leq \frac{1}{Nh_N^n} \sum_{z \in X}\abs{K\Bigg(\frac{\norm{z - x}}{h_N}\Bigg) - K\Bigg(\frac{\norm{\eta(z) - \eta(x)}}{h_N}\Bigg)} < \epsilon.
\end{equation*}
\end{proof}

\begin{definition}\label{def:general}
We say that $X$ is in \emph{general position} with respect to parameter $k$ if every $x \in X$ has a unique set of $k$-nearest neighbors. That is, for all $x \in X$, there is a subset $N_k(x) \subseteq X$ of size $k$ such that $\norm{x - u} < \norm{x - v}$ for all $u \in N_k(x)$ and $v \in X\setminus N_k(x)$.
\end{definition}
A point cloud $X$ is in general position with respect to all $k$ whenever $\norm{x - y} \neq \norm{x - z}$ for all $x, y, z \in X$. If $X$ is a finite point cloud sampled randomly from a smooth probability density function, then $X$ is almost surely in general position for all $k$ and $X$ is always in general position with respect to $k = |X|$.

\begin{lemma}[Stability of Riemannian-Distance Estimate]\label{lem:geodist_stability}
Let $X$ and $Y$ be point clouds that consist of $N$ points each, and let $\hat{f}_X$ and $\hat{f}_Y$ denote the respective density estimators, as defined by Equation \ref{eq:fest}, for some kernel $K$ that satisfies the conditions of Theorem \ref{thm:kde}. For any $\epsilon > 0$, there is a $\delta > 0$ such that if $\eta: X \to Y$ is a bijection, $\norm{\eta}_{\infty} < \delta$, and $X$ is in general position with respect to $k$, then
\begin{equation*}
    \abs{\widehat{d_{M, \g}}(x, x') - \widehat{d_{M, \g}}(\eta(x), \eta(x'))} < \epsilon
\end{equation*}
for all $x, x' \in X$, where $\widehat{d_{M, \g}}$ is the estimate of Riemannian distance in $(M, \g)$ that is defined in Section \ref{sec:geo_est}. The value of $\delta$ depends on the point cloud $X$, the number of points $N$, the kernel $K$ used by the density estimators, and the dimension $n$.
\end{lemma}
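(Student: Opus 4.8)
The plan is to show that, once $\delta$ is small enough, the bijection $\eta$ carries the weighted graph $G_{kNN}(X)$ used to compute $\widehat{d_{M, \g}}$ onto the weighted graph $G_{kNN}(Y)$ with only a tiny perturbation of the edge weights, and then to transfer this to the shortest-path distances. I would split the argument into two ingredients: (i) the \emph{combinatorial} structure of the $k$-nearest neighbor graph is preserved by $\eta$, which is exactly where the general-position hypothesis enters; and (ii) each edge weight changes by at most a controlled amount, which combines Lemma \ref{lem:hatf_stability} with the uniform continuity of the weight function.

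For ingredient (i), I would use general position to define the gap
\begin{equation*}
    \gamma := \min_{x \in X}\Big(\min_{v \in X \setminus N_k(x)}\norm{x - v} - \max_{u \in N_k(x)}\norm{x - u}\Big) > 0\,,
\end{equation*}
where $N_k(x)$ denotes the unique set of $k$-nearest neighbors of $x$. If $\norm{\eta}_\infty < \gamma/4$, then for every $x \in X$, every $u \in N_k(x)$, and every $v \in X \setminus N_k(x)$ the triangle inequality gives $\norm{\eta(x) - \eta(u)} < \norm{x - u} + 2\norm{\eta}_\infty$ and $\norm{\eta(x) - \eta(v)} > \norm{x - v} - 2\norm{\eta}_\infty$, hence $\norm{\eta(x) - \eta(u)} < \norm{\eta(x) - \eta(v)}$. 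This forces $\eta(N_k(x))$ to be the (unique) set of $k$-nearest neighbors of $\eta(x)$ in $Y$, so $Y$ is also in general position with respect to $k$ and $\eta$ restricts to a bijection between the edge sets of $G_{kNN}(X)$ and $G_{kNN}(Y)$ (an edge $\{x_i, x_j\}$ is present exactly when $x_j \in N_k(x_i)$ or $x_i \in N_k(x_j)$, a condition that $\eta$ preserves). Thus $\eta$ is a graph isomorphism; in particular it sends connected components to connected components and simple paths to simple paths.

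For ingredient (ii), I would observe that the weight function is obtained by applying $(s, t) \mapsto \sqrt[n]{\alpha(N)s}\,t$, which is uniformly continuous on any compact rectangle (there is no Lipschitz issue at $s = 0$, only uniform continuity is needed). Since the kernel density estimates $\hat f_X$, $\hat f_Y$ are bounded (by Equation \ref{eq:fest} together with Lemma \ref{lem:hatf_stability}) and the pairwise Euclidean distances are bounded once $\norm{\eta}_\infty \le 1$, I may restrict to a fixed compact rectangle. Using $\abs{\max\{a,b\} - \max\{c,d\}} \le \max\{\abs{a-c}, \abs{b-d}\}$ to compare the ``$\max$'' of the density estimates, Lemma \ref{lem:hatf_stability} to bound $\abs{\hat f_X(x) - \hat f_Y(\eta(x))}$, and the triangle inequality to bound $\abs{\norm{x_i - x_j} - \norm{\eta(x_i) - \eta(x_j)}} \le 2\norm{\eta}_\infty$, I would then choose $\delta > 0$ small enough (depending on $X$, $N$, $K$, $n$) so that, in addition to $\delta < \gamma/4$ and $\delta \le 1$, every edge $\{x_i, x_j\}$ of $G_{kNN}(X)$ satisfies $\abs{w(x_i, x_j) - w(\eta(x_i), \eta(x_j))} < \epsilon/N$.

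Finally I would conclude as follows. Fix $x, x' \in X$ with $\norm{\eta}_\infty < \delta$. If $x$ and $x'$ lie in different components of $G_{kNN}(X)$, then by ingredient (i) so do $\eta(x), \eta(x')$ in $G_{kNN}(Y)$, and both estimates equal $\infty$, so there is nothing to prove. Otherwise let $P$ be a shortest weighted path from $x$ to $x'$ in $G_{kNN}(X)$; since the edge weights are positive we may take $P$ simple, so it uses fewer than $N$ edges, and its $\eta$-image is a path from $\eta(x)$ to $\eta(x')$ in $G_{kNN}(Y)$ whose weighted length exceeds $\widehat{d_{M, \g}}(x, x')$ by less than $N \cdot \epsilon/N = \epsilon$; hence $\widehat{d_{M, \g}}(\eta(x), \eta(x')) \le \widehat{d_{M, \g}}(x, x') + \epsilon$. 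Applying the symmetric argument to a shortest simple path from $\eta(x)$ to $\eta(x')$ in $G_{kNN}(Y)$, pulled back via the isomorphism $\eta^{-1}$, yields the reverse inequality, so $\abs{\widehat{d_{M, \g}}(x, x') - \widehat{d_{M, \g}}(\eta(x), \eta(x'))} < \epsilon$. I expect the main obstacle to be ingredient (i): without general position, the $k$-nearest neighbor graph can change combinatorially under an arbitrarily small perturbation (a tie at the $k$-th nearest neighbor can be broken either way), which would make the shortest-path distances jump. The general-position hypothesis is precisely what gives the uniform gap $\gamma$ that rules this out, and once the two graphs are identified the remainder is a routine perturbation estimate.
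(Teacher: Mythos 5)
Your proposal is correct and follows essentially the same route as the paper's proof: first use general position to show that $\eta$ induces an isomorphism of the unweighted $k$-nearest neighbor graphs, then bound the perturbation of each edge weight (using Lemma \ref{lem:hatf_stability} plus boundedness of the kernel and the point cloud), and finally transfer the per-edge bound to shortest-path lengths via the observation that shortest paths are simple and hence use at most $N-1$ edges. The only cosmetic differences are that you make the general-position gap $\gamma$ and the two-sided shortest-path comparison explicit, and frame the edge-weight bound via uniform continuity of $(s,t)\mapsto\sqrt[n]{\alpha(N)s}\,t$ on a compact rectangle, whereas the paper carries out the same estimate by an explicit product-difference decomposition.
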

\begin{proof}
Let $G_{kNN}(X)$ and $G_{kNN}(Y)$ be the weighted $k$-nearest neighbor graphs defined in Section \ref{sec:geo_est}, and let $N_k(x)$ and $N_k(y)$ denote the sets of $k$-nearest neighbors of $x \in X$ and $y \in Y$, respectively. There is a $\delta' > 0$ such that if  $\norm{\eta}_{\infty}< \delta'$ and $X$ is in general position with respect to $k$, then $Y$ is in general position with respect to $k$ and $N_k(\eta(x)) = \eta(N_k(x))$ for all $x$. Thus $\eta$ induces an isomorphism between the underlying \emph{unweighted} $k$-nearest neighbor graphs.

Let $w$ denote the weight function on the edges of $G_{kNN}(X)$ and $G_{kNN}(Y)$. The conditions on $K$ imply that $K$ is bounded above by some $B$, so $\sup_{z \in \mathbb{R}^m}(\hat{f}_Y(z)) \leq B/h_N^n$. By Lemma \ref{lem:hatf_stability}, there is a $\delta''$ such that if $\norm{\eta}_{\infty} < \delta''$, then
\begin{equation*}
    \abs{\sqrt[n]{\max\{\hat{f}(x_i), \hat{f}(x_j)\}} - \sqrt[n]{\max\{\hat{f}(\eta(x_i)), \hat{f}(\eta(x_j))\}}} < \frac{\epsilon}{2(N-1)\diam(X)\sqrt[n]{\alpha(N)}} \qquad{\text{for all } x_i, x_j \in X}\,.
\end{equation*}
Let $\delta = \min\{\delta', \delta'', h_N^n/[4(N-1)B\sqrt[n]{\alpha(N)}]\}$. If $\norm{\eta}_{\infty} < \delta$, then the difference between the weight of $(x_i, x_j)$ in $G_{kNN}(X)$ and the weight of $(\eta(x_i), \eta(x_j))$ in $G_{kNN}(Y)$ satisfies
\begin{align}
    \abs{w(x_i, x_j) - w(\eta(x_i), \eta(x_j))} &\leq \Big( \sqrt[n]{\alpha(N)} \norm{x_i-x_j} \abs{\sqrt[n]{\max\{\hat{f}(x_i), \hat{f}(x_j)\}} - \sqrt[n]{\max\{\hat{f}(\eta(x_i)), \hat{f}(\eta(x_j))\}}} \notag \\
    &\qquad +\sqrt[n]{\alpha(N)} \sqrt[n]{\max\{\hat{f}(\eta(x_i), \hat{f}(\eta(x_j))\}} \abs{\norm{x_i - x_j} - \norm{\eta(x_i) - \eta(x_j)}}\Big) \notag \\
    &< \frac{\epsilon}{2(N-1)} + \sqrt[n]{\alpha(N)}\frac{B}{h_N^n}2 \norm{\eta}_{\infty} \notag \\
    &< \frac{\epsilon}{N-1} \label{eq:weight_bound}.
\end{align}

If $\gamma$ is a path in $G_{kNN}(X)$ with at most $N-1$ edges, then the difference between the weighted lengths of $\gamma$ and $\eta(\gamma)$ satisfies
\begin{equation}\label{eq:pathlengthbound}
    \abs{\text{length}(\gamma) - \text{length}(\eta(\gamma))} < \epsilon
\end{equation}
by Equation \ref{eq:weight_bound}. The shortest weighted path between any two vertices has at most $N-1$ edges because if $\gamma$ is a path in $G_{kNN}(X)$ or $G_{kNN}(Y)$ with at least $N$ edges, then it must contain a cycle, and removing the cycle would create a shorter path. It follows from Equation \ref{eq:pathlengthbound} that
\begin{equation*}
    \abs{\widehat{d_{M, \g}}(x_i, x_j) - \widehat{d_{M, \g}}(\eta(x_i), \eta(x_j))} < \epsilon
\end{equation*}
for all $x_i$, $x_j \in X$.
\end{proof}

\begin{theorem}\label{thm:dvr_stability}
Let $X$ and $Y$ be point clouds that consist of $N$ points each. For any $\epsilon > 0$, there is a $\delta > 0$ such that if  $W_{\infty}(X, Y) < \delta$ and $X$ is in general position with respect to $k$, then $H(\dvr(n, k, X))$ and $H(\dvr(n, k, Y))$ are $\epsilon$-interleaved. The value of $\delta$ depends on the point cloud $X$, the number of points $N$, the kernel $K$ used by the density estimator, and the dimension $n$.
\end{theorem}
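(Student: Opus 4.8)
The plan is to derive this theorem directly from the stability of the Riemannian-distance estimate (Lemma \ref{lem:geodist_stability}) together with Proposition \ref{prop:interleave}. The strategy is: given $\epsilon > 0$, I would produce a $\delta > 0$ so that whenever $W_{\infty}(X, Y) < \delta$ --- witnessed by a bijection $\eta\colon X \to Y$ with $\norm{\eta}_{\infty} < \delta$ --- and $X$ is in general position with respect to $k$, both $\eta$ and $\eta^{-1}$ are $\epsilon$-simplicial as maps between the filtered complexes $\dvr(n, k, X)$ and $\dvr(n, k, Y)$. Proposition \ref{prop:interleave} then immediately gives that $H(\dvr(n, k, X))$ and $H(\dvr(n, k, Y))$ are $\epsilon$-interleaved, which is exactly the claim.

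To carry this out, I would apply Lemma \ref{lem:geodist_stability} with target tolerance $2\epsilon$ in place of $\epsilon$. This yields a $\delta > 0$ (depending on $X$, $N$, $K$, and $n$) such that $\norm{\eta}_{\infty} < \delta$ forces $Y$ to be in general position with respect to $k$ as well --- so that $\dvr(n, k, Y)$ is well-defined --- and forces $\abs{\widehat{d_{M, \g}}(x, x') - \widehat{d_{M, \g}}(\eta(x), \eta(x'))} < 2\epsilon$ for all $x, x' \in X$. Now if $x_J$ is a simplex of $\dvr(n, k, X)_t$, i.e.\ $\widehat{d_{M, \g}}(x_i, x_j) \leq 2t$ for all $i, j \in J$, then $\widehat{d_{M, \g}}(\eta(x_i), \eta(x_j)) < 2t + 2\epsilon = 2(t + \epsilon)$ for all $i, j \in J$, so $\eta(x_J)$ is a simplex of $\dvr(n, k, Y)_{t + \epsilon}$; hence $\eta$ is $\epsilon$-simplicial. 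Since $\norm{\eta^{-1}}_{\infty} = \norm{\eta}_{\infty} < \delta$ and the distance bound in Lemma \ref{lem:geodist_stability} is symmetric under swapping $(X, \eta)$ with $(Y, \eta^{-1})$ (substitute $y = \eta(x)$, $y' = \eta(x')$), the same argument shows $\eta^{-1}$ is $\epsilon$-simplicial.

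Because essentially all of the technical content has been packaged into Lemma \ref{lem:geodist_stability}, I do not expect a genuine obstacle here; the remaining work is bookkeeping. The one point that warrants a moment's care is infinite distances: when $\widehat{d_{M, \g}}(x, x') = \infty$ because $x$ and $x'$ lie in different components of $G_{kNN}(X)$, the isomorphism of unweighted $k$-nearest-neighbor graphs supplied inside the proof of Lemma \ref{lem:geodist_stability} guarantees that $\eta(x)$ and $\eta(x')$ likewise lie in different components of $G_{kNN}(Y)$, so $\widehat{d_{M, \g}}(\eta(x), \eta(x')) = \infty$ as well; then no simplex of $\dvr(n, k, X)$ or of $\dvr(n, k, Y)$ at any finite filtration level contains both of those vertices, so $\epsilon$-simpliciality is unaffected. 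The other thing worth flagging --- and it is faithfully reflected in the statement --- is that $\delta$ genuinely depends on the point cloud $X$ (through Lemma \ref{lem:geodist_stability}, e.g.\ via $\diam(X)$ and the gaps among nearest-neighbor distances that keep $X$ in general position), not only on $N$, $K$, and $n$.
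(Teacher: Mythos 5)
Your proposal is correct and follows exactly the same route as the paper's proof: apply Lemma \ref{lem:geodist_stability} with tolerance $2\epsilon$ to obtain $\delta$, verify that the resulting bijection $\eta$ and its inverse are both $\epsilon$-simplicial, and invoke Proposition \ref{prop:interleave}. Your extra remarks about infinite distances and the symmetry of the bound are sound but not strictly needed, since $\epsilon$-simpliciality only involves pairs with finite (in fact $\leq 2t$) estimated distance and the absolute value in Lemma \ref{lem:geodist_stability} already gives both directions.
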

\begin{proof}
Choose $\delta$ as in the statement of Lemma \ref{lem:geodist_stability} for $2\epsilon$. 
If $W_{\infty}(X, Y) < \delta$, then there is a bijection $\eta: X \to Y$ such that $\norm{\eta}_{\infty} < \delta$. Suppose $x_J \in \dvr(n, k, X)_t$. For any $y, y' \in \eta(x_J)$, we have
\begin{equation*}
    \widehat{d_{M, \g}}(y, y') \leq \widehat{d_{M, \g}}(\eta^{-1}(y), \eta^{-1}(y')) + 2\epsilon \leq 2(t + \epsilon).
\end{equation*}
Thus $\eta(x_J) \in \dvr(n, k, Y)_{t + \epsilon}$, so $\eta$ is $\epsilon$-simplicial. By an analogous argument, $\eta^{-1}$ is $\epsilon$-simplicial. The theorem then follows from Prop \ref{prop:interleave}.
\end{proof}


\section{Empirical Performance}\label{sec:compare}

\subsection{Two Circles of Different Radii}\label{sec:circles}

We return to our example in the introduction, the point cloud in Figure \ref{fig:twocircles}, in which we sample $N = 500$ points from the disjoint union of two circles $C_1$ and $C_2$ of radius $R_1 = 1$ and $R_2 = 5$, respectively. The dimension of the manifold is $n = 1$. The probability density function is
\begin{equation*}
    f(x) = \begin{cases}
        \frac{1}{4\pi R_1} \,, & x \in C_1 \\
        \frac{1}{4\pi R_2} \,, & x \in C_2.
    \end{cases}
\end{equation*}
We estimate the density at each point by using the kernel density estimator defined in Equation \ref{eq:fest}, with the biweight kernel, and we compute shortest paths in the weighted $k$-nearest neighbor graph $G_{kNN}(X)$ to estimate Riemannian distances. To choose the parameter $k$, we follow the heuristic outlined in Section \ref{sec:geo_est}. For increasing $k$, we calculate the number of connected components in $G_{kNN}(X)$. We show the results in Figure \ref{fig:Gknn_num_comps} for $k \in \{1, \ldots, 11\}$. The number of components decreases from $k = 1$ to $k = 5$, and then remains constant at two for $k \in \{5, \ldots, 74\}$. Therefore, we set $k = 5 + \ell = 10$. (See Section \ref{sec:geo_est} for a definition of $\ell$.) In this example, the connected components of $G_{kNN}(X)$ correspond exactly to the connected components of the manifold $M$.

\begin{figure}
    \centering
    \includegraphics[scale=.75]{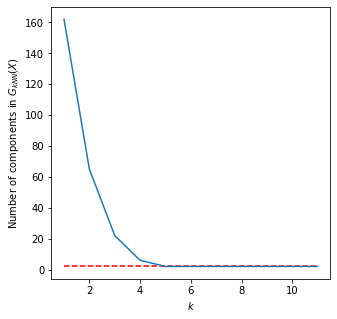}
    \caption{The number of components in $G_{kNN}(X)$ for $k \in \{1, \ldots, 11\}$, where $X$ is the point cloud in Figure \ref{fig:twocircles} sampled from the disjoint union of two circles of different radii. For $k  \in \{5, \ldots, 74\}$, the number of components is two, which is equal to the number of connected components in the manifold that we sampled $X$ from.}
    \label{fig:Gknn_num_comps}
\end{figure}
    We show the persistence diagram for $H(\dvr(n, k, X))$, with parameters $n = 1$ and $k = 10$, in Figure \ref{fig:two_circles_dvr}. Each circle has equally high resolution in the point cloud (i.e., $f(x)\sigma(x)$ is a constant function), so we expect that the lifetimes of the corresponding homology classes are equal. Indeed, the two most-persistent 1D homology classes in $H(\dvr(n, k, X))$ have lifetimes that are much closer in length than in $H(\textnormal{VR}(X))$, whose persistence diagram is shown in Figure \ref{fig:two_circles_vr}. In $H(\dvr(n, k, X))$, the two most-persistent 1D homology classes have coordinates $(0.171, 2.306)$ and $(0.130, 1.537)$, respectively; the less-persistent class has a lifetime that is $65.9\%$ the lifetime of the most-persistent class.
    By contrast, the two most persistent 1D homology classes in $H(\textnormal{VR}(X))$ have coordinates $(0.886, 8.665)$ and $(0.110, 1.733)$, respectively; the less-persistent class has a lifetime that is only $20.9\%$ the lifetime of the most-persistent class.
    One might incorrectly deduce from this that the 1D homology class with the shorter lifetime is merely noise, even though we have an equal amount of ``information'' about both circles.
    The two infinite 0D homology classes in $H(\dvr(n, k, X))$ correspond to the two connected components of the underlying manifold. The persistence diagram for $H(\textnormal{VR}(X))$ has two 0D homology classes that are significantly more persistent than the others, but only one is infinite.
\begin{figure}
    \centering
    \subfloat[]{\includegraphics[width = .5\textwidth]{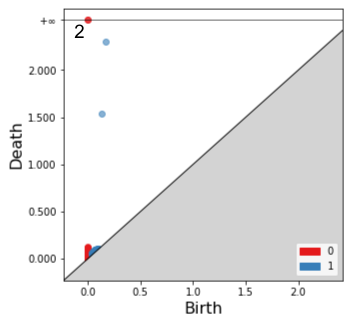}\label{fig:two_circles_dvr}}
    \subfloat[]{\includegraphics[width = .5\textwidth]{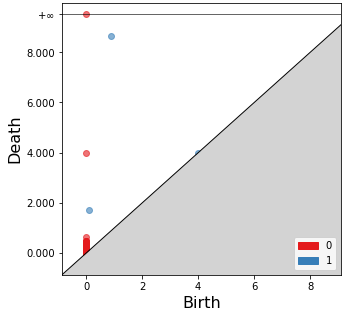}\label{fig:two_circles_vr}}
    \caption{Let $X$ be the point cloud in Figure \ref{fig:twocircles}, sampled from the disjoint union of two circles with different radii. (a) The persistence diagram for $H(\dvr(n, k, X))$, with parameters $n = 1$ and $k = 10$. The points are labeled with their multiplicity. (b) The persistence diagram for $H(\textnormal{VR}(X))$.}
\end{figure}

We also test the choice of kernel and the choice of $k$. The other two kernels that we test are the Epanechnikov kernel and the triweight kernel, and the other two values of $k$ that we test are $k=5$ and $k=15$. In Figure \ref{fig:other_choices}, we show the persistence diagrams, and in Table \ref{table:choices}, we summarize the most important features of the persistence diagrams. As we did above, we count the number of infinite 0D homology classes and we calculate the ratio of the lifetimes of the two most-persistent 1D homology classes (where a ratio closer to 1 is better). The triweight kernel with $k=10$ yields the highest ratio ($.678$), slightly higher than the ratio for the biweight kernel with $k=10$. Using a value of $k=5$ (with the biweight kernel) leads to very poor results (a ratio of $.011$) because $k = 5$ is too low for all of the adjacent points in $X$ on the largest circle to be connected by an edge in $G_{kNN}(X)$.

\begin{figure}
    \centering
    \subfloat[Epanechnikov kernel \\ $k = 10$]{\includegraphics[width = .5\textwidth]{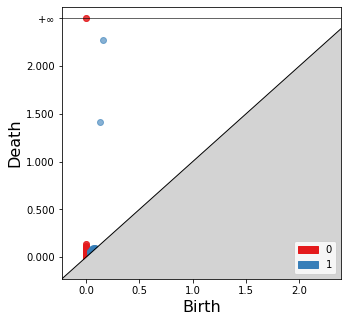}\label{fig:epan}} 
    \subfloat[Triweight kernel \\ $k = 10$]{\includegraphics[width = .5\textwidth]{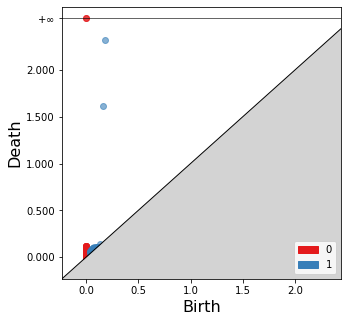}\label{fig:triweight}} \\
    \subfloat[Biweight kernel \\ $k = 5$]{\includegraphics[width = .5\textwidth]{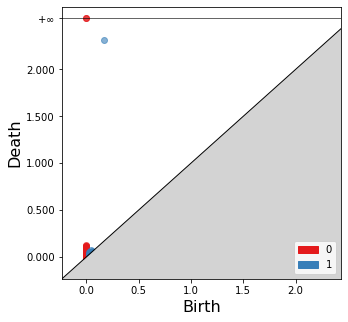}\label{fig:smallk}}
    \subfloat[Biweight kernel \\ $k = 15$]{\includegraphics[width = .5\textwidth]{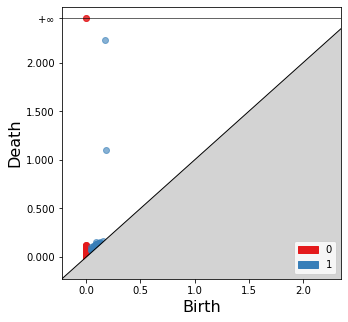}\label{fig:bigk}}
    \caption{Let $X$ be the point cloud in Figure \ref{fig:twocircles}, sampled from the disjoint union of two circles with different radii. Each plot is a persistence diagram for $H(\dvr(n, k, X))$, where we vary the value of $k$ and the choice of kernel for density estimation. The dimension is $n = 1$. In all persistence diagrams, the 0D homology class with an infinite death time has multiplicity 2.}
    \label{fig:other_choices}
\end{figure}

\begin{table}[ht]
\centering
\caption{Comparison of Kernel Functions and $k$ Values}
\begin{tabular}{||c c c c||} 
 \hline
 $k$ & Kernel function & $\frac{\text{Lifetime of second-most persistent 1D homology class}}{\text{Lifetime of most-persistent 1D homology class}}$ & Number of infinite 0D homology classes \\ [0.5ex] 
 \hline\hline
 10 & Biweight & .659 & 2\\ 
 \hline
 10 & Epanechnikov & .604 & 2\\
 \hline
 10 & Triweight & .678 & 2\\
 \hline
 5 & Biweight & .011 & 2\\
 \hline
 15 & Biweight & .442 & 2\\ [1ex] 
 \hline
\end{tabular}
\label{table:choices}
\end{table}

\subsection{Cassini Curve}
Our second example illustrates the power of the conformal-invariance property. We sample a point cloud from a Cassini curve \cite{cassini_book}, which is homeomorphic to $\mathbb{S}^1$. In polar coordinates, the equation for our Cassini curve is
\begin{equation}\label{eq:cassini}
r^4 - 2r^2 \cos2\theta = e^4 - 1\,,
\end{equation}
where $e$ is the eccentricity; we set $e = 1.01$. We sample $\theta$ uniformly at random from $[0, 2 \pi)$ and map $\theta \mapsto (\theta, r(\theta))$ as defined by Equation \ref{eq:cassini}. This is a conformal mapping from $\mathbb{S}^1$ to the Cassini curve. The mapping pushes the points $\theta = \pi/2$ and $\theta = 3\pi/2$ on the circle much closer to each other, thus decreasing the local feature size (as defined in \cite{feature_size} and in the introduction) near those points. However, the mapping also increases the density near those same points, so the mapping preserves a high level of resolution of the manifold at every point.

In Figure \ref{fig:cassini_sample}, we show a point cloud $X$ that consists of $N = 200$ points sampled from the Cassini curve. We estimate the density at each point by using the kernel density estimator defined in Equation \ref{eq:fest}, with the biweight kernel. Following the heuristic from Section \ref{sec:geo_est}, we choose $k = 12$, for which $G_{kNN}(X)$ is a connected graph (just as the Cassini curve is connected).

\begin{figure}
    \centering
    \includegraphics[scale=.8]{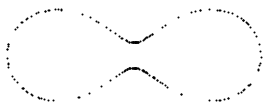}
    \caption{A point cloud $X$ that consists of $N = 200$ points sampled from the Cassini curve defined in polar coordinates $(\theta, r)$ by Equation \ref{eq:cassini}. The angle $\theta$ is sampled uniformly at random.
    }
    \label{fig:cassini_sample}
\end{figure}

We show the persistence diagram for $H(\dvr(n, k, X))$ (with parameters $n = 1$ and $k = 12$) in Figure \ref{fig:cassini_dvr}. The diagram shows only one infinite 0D homology class and one 1D homology class, which is consistent with the homology of the Cassini curve because the Cassini curve has one 0D homology generator (it's connected) and one 1D homology generator. We compare to the persistence diagram for $H(\textnormal{VR}(X))$, shown in Figure \ref{fig:cassini_vr}. The diagram shows two nearly equally-persistent 1D homology classes, which is not consistent with the topology of the Cassini curve. 

\begin{figure}
    \centering
    \subfloat[]{\includegraphics[width = .5\textwidth]{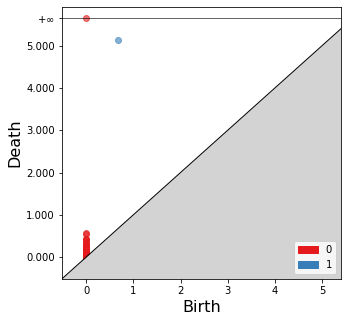}\label{fig:cassini_dvr}}
    \subfloat[]{\includegraphics[width = .5\textwidth]{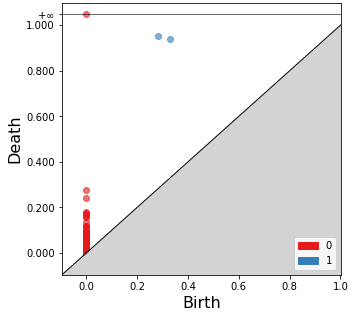}\label{fig:cassini_vr}}
    \caption{Let $X$ be the point cloud in Figure \ref{fig:cassini_sample}, sampled from the Cassini curve. (a) The persistence diagram for $H(\dvr(n, k, X))$, with parameters $n = 1$ and $k = 12$. (b) The persistence diagram for $H(\textnormal{VR}(X))$.}
\end{figure}


\subsection{A Point Cloud with Outliers}\label{sec:outliers}
In this subsection, we evaluate the performance of $\dvr$ on a point cloud that contains outliers. Because outliers lie in low-density regions, where density-scaled Riemannian distances are much shorter than Euclidean distances, one might be concerned that outliers that are not on the manifold would get connected to points on the manifold too quickly. Indeed, this is what happens in the density-weighted complexes of Appendix \ref{sec:weighted}. (We discuss this at the end of the present subsection.) Here, we show that $\dvr$ does not suffer from this problem, and in fact $\dvr$ outperforms the Vietoris--Rips complex on the point cloud that we test on.

The point cloud $X$ in Figure \ref{fig:noisy_circle} consists of $200$ points sampled uniformly at random from the manifold $M = \mathbb{S}^1$ and $10$ points sampled uniformly at random from the square $[-1, 1]^2$. The dimension of the manifold is $n = 1$, and we set $k = 10$ by following the heuristic in Section \ref{sec:geo_est}. We estimate the density at each point by using the kernel density estimator defined in Equation \ref{eq:fest}, with the biweight kernel.

\begin{figure}
    \centering
    \includegraphics[scale=.5]{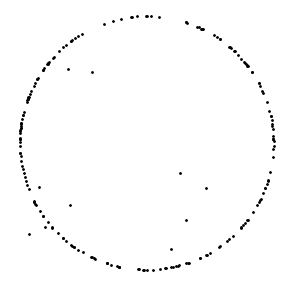}
    \caption{A point cloud $X$ that consists of $200$ points drawn uniformly at random from $\mathbb{S}^1$ and $10$ points drawn uniformly at random from $[-1, 1]^2$.
    }
    \label{fig:noisy_circle}
\end{figure}

In Figure \ref{fig:noisy_ph}, we show the persistence diagrams for $H(\dvr(n, k, X))$ and $H(\textnormal{VR}(X))$. Both diagrams have one 1D homology class that has a much longer lifetime than the others. This correctly reflects the topology of $\mathbb{S}^1$, which has one generator for its 1D homology. The lifetime of the most-persistent 1D homology class is much longer in $H(\dvr(n, k, X))$ than in $H(\textnormal{VR}(X))$. In $H(\dvr(n, k, X))$, the most-persistent 1D homology class is born at $t = .299$ and dies at at $t = 3.107$, whereas in $H(\textnormal{VR}(X))$, it is born at $t = 0.212$ and dies much earlier at $t = 1.102$. 

\begin{figure}
    \centering
    \subfloat[]{\includegraphics[width =.5\linewidth]{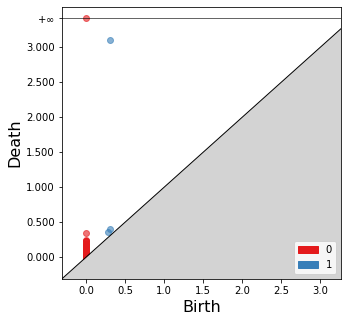}}
    \subfloat[]{\includegraphics[width = .5\linewidth]{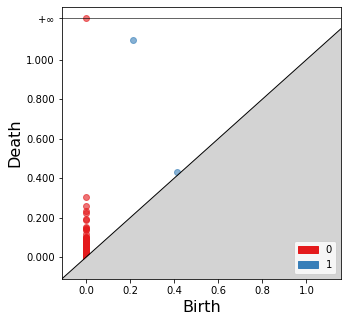}}
    \caption{Let $X$ be the point cloud in Figure \ref{fig:noisy_circle}, which consists of $200$ points sampled uniformly at random from $\mathbb{S}^1$ and $10$ points sampled uniformly at random from $[-1, 1]^2$. (a) The persistence diagram for $\dvr(n, k, X)$, with parameters $n = 1$ and $k = 10$. The 1D homology class with the longest lifetime has coordinates $(0.299, 3.107)$. (b) The persistence diagram for $H(\textnormal{VR}(X))$. The 1D homology class with the longest lifetime has coordinates $(0.212, 1.102)$.}
    \label{fig:noisy_ph}
\end{figure}

We also compare $\dvr$ to the density-weighted Vietoris--Rips complex (Appendix \ref{sec:weighted}) to show that the density-weighted Vietoris--Rips complex may not perform well when there are outliers in the point cloud. In Figure \ref{fig:noisy_1skeleton}, we show the 1-skeleton of the density-weighted complexes for increasing filtration level $t$. The outliers quickly connect to points on the circle. In Figure \ref{fig:noisy_weightedph}, we show the persistence diagram for the density-weighted Vietoris--Rips complex. The 1D persistence is a poor reflection of the 1D homology of the underlying manifold, which has a single generator.

\begin{figure}
    \centering
    \subfloat[$t = 0.05$]{\includegraphics[width = .3\linewidth]{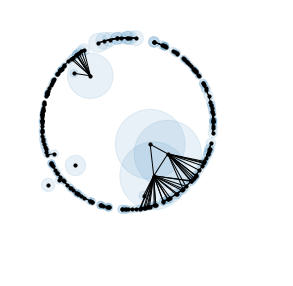}}
    \subfloat[$t = 0.1$]{\includegraphics[width = .3\linewidth]{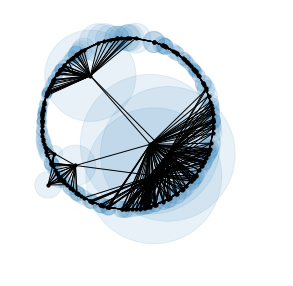}}
    \subfloat[$t = 0.15$]{\includegraphics[width = .3\linewidth]{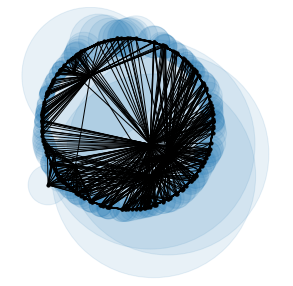}}
    \caption{Let $X$ be the point cloud in Figure \ref{fig:noisy_circle}, which consists of $200$ points sampled uniformly at random from $\mathbb{S}^1$ and $10$ points drawn uniformly at random from $[-1, 1]^2$. In each plot, the ball centered at $x \in X$ has radius $\frac{t}{\sqrt[n]{\alpha(N)\hat{f}_N(x)}}$, where $\hat{f}_N(x)$ is the estimate of $f(x)$ defined by Equation \ref{eq:fest}. We display the 1-skeleton of the density-weighted Vietoris--Rips complex (equivalently, the 1-skeleton of the density-weighted \v{C}ech complex) for $t = 0.05$, $t = 0.1$, and $t = 0.15$.}
    \label{fig:noisy_1skeleton}
\end{figure}

\begin{figure}
    \centering
    \includegraphics[width = .5\linewidth]{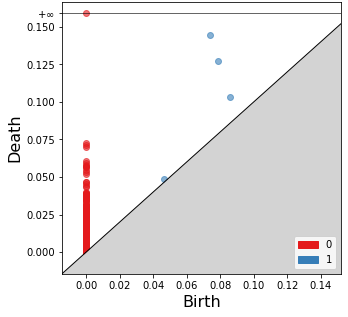}
    \caption{The persistence diagram for the persistent homology of the density-weighted Vietoris--Rips complex for the point cloud $X$ in Figure \ref{fig:noisy_circle}, which consists of $200$ points sampled uniformly at random from $\mathbb{S}^1$ and $10$ points drawn uniformly at random from $[-1, 1]^2$.}
    \label{fig:noisy_weightedph}
\end{figure}


\subsection{Application: Clustering}\label{sec:clustering}
In this example, we show that $\dvr$ is effective at identifying the number of clusters in a point cloud whose clusters have different densities. We sample $N = 200$ points from the union of the squares $[0,1]\times[0,1]$ and $[1.5, 2.5]\times [0,1]$. The probability density function is
\begin{equation}\label{eq:clusters_f}
    f(x) = \begin{cases}
        1/6 \,, & x \in [0,1]\times[0,1] \\
        5/6 \,, & x \in [1.5, 2.5]\times[0,1].
    \end{cases}
\end{equation}
We show the point cloud $X$ in Figure \ref{fig:clusters}.
\begin{figure}
    \centering
    \includegraphics{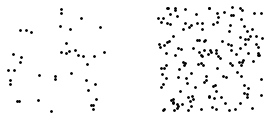}
    \caption{A point cloud $X$ that consists of $N = 200$ points sampled from the union of the squares $[0,1]^2$ and $[1.5, 2.5]\times [0,1]$ with probability density function given by Equation \ref{eq:clusters_f}.}
    \label{fig:clusters}
\end{figure}

We show the persistence diagram for $H(\dvr(n, k, X))$, with parameters $n = 2$ and $k = 9$ (chosen by the heuristic in Section \ref{sec:geo_est}) in Figure \ref{fig:clusters_dvr}. The two infinite 0D homology classes correspond to the two connected components of the underlying manifold (the two squares). In Figures \ref{fig:clusters_vr}, \ref{fig:clusters_knn}, and \ref{fig:clusters_weighted}, we compare the persistence diagram for $H(\dvr(n, k, X))$ to the persistence diagrams for the Vietoris--Rips complex, the $\textnormal{KNN}$ complex (Appendix \ref{sec:knn}), and the density-weighted Vietoris--Rips complex (Appendix \ref{sec:weighted}). None of these three persistence diagrams show two equally-persistent 0D homology classes. Moreover, the persistence diagrams for the $\textnormal{KNN}$ filtration and the density-weighted filtration show several 1D homology classes with lifetimes that are almost as long as the 0D homology class with the 2nd-longest lifetime, even though the underlying manifold has trivial 1D homology.
\begin{figure}
    \centering
    \subfloat[]{\includegraphics[width = .5\textwidth]{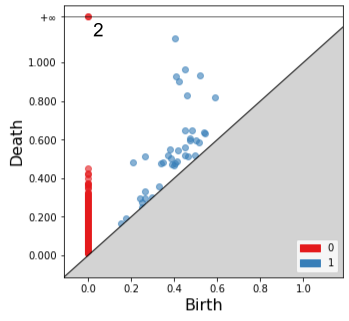}\label{fig:clusters_dvr}} 
    \subfloat[]{\includegraphics[width = .5\textwidth]{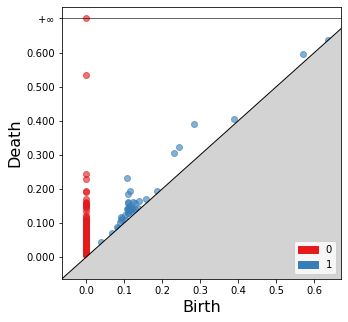} \label{fig:clusters_vr}} \\
    \subfloat[]{\includegraphics[width=.5\textwidth]{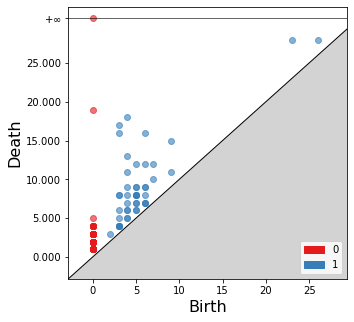}\label{fig:clusters_knn}}
    \subfloat[]{\includegraphics[width=.5\textwidth]{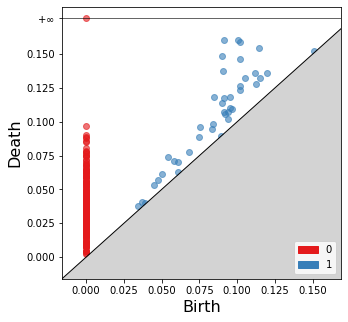}\label{fig:clusters_weighted}}
    \caption{Let $X$ be the point cloud in Figure \ref{fig:clusters} sampled from the union of the squares $[0,1]^2$ and $[1.5, 2.5]\times [0,1]$ with probability density function given by Equation \ref{eq:clusters_f}. (a) The persistence diagram for $H(\dvr(n, k, X))$, with parameters $n = 2$ and $k = 9$. The points are labeled with their multiplicity. (b) The persistence diagram for $H(\textnormal{VR}(X))$. (c) The persistence diagram for $H(\textnormal{KNN}(X))$. (d) The persistence diagram for the PH of the density-weighted Vietoris--Rips complex defined in Appendix \ref{sec:weighted}.}
\end{figure}


\subsection{Application: Lorenz System}\label{sec:lorenz}
As a final demonstration of our method, we apply $\dvr$ to a point cloud generated from the Lorenz `63 dynamical system \cite{lorenz}. The equations of motion are
\begin{equation}\label{eq:lorenz}
\begin{cases}
    \frac{\mathrm{d}x}{\mathrm{d}t} = \sigma (y - x)\,, \\
    \frac{\mathrm{d}y}{\mathrm{d}t} = x (\rho - z)-y\,, \\
    \frac{\mathrm{d}z}{\mathrm{d}t} = xy - \beta z\,,
\end{cases}
\end{equation}
where $\sigma$, $\rho$, and $\beta$ are system parameters. We set $\sigma = 10$, $\rho = 28$, and $\beta = 8/3$, which are the values that Lorenz used in \cite{lorenz}. In this subsection, we study a point cloud that is sampled from a time-delay embedding of $x(t)$. Dynamical systems are often studied via such time-delay embeddings because the image of the time-delay embedding is diffeomorphic to the attractor manifold (under suitable conditions) by Takens' embedding theorem \cite{takens}. The point cloud that results from the time-delay embedding is an interesting application for $\dvr$ because it is a point cloud that Vietoris--Rips does not perform well on.

The Lorenz attractor has two visible holes that correspond to two equilibrium points (see Figure \ref{fig:lorenz}). These holes correspond to two ``topological regimes'' in the dynamical system, as defined by \cite{top_regimes}. Observe that the two holes are of different sizes and that the density is higher near the smaller hole. The difference in density is even more pronounced in the time-delay embedding, which we show in Figure \ref{fig:lorenzshadow}.

We construct a point cloud as follows. Our initial condition for Equation \ref{eq:lorenz} is $(x_0, y_0, z_0) = (1, 1, 1)$, and we solve the system from time $t = 0$ to time $t = 50$ using the SciPy ODE solver \cite{scipy}. This results in an approximate solution $(x(t), y(t), z(t))$. In Figure \ref{fig:lorenz}, we show the collection of points $\{(x(t_i), y(t_i), z(t_i))\}_{i = 0}^{1000}$, with time steps $t_i = .05i$. We define the point cloud $X$ to be the $2$-dimensional time-delay embedding of $x(t)$ with time lag $\tau = .05$ (see Figure \ref{fig:lorenzshadow}).

In Figure \ref{fig:lorenz_ph}, we compare the persistence diagrams for $H(\dvr(n, k, X))$ (with parameters $n = 2$ and $k = 10$) and $H(\textnormal{VR}(X))$. The persistence diagram for $H(\dvr(n, k, X))$ picks up on both equilibrium points, as it has two 1D homology classes with significantly longer lifetimes than the other homology classes. By contrast, the Vietoris--Rips persistence diagram picks up on only one of the equilibrium points, as it has only one 1D homology class with a significantly longer lifetime than the others.

\begin{figure*}
    \centering
    \subfloat[]{\includegraphics[width = .47\textwidth]{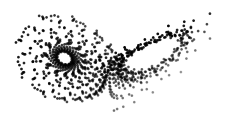}\label{fig:lorenz}}
    \hspace{5mm}
    \subfloat[]{\includegraphics[width = .47\textwidth]{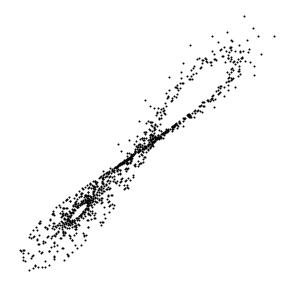}\label{fig:lorenzshadow}}
    \caption{(a) A point cloud that consists of $N = 1000$ points sampled from an approximate solution to the Lorenz 
    `63 system defined by Equation \ref{eq:lorenz}. (b) A $2$-dimensional time-delay embedding of $x(t)$.}
\end{figure*}

\begin{figure}
    \centering
    \subfloat[]{\includegraphics[width = .5\textwidth]{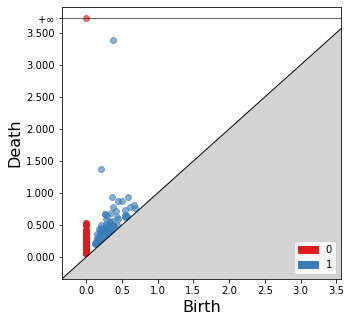}}
    \subfloat[]{\includegraphics[width = .5\textwidth]{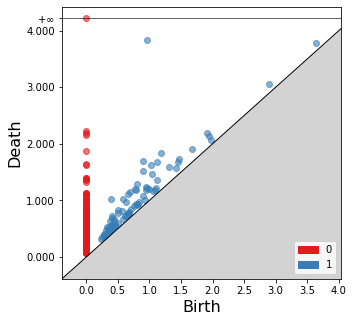}}
    \caption{Let $X$ be the point cloud in Figure \ref{fig:lorenzshadow}, which consists of $1000$ points sampled from a $2$-dimensional time-delay embedding of $x(t)$ in the Lorenz `63 dynamical system defined by Equation \ref{eq:lorenz}. (a) The persistence diagram for $H(\dvr(n, k, X))$, with parameters $n = 2$ and $k = 10$. (b) The persistence diagram for $H(\textnormal{VR}(X))$.}
    \label{fig:lorenz_ph}
\end{figure}


\section{Conclusions and Discussion}\label{sec:discussion}

In this paper we constructed a family of density-scaled filtered complexes, with a focus on the density-scaled \v{C}ech complex and the density-scaled Vietoris--Rips complex. The density-scaled filtered complexes improve our abilities to analyze point clouds with locally-varying density and to compare point clouds with differing densities. Given a Riemannian manifold $(M, g)$ and a smooth probability density function $f: M \to (0, \infty)$ from which a point cloud $X$ is sampled, we defined a conformally-equivalent density-scaled Riemannian manifold $(M, \g)$ such that $X$ is uniformly distributed in $(M, \g)$. This allowed us to define a density-scaled \v{C}ech complex $\dc$ and more generally to define a density-scaled version of any distance-based filtered complex, including a density-scaled Vietoris--Rips complex $\textnormal{DVR}$. We proved that the density-scaled \v{C}ech complex has better guarantees on topological consistency than the standard \v{C}ech complex, and we showed that the density-scaled complexes are invariant under conformal transformations (e.g., scaling), which brings topological data analysis closer to being a topological tool. By using kernel-density estimation and Riemannian-distance estimation techniques from \cite{isomap, Cisomap}, we implemented a filtered complex $\dvr$ that approximates $\textnormal{DVR}$. We compared $\dvr$ to the usual Vietoris--Rips complex and found in our experiments that $\dvr$ was better than Vietoris--Rips at providing information about the underlying homology of $M$.

Our definitions of the density-scaled complexes required the point clouds to be sampled from a Riemannian manifold with \emph{global} intrinsic dimension $n$. However, our implementation $\dvr$ immediately generalizes to metric spaces for which the intrinsic dimension varies locally. (A trivial example of such a metric space is the disjoint union of Riemannian manifolds with different dimensions.) One can estimate the local intrinsic dimension $n_x$ near a point $x$ using one of the methods of \cite{local_PCA, intrinsic_dim, conical_dimestimate, ball_dimestimate, doubling_dimestimate}, for example. In the density estimator $\hat{f}_N(y)$ defined by Equation \ref{eq:fest}, one replaces $n$ by $n_y$. For estimating Riemannian distance, one can construct the $k$-nearest neighbor graph $G_{kNN}(X)$ as usual, but with $n$ replaced by an average of $n_{x_i}$ and $n_{x_j}$ when defining the weight of the edge $(x_i, x_j)$. Thus one can compute $\dvr(X)$ for a point cloud $X$ that is sampled from a space whose intrinsic dimension varies. It would certainly be of interest to analyze the performance and theoretical guarantees of $\dvr$ on metric spaces of varying local dimension.

The implementation $\dvr$ can also be improved by improving the algorithm for Riemannian-distance estimation in $(M, \g)$. The construction of the $k$-nearest neighbor graph $G_{kNN}(X)$ is stable (assuming $X$ is in ``general position'', as in Definition \ref{def:general}), but it is much more sensitive to perturbations than one would like. Additionally, we note that sometimes all $k$-nearest neighbors of a point $x \in X$ are in the same direction\footnote{By ``the same direction'' we mean the following. Within the injectivity radius of $x$, the exponential map $\exp_x(v)$ is a diffeomorphism \cite{petersen}. Two neighbors $y_1$, $y_2$ are ``in the same direction'' if there is a $v \in T_xM$ within the injectivity radius such that $\exp_x(v) = y_1$ and $\exp_x(tv) = y_2$ for some $t > 0$. More generally, if $\exp_x(v_1) = y_1$ and $\exp_x(v_2) = y_2$, then the angle between $v_1$ and $v_2$ is a way of quantifying how close in direction they are.} relative to $x$. This leads to problems such as the following: if $M$ is a curve, then even two adjacent points $x_1$, $x_2 \in X$ on the curve may not be connected by an edge in $G_{kNN}(X)$ if the parameter $k$ is too small. (We observed this in Section \ref{sec:circles} when we tried setting $k = 5$.) A solution would be to estimate the tangent space at each point and to connect $x$ only to nearest neighbors that lie in different directions. Such a modification could also improve Riemannian-distance estimation in widely-used algorithms such as Isomap \cite{isomap}.

The density-scaled complexes we defined in this paper are conformally invariant, but it is desirable to construct filtered complexes that are invariant under a wider class of diffeomorphisms. Conformal invariance is the best that one can hope for in our setting because the density-scaled Riemannian manifold $(M, \g)$ is conformally equivalent to $(M, g)$. One idea for improving on the current definition is to consider the local covariance of the probability distribution at each point \cite{Rman_stats} and to modify the Riemannian metric in such a way that with respect to the new Riemannian metric, the local covariance matrix at each point is the identity matrix (with respect to a positively oriented orthonormal basis). This idea is akin to the usual normalization that data scientists often do in Euclidean space, and it is also reminiscent of the ellipsoid-thickenings of \cite{ellipsoids}.


\section*{Appendix}
\subsection{Other Filtrations}\label{sec:otherfilts}
\subsubsection{$k$-Nearest Neighbor Filtration}\label{sec:knn}
In the $k$-nearest neighbors filtration, each point is connected to its $k$-nearest neighbors at the $k$th filtration step.
\begin{definition}\label{def:knn}
Let $X$ be a point cloud. The $\textnormal{KNN}(X)$ complex is the filtered complex such that at filtration level $k$, the set of simplices in $\textnormal{KNN}(X)_k$ is
\begin{equation*}
    \Big\{ x_J \mid \norm{x_i - x_j} \leq \norm{x_i - x_{N^k_i}} \text{ or } \norm{x_i - x_j} \leq \norm{x_j - x_{N^k_j}} \text{for all } i, j \in J \text{ and all } J \subseteq \{1, \ldots, N\} \Big\}\,,
\end{equation*}
where $N^k_i$ is the index of the $k$th nearest neighbor of $x_i$ and $N^k_j$ is the index of the $k$th nearest neighbor of $x_j$.
\end{definition}

\subsubsection{Weighted Filtrations}\label{sec:weighted}
Our density-scaled filtered complexes are similar in spirit to the weighted \v{C}ech complex and the weighted Vietoris--Rips complex from \cite{weightedPH}. We recall the definitions here.

\begin{definition}
Let $X$ be a point cloud. Let $\mathcal{C}^1_+([0, \infty))$ denote the collection of differentiable bijective functions $\phi: [0, \infty) \to [0, \infty)$ with positive first derivative. A \emph{radius function} on $X$ is a function $r: X \to \mathcal{C}^1_+([0, \infty))$. We denote the image function $r(x) \in \mathcal{C}^1_+([0, \infty))$ by $r_x$.
\end{definition}
For example, if $\{s_x\}_{x \in X}$ is a set of positive real numbers, then $r_x(t) = ts_x$ is a radius function on $X$. This models the case in which the ball centered at $x \in X$ grows linearly over time $t$.
\begin{definition}
Let $X$ be a point cloud and let $r: X \to \mathcal{C}^1_+([0, \infty))$ be a radius function. The \emph{weighted \v{C}ech complex} at filtration level $t \geq 0$ is the nerve of $\{B(x, r_x(t))\}_{x \in X}$.
\end{definition}

\begin{definition}
Let $X$ be a point cloud and let $r: X \to \mathcal{C}^1_+([0, \infty))$ be a radius function. The set of simplices in the \emph{weighted Vietoris--Rips complex} at filtration level $t \geq 0$ is
\begin{equation*}
    \Big\{ x_J \mid \norm{x_i - x_j} \leq r_{x_i}(t) + r_{x_j}(t) \text{ and all } J \subseteq \{1, \ldots, N \}\Big\}.
\end{equation*}
\end{definition}

When $r_x(t)$ is the radius function defined in Equation \ref{eq:weighted_radius}, we refer to these as the \emph{density-weighted \v{C}ech complex} and the \emph{density-weighted Vietoris--Rips complex}, respectively.


\subsection{Riemannian Geometry Lemmas}\label{sec:lemmas}

\begin{lemma}\label{lem:volume_transform}
Let $(M_1, g_1)$ and $(M_2, g_2)$ be Riemannian manifolds with volume forms $dV_1$ and $dV_2$ respectively, and let $F: M_2 \to M_1$ be a diffeomorphism. If $F^*g_1 = \lambda g_2$ for some positive smooth function $\lambda: M_2 \to (0, \infty)$, then $F^*dV_1 = \sqrt{\lambda^n} dV_2$.
\end{lemma}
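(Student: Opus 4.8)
The plan is to check the asserted identity of $n$-forms pointwise. The cleanest route is to evaluate both sides on a $g_2$-orthonormal frame, so that the conformal factor simply rescales an orthonormal basis; I would also record the parallel computation in local coordinates via determinants, since that is the form in which the identity is most transparent.

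First I would fix $p \in M_2$ and choose a positively oriented basis $\{e_1,\dots,e_n\}$ of $T_pM_2$ that is orthonormal for $(g_2)_p$. The hypothesis $F^*g_1 = \lambda g_2$ unwinds to $(g_1)_{F(p)}(dF_p u, dF_p v) = \lambda(p)\,(g_2)_p(u,v)$ for all $u,v \in T_pM_2$, so $\{dF_p e_1,\dots,dF_p e_n\}$ is $(g_1)_{F(p)}$-orthogonal with each vector of squared length $\lambda(p)$; hence $\{\lambda(p)^{-1/2}dF_p e_i\}_{i=1}^n$ is $(g_1)_{F(p)}$-orthonormal. By the definition of pullback together with multilinearity,
\[
(F^*dV_1)_p(e_1,\dots,e_n) = (dV_1)_{F(p)}(dF_p e_1,\dots,dF_p e_n) = \lambda(p)^{n/2}\,(dV_1)_{F(p)}\bigl(\lambda(p)^{-1/2}dF_p e_1,\dots,\lambda(p)^{-1/2}dF_p e_n\bigr),
\]
and since $dV_1$ takes the value $1$ on a positively oriented orthonormal basis, the right-hand side equals $\pm\lambda(p)^{n/2}$, with sign $+$ precisely when $F$ preserves orientation. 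As $(dV_2)_p(e_1,\dots,e_n)=1$, comparing the two evaluations gives $(F^*dV_1)_p = \sqrt{\lambda(p)^n}\,(dV_2)_p$, and since $p$ was arbitrary, $F^*dV_1 = \sqrt{\lambda^n}\,dV_2$.

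Equivalently, in positively oriented charts near $p$ and $F(p)$ with Jacobian matrix $J = (\partial F^k/\partial x^i)$, one has $F^*dV_1 = (\sqrt{|g_1|}\circ F)\,(\det J)\,dx^1\wedge\cdots\wedge dx^n$ and $(F^*g_1)_{ij} = (J^\top\,(g_1\circ F)\,J)_{ij}$; taking determinants of $F^*g_1 = \lambda g_2$ yields $(\det J)^2(\det g_1\circ F) = \lambda^n\,\det g_2$, i.e. $|\det J|\,\sqrt{|g_1|\circ F} = \sqrt{\lambda^n}\,\sqrt{|g_2|}$, which substituted back into the expression for $F^*dV_1$ gives the claim. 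The only genuine subtlety in either argument is orientation bookkeeping: the statement as written presupposes that $F$ is orientation-preserving (equivalently, it should be read as an identity of Riemannian densities, in which case the sign question disappears entirely), so I would simply add a remark to that effect rather than treat it as a substantive obstacle — the lemma is otherwise a routine unwinding of the definitions of the pullback and of the coordinate expression $dV=\sqrt{|g|}\,dx^1\wedge\cdots\wedge dx^n$.
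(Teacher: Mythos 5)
Your proof is correct and follows essentially the same route as the paper: evaluate $F^*dV_1$ on a positively oriented $g_2$-orthonormal frame, observe that $\{\lambda^{-1/2}dF_p e_i\}$ is a $g_1$-orthonormal frame, and conclude by the characterization of the volume form as the unique $n$-form taking value $1$ on positively oriented orthonormal bases. Your remark about orientation is a fair observation that the paper leaves implicit (its proof asserts $\{\lambda^{-1/2}dF_y e_i\}$ is \emph{positively} oriented without justification, which requires $F$ to be orientation-preserving); the coordinate-determinant derivation you append is a correct alternative phrasing but not a different argument.
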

\begin{proof}
Let $\{e_1, \ldots, e_n\}$ be a positively oriented orthonormal basis for $T_y M_2$. By hypothesis,
\begin{equation*}
    g_1[F(y)]\Big(\frac{1}{\sqrt{\lambda}} dF_ye_i, \frac{1}{\sqrt{\lambda}} dF_y e_j\Big) = \frac{1}{\lambda}F^*g_1[y](e_i, e_j) = g_2[y](e_i, e_j) = \delta_{ij}.
\end{equation*}
Therefore, $\{\frac{1}{\sqrt{\lambda}} dF_ye_i\}$ is a positively oriented orthonormal basis for $T_{F(y)}M_1$, so
\begin{align*}
    \frac{1}{\sqrt{\lambda^\Mdim}}F^*dV_1[y](e_1, \ldots, e_n) &= \frac{1}{\sqrt{\lambda^\Mdim}} dV_1[F(y)](dF_ye_1, \ldots, dF_y e_n) \\
    &= dV_1[F(y)]\Big(\frac{1}{\sqrt{\lambda}}dF_ye_1, \ldots, \frac{1}{\sqrt{\lambda}}dF_ye_n\Big) \\
    &= 1.
\end{align*}
We must have $\frac{1}{\sqrt{\lambda^\Mdim}}F^*dV_1 = dV_2$ because $dV_2$ is the unique $n$-form that equals $1$ on every positively oriented orthonormal basis.
\end{proof}
\begin{lemma}\label{lem:F_induces_iso}
Let $(M_1, g_1)$ and $(M_2, g_2)$ be Riemannian manifolds, and let $f_1 : M_1 \to (0, \infty)$ be a smooth probability density function on $M_1$. If $F: (M_2, g_2) \to (M_1, g_1)$ is a conformal transformation and $f_2$ is the pullback of $f_1$ defined by Definition \ref{def:pdf_pullback}, then $F: (M_1, \g_1) \to (M_2, \g_2)$ is an isometry, where $\g_1$ and $\g_2$ are the density-scaled Riemannian metrics on $M_1$ and $M_2$, respectively.
\end{lemma}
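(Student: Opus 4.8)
The plan is to show directly that $F$ carries the density-scaled metric $\g_1$ on $M_1$ to the density-scaled metric $\g_2$ on $M_2$, i.e.\ that $F^*\g_1 = \g_2$; since $F$ is a diffeomorphism, this is exactly the statement that $F$ (equivalently $F^{-1}$) is an isometry between $(M_1, \g_1)$ and $(M_2, \g_2)$. Recall that $\g_i = \sqrt[n]{\alpha(N)^2 f_i^2}\,g_i$ on $M_i$, that the conformal hypothesis gives $F^*g_1 = \lambda g_2$ for a positive smooth $\lambda: M_2 \to (0,\infty)$, and that the scaling factor $\alpha(N)$ is the same on $M_1$ and $M_2$ since $X$ and $F^{-1}(X)$ have the same number of points.

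First I would expand $F^*\g_1$ using naturality of the pullback together with $F^*g_1 = \lambda g_2$:
\[
    F^*\g_1 = \sqrt[n]{\alpha(N)^2 (f_1 \circ F)^2}\,F^*g_1 = \sqrt[n]{\alpha(N)^2 (f_1 \circ F)^2}\,\lambda\,g_2 .
\]
Comparing this with $\g_2 = \sqrt[n]{\alpha(N)^2 f_2^2}\,g_2$, the identity $F^*\g_1 = \g_2$ is equivalent to the pointwise scalar equation $f_2 = (f_1 \circ F)\,\lambda^{n/2}$.

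The second step is to derive this scalar equation from the definition of the pullback density. By Definition \ref{def:pdf_pullback}, $f_2\,dV_2 = F^*(f_1\,dV_1) = (f_1\circ F)\,F^*dV_1$, and by Lemma \ref{lem:volume_transform} applied to $F^*g_1 = \lambda g_2$ we have $F^*dV_1 = \sqrt{\lambda^n}\,dV_2 = \lambda^{n/2}\,dV_2$. Substituting and cancelling the nowhere-vanishing volume form $dV_2$ yields $f_2 = (f_1\circ F)\,\lambda^{n/2}$, exactly as needed; feeding this back into the expression for $F^*\g_1$ gives $F^*\g_1 = \g_2$.

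I do not expect any real obstacle here beyond keeping the conformal exponents straight: the one substantive input is Lemma \ref{lem:volume_transform}, which translates the conformal factor on the metric into the matching factor on the volume form, and the constant $\alpha(N)$ drops out of the comparison since it appears identically in $\g_1$ and $\g_2$. (Note that Proposition \ref{prop:densitypullback} is not needed for this computation; only the defining relation for $f_2$ is used.)
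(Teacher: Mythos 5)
Your proof is correct and follows essentially the same route as the paper's: both expand $F^*\g_1$ via $F^*g_1 = \lambda g_2$ and reduce the claim to the identity $f_2 = (f_1 \circ F)\lambda^{n/2}$, which both obtain from Definition \ref{def:pdf_pullback} together with Lemma \ref{lem:volume_transform}. The only organizational difference is that you state the target scalar identity up front and then verify it, while the paper derives $f_2$ first and then substitutes; the substance is identical.
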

\begin{proof}
There is a positive smooth function $\lambda: M_2 \to (0, \infty)$ such that $F^*g_1 = \lambda g_2$.
By Lemma \ref{lem:volume_transform}, $F^*dV_1 = \sqrt{\lambda^\Mdim}dV_2$. Therefore,
\begin{equation*}
    F^*(fdV_1) = (f\circ F)\sqrt{\lambda^\Mdim}dV_2,
\end{equation*}
so the pullback of $f_1$ defined by Definition \ref{def:pdf_pullback} is $f_2 = (f \circ F)\sqrt{\lambda^\Mdim}$. The density-scaled Riemannian metric on $M_2$ is 
\begin{equation*}
    \g_2 = \sqrt[\Mdim]{\alpha(N)^2(f \circ F)^2} \lambda g_2.
\end{equation*}
Therefore,
\begin{equation*}
    F^*\g_1 = \sqrt[\Mdim]{\alpha(N)^2(f \circ F)^2} F^*g_1 = \sqrt[\Mdim]{\alpha(N)^2(f \circ F)^2} \lambda g_2 = \g_2.
\end{equation*}
\end{proof}

\begin{remark}
In fact, $F^* \g_1 = \g_2$ if and \emph{only if} $F$ is a conformal transformation because $\g_1$ and $\g_2$ are conformally equivalent to $g_1$ and $g_2$, respectively.
\end{remark}

\begin{acknowledgements}
I am very grateful to Jerry Luo, Nina Otter, and Mason Porter for comments on an earlier draft. I would also like to thank Mike Hill, Peter Petersen, Luis Scoccola, and Renata Turkes for helpful discussions.
\end{acknowledgements}

%


%
\bibliographystyle{spmpsci}      
\bibliography{references}   


\end{document}